\documentclass[12pt]{article}
\usepackage{amsmath, amssymb, amsthm, amsfonts}
\usepackage{algorithm, algpseudocode}
\usepackage{comment}
\usepackage{geometry}
\usepackage{hyperref}
\usepackage{bm}
\usepackage{enumitem}
\usepackage{booktabs}
\usepackage{graphicx}
\usepackage{caption}
\usepackage{subcaption}
\usepackage{pgfplots}
\pgfplotsset{compat=1.18}
\usepackage{tikz}
\usepackage{siunitx}
\usepackage{xcolor}
\usepackage{natbib}
\usepackage{float}
\usepackage{mathtools}
\usepackage{tabularx}
\usepackage{adjustbox}
\usepackage{array}
\usepackage{makecell}
\usepackage{rotating}

\newtheorem{theorem}{Theorem}
\newtheorem{lemma}[theorem]{Lemma}
\newtheorem{corollary}[theorem]{Corollary}
\newtheorem{definition}[theorem]{Definition}
\newtheorem{remark}[theorem]{Remark}

\newtheorem{example}[theorem]{Example}
\newtheorem{assumption}[theorem]{Assumption}

\newcommand{\R}{\mathbb{R}}
\newcommand{\N}{\mathbb{N}}

\newcommand{\norm}[1]{\left\| #1 \right\|}

\newcommand{\calO}{\mathcal{O}}

\definecolor{highlightblue}{RGB}{41,128,185}
\definecolor{highlightgreen}{RGB}{39,174,96}
\definecolor{highlightred}{RGB}{231,76,60}
\definecolor{highlightorange}{RGB}{230,126,34}
\definecolor{highlightpurple}{RGB}{155,89,182}

\title{\textbf{MPL-HMC: A Tunable Parameterized Leapfrog Framework for Robust Hamiltonian Monte Carlo}}
\author{Sourabh Bhattacharya \\ Indian Statistical Institute}
\date{}

\begin{document}

\maketitle

\begin{abstract}
This article introduces the Modified Parameterized Leapfrog Hamiltonian Monte Carlo (MPL-HMC) method, a novel extension of Hamiltonian Monte Carlo that addresses key limitations through tunable integration parameters. We provide a mathematical framework that generalizes the leapfrog integrator via parameters $\alpha(\delta t)$ and $\beta(\delta t)$, enabling controlled perturbations to Hamiltonian dynamics. Theoretical analysis demonstrates that MPL-HMC maintains approximate detailed balance while offering significant practical advantages. 

Extensive empirical evaluation across benchmark distributions and real-world case studies reveals systematic performance improvements. The damping variant (with parameter components $\alpha_2 = -0.1, \beta_2 = -0.05$) achieves up to a 14-fold increase in effective sample size for hierarchical models like Neal's funnel and 27\% better computational efficiency for pharmacokinetic models. The anti-damping variant ($\alpha_2 = 0.1, \beta_2 = 0.05$) demonstrates superior convergence in high-dimensional settings, achieving $\hat{R}=1.026$ for Bayesian neural networks compared to $\hat{R}=1.981$ for standard HMC. 

Most notably, we introduce aggressive MPL-HMC for extremely multimodal distributions, employing extreme parameter values ($\alpha_2=8.0\text{--}15.0$, $\beta_2=5.0\text{--}8.0$) combined with enhanced sampling mechanisms to achieve full mode exploration where standard methods fail. All MPL-HMC variants maintain computational efficiency identical to standard HMC while providing systematic control over damping, exploration, stability, and accuracy.

The article provides practitioners with a complete toolkit: rigorous mathematical foundations, detailed implementation specifications, parameter tuning strategies, and comprehensive performance comparisons. By offering tunable parameters that adapt to problem characteristics, MPL-HMC extends Hamiltonian Monte Carlo's applicability to previously challenging domains, bridging the gap between theoretical elegance and practical utility in modern statistical computing.
\end{abstract}

\noindent\textbf{Keywords:} {\it Bayesian inference; Hamiltonian Monte Carlo; Leapfrog algorithm; Multimodal sampling; Spatio-temporal statistics; Stiff problem.}

\tableofcontents

\newpage

\section{Introduction}

Hamiltonian Monte Carlo (HMC) has emerged as a widely adopted Markov chain Monte Carlo method for sampling from high-dimensional probability distributions, particularly in Bayesian statistics and machine learning. Its theoretical foundation stems from leveraging Hamiltonian dynamics to propose distant states, effectively reducing random walk behavior and enabling efficient exploration of complex parameter spaces \citep{neal2011mcmc, betancourt2017conceptual}. The standard implementation employs the Störmer-Verlet (leapfrog) integrator, which preserves two crucial geometric properties: exact symplecticity and time-reversibility \citep{leimkuhler2004simulating}. This geometric foundation provides HMC with superior mixing properties compared to traditional Metropolis-Hastings algorithms, particularly in high-dimensional settings where gradient information guides efficient exploration.

Despite its theoretical advantages, standard HMC faces significant limitations in challenging real-world applications. Sensitivity to stiffness in regions of high curvature \citep{betancourt2013general}, limited tunability beyond step size and trajectory length, fixed second-order accuracy that may not be optimal for all problems, and fundamental difficulties with multimodal distributions \citep{lan2014wormhole} represent persistent challenges that restrict HMC's applicability. These limitations become particularly acute in modern statistical applications such as hierarchical Bayesian models, deep neural networks, and complex physical systems where target distributions exhibit intricate geometry and pathological characteristics.

This work introduces the Modified Parameterized Leapfrog Hamiltonian Monte Carlo (MPL-HMC) method, a framework that addresses these limitations through the introduction of tunable integration parameters $\alpha(\delta t)$ and $\beta(\delta t)$. Drawing inspiration from advances in spatio-temporal Hamiltonian modeling \citep{mazumder2026new}, we develop a parameterized leapfrog scheme that admits controlled asymptotic expansions in the time step $\delta t$, providing additional degrees of freedom for algorithm optimization. The key insight originates from \citet{mazumder2026new}, who introduced a parameterized stochastic leapfrog scheme for Hamiltonian systems with Gaussian process potentials. While developed for spatio-temporal applications, this parameterized structure offers new possibilities for traditional Hamiltonian Monte Carlo by enabling controlled perturbations to the underlying dynamics. Indeed, \cite{bhattacharya2026} showed that the new parameterized form actually solves, in appropriate mean-squared sense, the original Hamiltonian equations driven by Gaussian rocess potential, which motivates opening up new territories for construction of new Hamiltonian Monte Carlo methods.
Thus, even though $iid$ sampling from intractable distributions (\citealt{Bhatta21a, Bhatta21b, Bhatta21c, Bhatta22}; see also \cite{Durba26}) is obviously the key interest of 
this author, we prepare this new HMC theory and method as a savoury side dish, particuarly, for HMC aficionados.

The MPL-HMC framework provides systematic control over algorithm behavior through mathematically principled parameterization. The parameters $\alpha(\delta t)$ and $\beta(\delta t)$, with asymptotic expansions $\alpha(\delta t) = 1 + \alpha_1 \delta t + \alpha_2 \delta t^2 + \cdots$ and $\beta(\delta t) = 1 + \beta_1 \delta t + \beta_2 \delta t^2 + \cdots$, enable precise tuning of damping/anti-damping behavior, position scaling, numerical stability, and exploration capability. For practical implementation, we set $\alpha_1 = \beta_1 = 0$ throughout the paper, focusing on the second-order parameters $\alpha_2$ and $\beta_2$ as the primary tuning knobs. This parameterization creates a continuum of algorithms ranging from conservative variants (recovering standard HMC when $\alpha_2 = \beta_2 = 0$) to aggressive exploration schemes capable of tackling previously intractable multimodal problems.

Our contributions are both theoretical and practical. We provide mathematical derivation from first principles, analysis of stability and convergence properties, detailed implementation specifications, and extensive comparative evaluation against established HMC methods. The theoretical analysis demonstrates that while sacrificing exact symplecticity and reversibility, MPL-HMC maintains these properties approximately with error proportional to $\delta t^2$, which is acceptable for most practical applications while enabling valuable tuning flexibility. Importantly, approximate symplecticity is not required for the theoretical validity of our procedure; we develop the theory in the Appendix only for completeness.

Empirical results reveal systematic performance improvements across diverse problem types. Through comprehensive benchmarking on five distributions with distinct characteristics (isotropic Gaussian, anisotropic Gaussian, banana-shaped, mixture of Gaussians, and Neal's funnel), we demonstrate that MPL-HMC variants maintain computational efficiency identical to standard HMC while offering targeted improvements. The damping variant ($\alpha_2 = -0.1, \beta_2 = -0.05$) achieves a 14-fold improvement in effective sample size for Neal's funnel compared to standard HMC, demonstrating exceptional performance on stiff hierarchical problems. The anti-damping variant ($\alpha_2 = 0.1, \beta_2 = 0.05$) shows superior convergence ($\hat{R}=1.026$) for Bayesian neural networks compared to standard HMC ($\hat{R}=1.981$). Case studies on Bayesian neural networks and pharmacokinetic modeling further validate these findings, with the damping variant achieving 27\% better computational efficiency for pharmacokinetic models.

Most significantly, we introduce aggressive MPL-HMC for extremely challenging multimodal distributions. By employing extreme parameter values ($\alpha_2 = 8.0\text{--}15.0$, $\beta_2 = 5.0\text{--}8.0$) combined with enhanced sampling mechanisms including explicit mode-hopping proposals, temperature fluctuations, and momentum injection, this approach achieves full mode exploration where standard HMC fails completely. This represents a methodological advance in Hamiltonian Monte Carlo, extending its applicability to problems with widely separated modes that have traditionally required specialized techniques like parallel tempering or wormhole transitions.

\subsection{Limitations of HMC}

Despite its theoretical elegance and practical success, standard HMC faces several limitations in challenging applications: sensitivity to stiffness in regions of high curvature \citep{betancourt2013general}, limited tunability beyond step size and trajectory length, fixed second-order accuracy that may not be optimal for all problems, and challenges with multimodal distributions \citep{lan2014wormhole}. The issues are summarized in Table~\ref{tab:hmc_limitations}.

\begin{table}[htbp]
\centering
\caption{Real-world limitations of standard HMC that MPL-HMC addresses.}
\label{tab:hmc_limitations}
\begin{adjustbox}{width=\textwidth}
\begin{tabular}{p{0.3\textwidth}p{0.6\textwidth}}
\toprule
\textbf{Limitation} & \textbf{Situations} \\
\midrule
\textbf{Stiffness Sensitivity} & In Bayesian neural networks, some parameters have much larger gradients than others, causing numerical instability \citep{izmailov2018scalable}. \\
\textbf{Fixed Accuracy} & The leapfrog integrator always has second-order accuracy, which may not be optimal for all problems \citep{leimkuhler2004simulating}. \\
\textbf{Limited Tunability} & Only step size and trajectory length can be tuned - no control over damping or exploration behavior. \\
\textbf{Multimodal Challenges} & For distributions with multiple separated modes (like mixture models), HMC can get stuck in one mode \citep{lan2014wormhole}. \\
\bottomrule
\end{tabular}
\end{adjustbox}
\end{table}

\subsection{Novel contribution: the MPL-HMC framework}

This work introduces the Modified Parameterized Leapfrog HMC (MPL-HMC) method, which generalizes the leapfrog integrator through the introduction of two tunable parameters $\alpha(\delta t)$ and $\beta(\delta t)$. Drawing inspiration from spatio-temporal Hamiltonian models \citep{mazumder2026new}, these parameters admit controlled asymptotic expansions in the time step $\delta t$, providing additional degrees of freedom for algorithm optimization, controlled perturbation of Hamiltonian dynamics, improved stability through parameter tuning, flexible accuracy-order matching, and maintained geometric structure up to specified order.

The MPL-HMC framework introduces two tunable parameters, $\alpha(\delta t)$ and $\beta(\delta t)$, that modify the leapfrog integration scheme according to:
\[
\begin{cases}
\alpha(\delta t) = 1 + \alpha_1 \delta t + \alpha_2 \delta t^2 + \cdots \\
\beta(\delta t) = 1 + \beta_1 \delta t + \beta_2 \delta t^2 + \cdots
\end{cases}
\]
where we set $\alpha_1 = \beta_1 = 0$ throughout, making $\alpha_2$ and $\beta_2$ the primary tuning parameters. For practical implementation, we use $\alpha = 1 + \alpha_2 \delta t^2$ and $\beta = 1 + \beta_2 \delta t^2$.

These parameters control damping or anti-damping behavior (how quickly kinetic energy dissipates or increases), position scaling (how positions are transformed during integration), numerical stability for stiff problems, and exploration capability for escaping local modes. As demonstrated in the comparative analysis and case studies, these allow MPL-HMC to achieve superior performance in challenging sampling scenarios: it attains substantial improvements in effective sample size for stiff problems like Neal's funnel, demonstrates better convergence for high-dimensional Bayesian neural networks, and maintains robust stability in ill-conditioned or heavy-tailed distributions—all while preserving the same computational cost per gradient evaluation as standard HMC. Consequently, MPL-HMC extends the applicability of Hamiltonian Monte Carlo to problems where the original method's rigid geometric structure and limited tunability have traditionally led to inefficiency or failure.

The remainder of this article is organized as follows. Section~\ref{sec:foundations} establishes the mathematical foundation and problem setup. Section~\ref{sec:algorithm} presents the MPL-HMC algorithm and its key properties. Section~\ref{sec:theory} provides complete theoretical analysis including stability, global error, volume preservation, and detailed balance. Section~\ref{sec:parameters_implementation} discusses parameter selection strategies, implementation details, and experimental setup. Section~\ref{sec:comparison} provides extensive comparative analysis against standard HMC, while Sections~\ref{sec:casestudy_bnn} and \ref{sec:casestudy_pharma} present case studies. Section~\ref{sec:aggressive_mpl} presents the aggressive parameterization framework for multimodal distributions, which we refer to as Aggressive MPL-HMC. The summary of comparisons and recommendations are presented in Section \ref{sec:recommendations}. Section~\ref{sec:conclusion} contains concluding remarks and future directions. Appendix~\ref{app:proofs} contains detailed proofs of theoretical results, Appendix~\ref{app:symplecticity} contains background of symplecticity, results and proofs of approximate symplecticity of MPL-HMC and Aggressive MPL-HMC, and Appendix~\ref{app:code} contains implementation code. It is important to remark that even approximate symplecticity is not required for MPL-HMC and Aggressive MPL-HMC -- we provide Appendix~\ref{app:symplecticity} only for the sake of completeness.

\section{Mathematical foundation}
\label{sec:foundations}

\subsection{Problem setup}

\begin{definition}[Target distribution]
Let $\pi: \R^{d} \to \R^{+}$ be a probability density function of the form:
\[
\pi(q) \propto \exp(-U(q))
\]
where $U: \R^{d} \to \R$ is the {potential energy function}, assumed sufficiently smooth.
\end{definition}

\subsection{Probability distributions and Bayesian inference}

\subsubsection{Target distribution in Bayesian context}

In Bayesian statistics, the objective is to sample from the posterior distribution:
\[
\pi(\theta \mid \text{data}) \propto \mathcal{L}(\text{data} \mid \theta) \cdot p(\theta)
\]
where $\theta$ represents parameters for estimation, $\mathcal{L}(\text{data} \mid \theta)$ denotes the likelihood function, and $p(\theta)$ signifies the prior distribution.

\begin{example}[Simple Bayesian model]
Consider data $y_1, \ldots, y_n \sim \mathcal{N}(\mu, \sigma^2)$ with known $\sigma^2 = 1$. For estimating $\mu$ with prior $\mu \sim \mathcal{N}(0, 1)$, the posterior distribution becomes:
\[
\pi(\mu \mid y) \propto \exp\left(-\frac{1}{2}\sum_{i=1}^n (y_i - \mu)^2\right) \cdot \exp\left(-\frac{1}{2}\mu^2\right)
\]
This can be expressed as $\pi(\mu) \propto \exp(-U(\mu))$ where:
\[
U(\mu) = \frac{1}{2}\sum_{i=1}^n (y_i - \mu)^2 + \frac{1}{2}\mu^2
\]
which serves as the potential energy function for HMC \citep{gelman2013bayesian}.
\end{example}

\subsubsection{Hamiltonian system}

\begin{definition}[Hamiltonian system]
Define the Hamiltonian $H: \R^{2d} \to \R$:
\[
H(q,p) = U(q) + \frac{1}{2} p^{T} M^{-1} p
\]
where $q \in \R^{d}$ represents position variables (parameters of interest), $p \in \R^{d}$ denotes momentum variables (auxiliary variables), and $M \in \R^{d \times d}$ is a symmetric positive definite mass matrix.
\end{definition}

\begin{assumption}[Regularity of potential]
The potential $U$ belongs to $C^{4}(\R^{d})$. Consequently, $\nabla U: \R^{d} \to \R^{d}$ is three times continuously differentiable, the Hessian $D^{2}U(q)$ exists everywhere and is twice continuously differentiable in $q$, and the third derivative $D^{3}U(q)$ exists everywhere and is continuously differentiable in $q$.
\end{assumption}

\begin{assumption}[Bounded Hessian]
There exists a constant $L_{U} \geq 0$ such that:
\[
\norm{D^{2}U(q)}_{\mathrm{op}} \leq L_{U} \quad \text{for all } q \in \R^{d}
\]
where $\norm{\cdot}_{\mathrm{op}}$ denotes the operator norm induced by the Euclidean vector norm.
\end{assumption}

\begin{assumption}[Bounded third derivative]
There exists a constant $L_3 \geq 0$ such that:
\[
\norm{D^{3}U(q)}_{\mathrm{op}} \leq L_3 \quad \text{for all } q \in \R^{d}
\]
where the operator norm for the third derivative is defined as:
\[
\norm{D^{3}U(q)}_{\mathrm{op}} = \sup_{\norm{v_1}=\norm{v_2}=\norm{v_3}=1} \left| D^{3}U(q)(v_1, v_2, v_3) \right|.
\]
\end{assumption}

\begin{assumption}[Mass matrix properties]
The mass matrix $M$ is symmetric positive definite. Consequently, its inverse $M^{-1}$ exists and is also symmetric positive definite, and the operator norm of $M^{-1}$ is finite: $C_{M} := \norm{M^{-1}}_{\mathrm{op}} < \infty$.
\end{assumption}

\begin{remark}[Practical implications]
While $C^4$ smoothness and bounded derivative assumptions are standard in theoretical analysis, many practical Bayesian models involve non-smooth potentials (e.g., due to likelihood functions with discontinuities). In such cases, the theoretical guarantees may not hold rigorously, but numerical evidence suggests that HMC methods often remain effective with appropriate step size tuning \citep{betancourt2017conceptual}.
\end{remark}

\subsection{Exact Hamiltonian dynamics}

The continuous-time Hamiltonian dynamics are governed by:
\begin{align}
\frac{dq}{dt} &= \nabla_{p} H = M^{-1}p, \label{eq:hamiltonian_q} \\
\frac{dp}{dt} &= -\nabla_{q} H = -\nabla U(q). \label{eq:hamiltonian_p}
\end{align}
These equations preserve both the Hamiltonian ($\frac{dH}{dt} = 0$) and the phase space volume according to Liouville's theorem \citep{arnold1989mathematical}.

\subsection{Standard leapfrog integrator}

Since exact solutions to Hamilton's equations are generally unavailable for complex $U(q)$, numerical integration becomes necessary. The leapfrog (Störmer-Verlet) scheme \citep{leimkuhler2004simulating} provides the foundation:

\begin{algorithm}
\caption{Standard leapfrog integrator}
\label{alg:standard_leapfrog}
\begin{algorithmic}[1]
\State \textbf{Input:} Initial $(q_0, p_0)$, step size $\delta t$, steps $L$
\For{$i = 0$ to $L-1$}
    \State $p_{i+1/2} = p_i - \frac{\delta t}{2} \nabla U(q_i)$ \Comment{Half momentum update}
    \State $q_{i+1} = q_i + \delta t M^{-1} p_{i+1/2}$ \Comment{Full position update}
    \State $p_{i+1} = p_{i+1/2} - \frac{\delta t}{2} \nabla U(q_{i+1})$ \Comment{Half momentum update}
\EndFor
\State \textbf{Return:} $(q_L, p_L)$
\end{algorithmic}
\end{algorithm}

\subsection{Modified parameterized leapfrog scheme}

\begin{definition}[Modified parameterized leapfrog scheme]
For integers $n \geq 0$, define the numerical scheme:
\begin{align}
q_{n+1} &= \beta(\delta t) q_{n} + \delta t \, M^{-1} \left( \alpha(\delta t) p_{n} - \frac{\delta t}{2} \nabla U(q_{n}) \right), \label{eq:mpl_q} \\
p_{n+1} &= \alpha^{2}(\delta t) p_{n} - \frac{\delta t}{2} \left( \alpha(\delta t) \nabla U(q_{n}) + \nabla U(q_{n+1}) \right). \label{eq:mpl_p}
\end{align}
where $\delta t > 0$ represents the time step, and $\alpha(\delta t), \beta(\delta t) \in \R$ are scalar parameters.
\end{definition}

\begin{assumption}[Parameter asymptotics]
\label{ass:parameter_asymptotics}
The parameters admit asymptotic expansions in powers of $\delta t$:
\begin{align}
\alpha(\delta t) &= 1 + \alpha_{1} \delta t + \alpha_{2} \delta t^{2} + \calO(\delta t^{3}), \label{eq:alpha_expansion} \\
\beta(\delta t) &= 1 + \beta_{1} \delta t + \beta_{2} \delta t^{2} + \calO(\delta t^{3}). \label{eq:beta_expansion}
\end{align}
Here $\alpha_{1}, \alpha_{2}, \beta_{1}, \beta_{2} \in \R$ are fixed constants independent of $\delta t$.
We shall set $\alpha_1=\beta_1=0$ for practical implementations, and also consider this to be the default setup for the theory.
\end{assumption}

\subsection{Special cases}

Table~\ref{tab:special_cases} presents important special cases of the MPL-HMC framework and their corresponding behaviors.

\begin{table}[htbp]
\centering
\caption{Special cases of MPL-HMC and their behaviors.}
\label{tab:special_cases}
\begin{adjustbox}{width=\textwidth}
\begin{tabular}{p{0.25\textwidth}p{0.3\textwidth}p{0.35\textwidth}}
\toprule
\textbf{Parameter Choice} & \textbf{Name} & \textbf{Behavior} \\
\midrule
$\alpha_2 = \beta_2 = 0$ & Standard HMC & Exact symplectic, time-reversible \citep{neal2011mcmc} \\
$\alpha_2 < 0, \beta_2 < 0$ & Damping MPL-HMC & Energy dissipates, good for stiff problems \\
$\alpha_2 > 0, \beta_2 > 0$ & Anti-damping MPL-HMC & Energy increases, explores multiple modes \\
$\alpha_2 = 0, \beta_2 < 0$ & Position-contraction & Pulls samples toward origin, good for heavy tails \\
$\alpha_2 < 0, \beta_2 = 0$ & Momentum-damping & Reduces kinetic energy, stabilizes integration \\
\bottomrule
\end{tabular}
\end{adjustbox}
\end{table}

\section{MPL-HMC algorithm}
\label{sec:algorithm}

\subsection{Complete algorithm specification}

Algorithm~\ref{alg:mpl-hmc} provides the complete specification of the modified parameterized leapfrog Hamiltonian Monte Carlo method.

\begin{algorithm}
\caption{Modified parameterized leapfrog Hamiltonian Monte Carlo (MPL-HMC)}
\label{alg:mpl-hmc}
\begin{algorithmic}[1]
\Require Target distribution $\pi(q) \propto \exp(-U(q))$, initial state $q_0 \in \R^d$
\Require Mass matrix $M \in \R^{d \times d}$ (symmetric positive definite)
\Require Step size $\delta t > 0$, trajectory length $L \in \N$ (number of MPL steps)
\Require Parameters $\alpha_2, \beta_2 \in \R$ (tuning parameters)
\Require Number of samples $N \in \N$, burn-in period $B \in \N$
\Ensure Samples $\{ q^{(1)}, q^{(2)}, \ldots, q^{(N)} \} \sim \pi(q)$ approximately
\State Initialize: $q \gets q_0$
\State Precompute: $M^{-1}$ (or factorization for efficient application)
\State $\alpha \gets 1 + \alpha_2 \delta t^2$, $\beta \gets 1 + \beta_2 \delta t^2$ \Comment{Note: For small $\delta t$, this approximates the asymptotic expansions to $\calO(\delta t^3)$}
\For{$k = 1$ to $N+B$}
    \State \textbf{Sample momentum:} $p \sim \mathcal{N}(0, M)$
    \State \textbf{Compute initial Hamiltonian:} $H_0 \gets U(q) + \frac{1}{2} p^{T} M^{-1} p$
    \State \textbf{Initialize MPL trajectory:} $(\tilde{q}, \tilde{p}) \gets (q, p)$
    \For{$i = 1$ to $L$}
        \State \textbf{Gradient at current position:} $\nabla U_{\text{curr}} \gets \nabla U(\tilde{q})$
        \State \textbf{Position update:} $q_{\text{new}} \gets \beta \cdot \tilde{q} + \delta t \cdot M^{-1} \left( \alpha \cdot \tilde{p} - \frac{\delta t}{2} \nabla U_{\text{curr}} \right)$ \Comment{See equation (\ref{eq:mpl_q})}
        \State \textbf{Gradient at new position:} $\nabla U_{\text{new}} \gets \nabla U(q_{\text{new}})$
        \State \textbf{Momentum update:} $p_{\text{new}} \gets \alpha^2 \cdot \tilde{p} - \frac{\delta t}{2} \left( \alpha \cdot \nabla U_{\text{curr}} + \nabla U_{\text{new}} \right)$ \Comment{See equation (\ref{eq:mpl_p})}
        \State \textbf{Update for next step:} $(\tilde{q}, \tilde{p}) \gets (q_{\text{new}}, p_{\text{new}})$
    \EndFor
    \State \textbf{Negate momentum for reversibility:} $\tilde{p} \gets -\tilde{p}$
    \State \textbf{Compute proposed Hamiltonian:} $H^* \gets U(\tilde{q}) + \frac{1}{2} \tilde{p}^{T} M^{-1} \tilde{p}$
    \State \textbf{Acceptance probability:} $\rho \gets \min \left( 1, \exp(H_0 - H^*) \right)$
    \State \textbf{Sample uniform:} $u \sim \text{Uniform}(0,1)$
    \If{$u < \rho$}
        \State \textbf{Accept:} $q \gets \tilde{q}$, $\text{accept} \gets \text{true}$
    \Else
        \State \textbf{Reject:} keep current $q$, $\text{accept} \gets \text{false}$
    \EndIf
    \If{$k > B$}
        \State \textbf{Store sample:} $q^{(k-B)} \gets q$
    \EndIf
    \State \textbf{(Optional) Adaptive parameter updates}
\EndFor
\end{algorithmic}
\end{algorithm}

\subsection{Algorithm components}

\subsubsection{Momentum sampling}
Momentum sampling follows $p \sim \mathcal{N}(0, M)$. In practical implementation, this is typically computed as $p = \sqrt{M} z$ where $z \sim \mathcal{N}(0, I_d)$ and $\sqrt{M}$ represents the Cholesky factor or symmetric square root.

\subsubsection{Hamiltonian evaluation}
The Hamiltonian evaluation employs the standard form $H(q,p) = U(q) + \frac{1}{2} p^{T} M^{-1} p$.

\subsubsection{Parameter computation}
For small $\delta t$, the parameter computations simplify to $\alpha = 1 + \alpha_2 \delta t^2$ and $\beta = 1 + \beta_2 \delta t^2$. For larger $\delta t$ values, alternative parameterizations maintaining the asymptotic properties may be employed.

\subsubsection{Metropolis acceptance}
As we shall show in Section \ref{sec:theory}, the acceptance probability ensures approximate detailed balance with respect to the canonical distribution $\exp(-H(q,p))$, 
given that the numerical integration scheme approximately preserves volume and is approximately reversible.

\section{Theoretical analysis}
\label{sec:theory}

\subsection{Modified ODE for MPL-HMC}

\begin{theorem}[Modified ODE for MPL scheme]
\label{thm:modified_ode}
Under Assumptions 3-6, the modified parameterized leapfrog scheme (\ref{eq:mpl_q})-(\ref{eq:mpl_p}) follows, up to $\calO(\delta t^{2})$, the modified ordinary differential equation:
\begin{align}
\frac{dY}{dt} &= \left[ \beta_{1} Y + M^{-1} X \right] + \delta t \left[ \left( \beta_{2} - \frac{\beta_{1}^{2}}{2} \right) Y - \frac{\beta_{1}}{2} M^{-1} X \right] + \calO(\delta t^{2}), \label{eq:modified_ode_q} \\
\frac{dX}{dt} &= \left[ 2\alpha_{1} X - \nabla U(Y) \right] + \delta t \left[ (2\alpha_{2} - \alpha_{1}^{2}) X + \frac{\alpha_{1}}{2} \nabla U(Y) \right] + \calO(\delta t^{2}). \label{eq:modified_ode_p}
\end{align}
\end{theorem}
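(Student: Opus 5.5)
The plan is to use standard backward error analysis. I write the one-step map of (\ref{eq:mpl_q})--(\ref{eq:mpl_p}) as $z_{n+1}=\Phi_{\delta t}(z_n)$, with $z=(Y,X)=(q,p)$. I then look for a modified vector field $f_{\delta t}=f_0+\delta t\, f_1+\calO(\delta t^2)$ whose exact time-$\delta t$ flow agrees with $\Phi_{\delta t}$ up to $\calO(\delta t^3)$ per step. Taylor expanding the exact flow of $\dot z=f_{\delta t}(z)$ gives
\[
\varphi_{\delta t}(z)=z+\delta t\, f_0(z)+\delta t^2\Big(f_1(z)+\tfrac12 f_0'(z)f_0(z)\Big)+\calO(\delta t^3).
\]
So if the numerical increment is $\Phi_{\delta t}(z)-z=\delta t\, d_1(z)+\delta t^2 d_2(z)+\calO(\delta t^3)$, matching orders forces $f_0=d_1$ and $f_1=d_2-\tfrac12 f_0' f_0$. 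The $\calO(\delta t^3)$ remainders are controlled uniformly using the regularity of $U$ and the bounded Hessian and third derivative, together with $C_M<\infty$ and Assumption~\ref{ass:parameter_asymptotics}.

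The first step is to expand the scheme. From (\ref{eq:mpl_q}) and the expansions of $\alpha$ and $\beta$,
\[
q_{n+1}-q_n=\delta t\,[\beta_1 q+M^{-1}p]+\delta t^2\,[\beta_2 q+\alpha_1 M^{-1}p-\tfrac12 M^{-1}\nabla U(q)]+\calO(\delta t^3).
\]
For the momentum I use $\alpha^2-1=2\alpha_1\delta t+(\alpha_1^2+2\alpha_2)\delta t^2+\calO(\delta t^3)$. I also Taylor expand $\nabla U(q_{n+1})=\nabla U(q)+\delta t\, D^2U(q)(\beta_1 q+M^{-1}p)+\calO(\delta t^2)$, where the remainder is bounded using $L_U$ and $L_3$. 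This gives
\[
p_{n+1}-p_n=\delta t\,[2\alpha_1 p-\nabla U(q)]+\delta t^2\,\Big[(\alpha_1^2+2\alpha_2)p-\tfrac{\alpha_1}{2}\nabla U(q)-\tfrac12 D^2U(q)(\beta_1 q+M^{-1}p)\Big]+\calO(\delta t^3).
\]
Reading off $d_1$ yields $f_0=(\beta_1 Y+M^{-1}X,\;2\alpha_1 X-\nabla U(Y))$, which is the leading bracket in (\ref{eq:modified_ode_q})--(\ref{eq:modified_ode_p}).

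The second step computes $f_0'f_0$ componentwise. In the $Y$-component it equals $\beta_1(\beta_1Y+M^{-1}X)+M^{-1}(2\alpha_1X-\nabla U)$. In the $X$-component it equals $-D^2U(Y)(\beta_1Y+M^{-1}X)+2\alpha_1(2\alpha_1X-\nabla U)$. Subtracting half of these from $d_2$ should give the stated corrections:
\begin{itemize}
\item In the $Y$-component, the $\pm\tfrac12 M^{-1}\nabla U$ and $\pm\alpha_1M^{-1}X$ terms cancel, leaving $(\beta_2-\beta_1^2/2)Y-\tfrac{\beta_1}{2}M^{-1}X$.
\item In the $X$-component, the Hessian terms cancel exactly. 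What remains is $(2\alpha_2-\alpha_1^2)X+\tfrac{\alpha_1}{2}\nabla U(Y)$.
\end{itemize}

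I expect the main obstacle to be bookkeeping rather than conceptual. The key points are:
\begin{itemize}
\item Making sure the Hessian contribution $-\tfrac12 D^2U(\beta_1 q+M^{-1}p)$, which comes from evaluating $\nabla U$ at the new point, exactly cancels its counterpart in $\tfrac12 f_0'f_0$. This cancellation is what makes the modified field free of $D^2U$ at this order.
\item Stating precisely what "follows up to $\calO(\delta t^2)$" means. I would phrase it as: the local error between $\Phi_{\delta t}$ and the flow of the truncated modified field is $\calO(\delta t^3)$ uniformly on compact sets, using the bounded-derivative assumptions for the Taylor remainders.
\end{itemize}
Under the default $\alpha_1=\beta_1=0$, the result reduces to $\dot Y=M^{-1}X+\delta t\,\beta_2Y$ and $\dot X=-\nabla U(Y)+2\delta t\,\alpha_2X$, which serves as a sanity check.
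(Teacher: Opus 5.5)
Your proposal is correct and follows the paper's proof essentially step for step. You expand the one-step map to second order, including $\alpha^2$ and the Taylor expansion of $\nabla U(q_{n+1})$, then match it against the second-order expansion of the modified flow and solve $f_0=d_1$, $f_1=d_2-\tfrac12 f_0'f_0$; the Hessian terms cancel in the momentum component exactly as you describe. Your remark that the truncated modified flow reproduces the scheme with $\calO(\delta t^3)$ local error on compact sets is a useful clarification of "follows up to $\calO(\delta t^2)$," which the paper leaves implicit.
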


\begin{corollary}[Hamiltonian-preserving MPL scheme]
\label{cor:hamiltonian_preserving}
If $\alpha_1 = 0$ and $\beta_1 = 0$, then the modified ODE simplifies to:
\begin{align}
\frac{dY}{dt} &= M^{-1} X + \delta t \beta_2 Y + \calO(\delta t^2), \label{eq:hamiltonian_ode_q} \\
\frac{dX}{dt} &= -\nabla U(Y) + \delta t (2\alpha_2 X) + \calO(\delta t^2). \label{eq:hamiltonian_ode_p}
\end{align}
At leading order ($\delta t = 0$), this recovers the standard Hamiltonian system (\ref{eq:hamiltonian_q})-(\ref{eq:hamiltonian_p}).
\end{corollary}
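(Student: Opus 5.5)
The corollary can be obtained by direct substitution into Theorem~\ref{thm:modified_ode}, which is stated earlier and which I take as given. I set $\alpha_1=\beta_1=0$ in (\ref{eq:modified_ode_q})--(\ref{eq:modified_ode_p}) and simplify term by term.

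In the position equation (\ref{eq:modified_ode_q}), the leading bracket $\beta_1 Y + M^{-1}X$ reduces to $M^{-1}X$. In the $\calO(\delta t)$ bracket, $\beta_2 - \beta_1^2/2$ becomes $\beta_2$ and the term $-\frac{\beta_1}{2}M^{-1}X$ vanishes. This leaves $M^{-1}X + \delta t\,\beta_2 Y$, which is (\ref{eq:hamiltonian_ode_q}).

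In the momentum equation (\ref{eq:modified_ode_p}), the leading bracket $2\alpha_1 X - \nabla U(Y)$ reduces to $-\nabla U(Y)$. In the first-order bracket, $2\alpha_2-\alpha_1^2$ becomes $2\alpha_2$ and the term $\frac{\alpha_1}{2}\nabla U(Y)$ disappears. This leaves $-\nabla U(Y) + \delta t(2\alpha_2 X)$, which is (\ref{eq:hamiltonian_ode_p}).

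The remainders need a brief justification that they are still $\calO(\delta t^2)$. In Theorem~\ref{thm:modified_ode} the implicit constants depend continuously (polynomially) on $\alpha_1,\alpha_2,\beta_1,\beta_2$, on $L_U$, $L_3$ and $C_M$, and on the $\calO(\delta t^3)$ tails in Assumption~\ref{ass:parameter_asymptotics}. Specializing the parameters to fixed values therefore keeps them bounded.

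As a cross-check I would redo the expansion directly, since with $\alpha_1=\beta_1=0$ it is short. We have $\alpha=1+\alpha_2\delta t^2+\calO(\delta t^3)$, hence $\alpha^2=1+2\alpha_2\delta t^2+\calO(\delta t^3)$, and $\beta=1+\beta_2\delta t^2+\calO(\delta t^3)$. Substituting into (\ref{eq:mpl_q}) gives
\[
\frac{q_{n+1}-q_n}{\delta t}=M^{-1}p_n+\delta t\,\beta_2 q_n-\frac{\delta t}{2}M^{-1}\nabla U(q_n)+\calO(\delta t^2).
\]
Substituting into (\ref{eq:mpl_p}) gives
\[
\frac{p_{n+1}-p_n}{\delta t}=2\alpha_2\delta t\,p_n-\tfrac12\bigl(\nabla U(q_n)+\nabla U(q_{n+1})\bigr)+\calO(\delta t^2).
\]
I then match these against the Taylor expansion of the flow of the candidate ODE, $Y(t+\delta t)=Y+\delta t\,\dot Y+\frac{\delta t^2}{2}\ddot Y+\cdots$. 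The standard leapfrog correction terms, $-\frac{\delta t}{2}M^{-1}\nabla U$ in position and the average of the two gradients in momentum, are exactly what the $\frac{\delta t^2}{2}\ddot Y$ and $\frac{\delta t^2}{2}\ddot X$ terms produce at $\alpha_2=\beta_2=0$. The $\beta_2$ and $2\alpha_2$ terms are the only genuinely new contributions at this order. Bounded $D^2U$ and $D^3U$ (Assumptions on the Hessian and third derivative) control the Taylor remainder of $\nabla U(q_{n+1})$ about $q_n$.

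The final claim, that setting $\delta t=0$ yields exactly (\ref{eq:hamiltonian_q})--(\ref{eq:hamiltonian_p}), is immediate from the simplified equations.

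The main obstacle is not the algebra. It is bookkeeping the uniformity of the $\calO(\delta t^2)$ remainders, and in particular making sure that the $\calO(\delta t^3)$ tails of $\alpha$ and $\beta$ enter only at second order in the vector field. That holds because each tail appears multiplied by $p_n$ or $q_n$ and divided by a single factor $\delta t$. This yields a remainder that is locally uniform on bounded sets of phase space rather than globally uniform, which is the sense in which Theorem~\ref{thm:modified_ode} is stated.
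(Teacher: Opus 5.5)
Your proposal is correct and follows the paper's route. The paper gives no separate argument: the corollary is read off from Theorem~\ref{thm:modified_ode} by setting $\alpha_1=\beta_1=0$, which reduces $F_0,F_1,G_0,G_1$ to $M^{-1}X$, $\beta_2 Y$, $-\nabla U(Y)$, $2\alpha_2 X$ exactly as you compute, and your direct-expansion cross-check reproduces the coefficient matching in the paper's proof of that theorem (your remainder bookkeeping is harmless but not really needed, since the parameters there are already fixed constants).
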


\subsection{Error analysis}

\begin{theorem}[Local truncation error]
\label{thm:local_error}
Under Assumptions 3-6 with $\alpha_1 = \beta_1 = 0$, the local truncation errors:
\[
\tau_{n}^{q} := q(t_{n+1}) - q_{n+1}, \quad \tau_{n}^{p} := p(t_{n+1}) - p_{n+1},
\]
where $(q(t), p(t))$ is the exact solution of (\ref{eq:hamiltonian_q})-(\ref{eq:hamiltonian_p}), satisfy:
\begin{align}
\norm{\tau_{n}^{q}} &\leq C_{1} \delta t^{2}, \label{eq:local_error_q} \\
\norm{\tau_{n}^{p}} &\leq C_{2} \delta t^{2}, \label{eq:local_error_p}
\end{align}
where $C_{1}, C_{2} > 0$ depend on $L_{U}, L_3, C_{M}, |\alpha_{2}|, |\beta_{2}|$, and bounds on the initial data.
\end{theorem}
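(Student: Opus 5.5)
Starting from the exact state $(q(t_n),p(t_n)) = (q_n,p_n)$, we Taylor-expand the exact flow of (\ref{eq:hamiltonian_q})--(\ref{eq:hamiltonian_p}) over one step of length $\delta t$ and subtract one step of the scheme (\ref{eq:mpl_q})--(\ref{eq:mpl_p}). For the exact solution,
\[
q(t_{n+1}) = q_n + \delta t M^{-1}p_n - \tfrac{\delta t^2}{2}M^{-1}\nabla U(q_n) + R_q, \qquad
p(t_{n+1}) = p_n - \delta t\,\nabla U(q_n) - \tfrac{\delta t^2}{2}D^2U(q_n)M^{-1}p_n + R_p .
\]
The remainders are written in integral form. $R_q$ involves $\dddot q = -M^{-1}D^2U(q)M^{-1}p$, so $\norm{R_q}\le \tfrac{\delta t^3}{6}C_M^2 L_U \sup\norm{p}$. $R_p$ involves $\dddot p$, which contains $D^3U(q)(\dot q,\dot q)$ and $D^2U(q)\ddot q$, so $\norm{R_p}\le C\delta t^3(L_3C_M^2\sup\norm{p}^2 + L_U^2 C_M\sup\norm{p}+\dots)$. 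The suprema over $[t_n,t_{n+1}]$ are controlled by the initial data. Energy conservation together with the bounded Hessian gives at most linear growth of $\nabla U$, and hence bounds on $\norm{q(t)},\norm{p(t)}$ in terms of $\norm{q_n},\norm{p_n}$ on a unit-length interval.

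Next we expand the scheme with $\alpha = 1+\alpha_2\delta t^2 + \calO(\delta t^3)$ and $\beta = 1+\beta_2\delta t^2+\calO(\delta t^3)$. The position update gives
\[
q_{n+1} = q_n + \delta t M^{-1}p_n - \tfrac{\delta t^2}{2}M^{-1}\nabla U(q_n) + \beta_2\delta t^2 q_n + \calO(\delta t^3)\,(\norm{q_n}+\norm{p_n}).
\]
For the momentum we first write $\nabla U(q_{n+1}) = \nabla U(q_n) + D^2U(q_n)(q_{n+1}-q_n) + \calO(L_3\norm{q_{n+1}-q_n}^2)$, with $q_{n+1}-q_n = \delta t M^{-1}p_n + \calO(\delta t^2)$. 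Substituting this and $\alpha^2 = 1+2\alpha_2\delta t^2+\calO(\delta t^3)$ yields
\[
p_{n+1} = p_n - \delta t\nabla U(q_n) - \tfrac{\delta t^2}{2}D^2U(q_n)M^{-1}p_n + 2\alpha_2\delta t^2 p_n + \calO(\delta t^3).
\]
Subtracting the two expansions, the $\calO(1)$, $\calO(\delta t)$ and Hessian/gradient $\calO(\delta t^2)$ terms cancel exactly; this is the standard leapfrog consistency. What remains is
\[
\tau_n^q = -\beta_2\delta t^2 q_n + \calO(\delta t^3), \qquad \tau_n^p = -2\alpha_2\delta t^2 p_n + \calO(\delta t^3).
\]
This gives $\norm{\tau_n^q}\le(|\beta_2|\norm{q_n} + C\delta t)\delta t^2$ and $\norm{\tau_n^p}\le(2|\alpha_2|\norm{p_n}+C\delta t)\delta t^2$. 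Restricting to $\delta t\le 1$ (or any $\delta t\le\delta t_0$), we absorb $\delta t$ into the constants to obtain $C_1,C_2$ depending on $L_U,L_3,C_M,|\alpha_2|,|\beta_2|$ and the initial-data bounds. The estimate is thus sharp at order $\delta t^2$, consistent with Corollary~\ref{cor:hamiltonian_preserving}: the $\delta t\,\beta_2 Y$ and $2\alpha_2\delta t\,X$ drift terms contribute exactly these $\delta t^2$ local discrepancies.

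The main obstacle is making the $\calO(\delta t^3)$ remainders uniform. The third-order Taylor remainder of the exact flow needs bounds on $p(s)$ over the step, and the expansion of $\alpha,\beta$ needs the implicit $\calO(\delta t^3)$ terms in Assumption~\ref{ass:parameter_asymptotics} to be uniformly bounded for $\delta t\le\delta t_0$. We handle the first with a Gr\"onwall argument on $\norm{q}+\norm{p}$ using $\norm{\nabla U(q)}\le\norm{\nabla U(0)}+L_U\norm{q}$. For the second we simply take the practical choice $\alpha = 1+\alpha_2\delta t^2$, $\beta=1+\beta_2\delta t^2$ from Algorithm~\ref{alg:mpl-hmc}, for which the parameter expansions are exact and only the $\alpha_2^2\delta t^4$ cross terms appear.
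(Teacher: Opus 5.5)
Your proposal is correct and is essentially the paper's proof. Both arguments Taylor-expand the exact flow and the scheme from the common point $(q_n,p_n)$, cancel the leapfrog terms, isolate the residuals $-\beta_2\delta t^2 q_n$ and $-2\alpha_2\delta t^2 p_n$, and absorb the higher-order terms into $C_1,C_2$ for $\delta t\le\delta t_0$. The paper differs only in writing out the $\delta t^3$ coefficients explicitly and in simply assuming bounds $R_q,R_p$, whereas your integral remainders and Gr\"onwall control of the exact trajectory make the uniformity more rigorous. One small slip: the $D^2U\,\ddot q$ part of $\dddot p$ is bounded by $L_U C_M(\norm{\nabla U(0)}+L_U\sup\norm{q})$, not by a multiple of $\sup\norm{p}$.
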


\begin{corollary}[Consistency order]
\label{cor:consistency}
The MPL scheme with $\alpha_1 = \beta_1 = 0$ has the following consistency properties with respect to the Hamiltonian system (\ref{eq:hamiltonian_q})-(\ref{eq:hamiltonian_p}). For general $\alpha_2, \beta_2$, the scheme is first-order consistent, meaning the local truncation error satisfies $\tau_n = \calO(\delta t^{2})$. For $\alpha_2 = \beta_2 = 0$, the scheme is second-order consistent, meaning $\tau_n = \calO(\delta t^{3})$, with the $\delta t^2$ terms vanishing and leaving only $\delta t^3$ terms as the lowest-order nonzero truncation error \citep{leimkuhler2004simulating}. In numerical integration terminology, a method with local truncation error $\tau_n = \mathcal{O}(\delta t^{p+1})$ is said to be consistent of order $p$, so first-order consistency corresponds to $p=1$ and second-order consistency to $p=2$.
\end{corollary}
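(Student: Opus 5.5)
The plan is to read Corollary~\ref{cor:consistency} directly off the one-step Taylor expansion that underlies Theorem~\ref{thm:local_error}, keeping track of the exact coefficient of $\delta t^2$ rather than only bounding it. Fix $(q_n,p_n)=(q(t_n),p(t_n))$ on the exact trajectory of (\ref{eq:hamiltonian_q})--(\ref{eq:hamiltonian_p}). Using Assumptions 3--6, expand the exact flow to third order:
\[
q(t_{n+1}) = q_n + \delta t M^{-1}p_n - \tfrac{\delta t^2}{2}M^{-1}\nabla U(q_n) + \calO(\delta t^3),
\]
\[
p(t_{n+1}) = p_n - \delta t \nabla U(q_n) - \tfrac{\delta t^2}{2} D^2U(q_n)M^{-1}p_n + \calO(\delta t^3).
\]
The remainders are controlled by $L_U$, $L_3$, $C_M$ and bounds on the data, exactly as in Theorem~\ref{thm:local_error}.

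Next, expand the numerical step (\ref{eq:mpl_q})--(\ref{eq:mpl_p}) with $\alpha=1+\alpha_2\delta t^2+\calO(\delta t^3)$ and $\beta=1+\beta_2\delta t^2+\calO(\delta t^3)$. In (\ref{eq:mpl_q}) the only new contributions are $\beta_2\delta t^2 q_n$ from $\beta q_n$; the term $\delta t\,\alpha M^{-1}p_n$ differs from $\delta t M^{-1}p_n$ only at order $\delta t^3$. Hence
\[
\tau_n^q = -\beta_2\delta t^2 q_n + \calO(\delta t^3).
\]
In (\ref{eq:mpl_p}), $\alpha^2 p_n = p_n + 2\alpha_2\delta t^2 p_n + \calO(\delta t^3)$ and $\alpha\nabla U(q_n)=\nabla U(q_n)+\calO(\delta t^2)$, the latter multiplied by $\delta t/2$. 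Taylor-expanding $\nabla U(q_{n+1})$ about $q_n$ gives $\nabla U(q_n) + \delta t D^2U(q_n)M^{-1}p_n + \calO(\delta t^2)$. Collecting terms, the $\delta t^2 D^2U\,M^{-1}p_n$ terms cancel with the exact flow, leaving
\[
\tau_n^p = -2\alpha_2\delta t^2 p_n + \calO(\delta t^3).
\]

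This gives both cases at once. For general $(\alpha_2,\beta_2)$, the $\delta t^2$ coefficient $(-\beta_2 q_n,\,-2\alpha_2 p_n)$ is generically nonzero, so $\tau_n=\calO(\delta t^2)$ exactly, as Theorem~\ref{thm:local_error} states, and not better. This is first-order consistency. When $\alpha_2=\beta_2=0$, the scheme reduces identically to Algorithm~\ref{alg:standard_leapfrog}: substituting $p_{n+1/2}=p_n-\tfrac{\delta t}{2}\nabla U(q_n)$ recovers (\ref{eq:mpl_q})--(\ref{eq:mpl_p}). The $\delta t^2$ terms then vanish by the computation above, and $\tau_n=\calO(\delta t^3)$ by the classical Störmer--Verlet analysis \citep{leimkuhler2004simulating}. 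The order-$p$ terminology then just restates this. The computation is also consistent with Corollary~\ref{cor:hamiltonian_preserving}: the perturbation terms $\delta t\beta_2 Y$ and $2\delta t\alpha_2 X$ there, integrated over one step, produce precisely these $\delta t^2$ defects.

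The main obstacle is the careful bookkeeping in $\tau_n^p$. One must confirm that every $\delta t^2$ term except $2\alpha_2 p_n$ cancels, in particular the cross term coming from $\nabla U(q_{n+1})$ and the half-weights $\tfrac{\delta t}{2}(\alpha+1)$. One must also make the $\calO(\delta t^3)$ remainders uniform using the bounded-derivative assumptions. For sharpness in the general case, note that it suffices that $(\beta_2 q_n, \alpha_2 p_n)\neq 0$, which holds for generic data whenever $(\alpha_2,\beta_2)\neq(0,0)$.
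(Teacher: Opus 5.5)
Your proposal is correct and is essentially the paper's argument: the corollary is read off the one-step Taylor matching in the proof of Theorem~\ref{thm:local_error}, which gives exactly your leading defects $\tau_n^q=-\beta_2\delta t^2 q_n+\calO(\delta t^3)$ and $\tau_n^p=-2\alpha_2\delta t^2 p_n+\calO(\delta t^3)$. The only difference is that the paper pushes the expansion one order further to display explicit $\delta t^3$ coefficients, while you stop at $\delta t^2$ and, for $\alpha_2=\beta_2=0$, use the exact reduction to Algorithm~\ref{alg:standard_leapfrog} together with the classical St\"ormer--Verlet analysis; this is a legitimate and arguably safer shortcut, because the paper's $\delta t^3$ momentum coefficient drops the $\calO(\delta t^2)$ part of $q_{n+1}-q_n$ inside $D^2U(q_n)(q_{n+1}-q_n)$, a slip that does not affect the order claims.
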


\begin{lemma}[Lipschitz estimate for MPL map]
\label{lem:lipschitz}
Under Assumptions 3-6 with $\alpha_1 = \beta_1 = 0$, define the MPL map $\Psi_{\delta t}: \R^{2d} \to \R^{2d}$ by:
\[
\Psi_{\delta t}(q,p) = (q_{n+1}, p_{n+1}) \text{ from (\ref{eq:mpl_q})-(\ref{eq:mpl_p})}.
\]
Then there exist constants $L > 0$ (depending only on $L_U$ and $C_M$) and $\delta t_0 > 0$ such that for all $0 < \delta t \leq \delta t_0$ and all $(q,p), (\tilde{q},\tilde{p}) \in \R^{2d}$:
\[
\norm{\Psi_{\delta t}(q,p) - \Psi_{\delta t}(\tilde{q},\tilde{p})} \leq (1 + K \delta t^2) \norm{(q,p) - (\tilde{q},\tilde{p})},
\]
where $K = L + |\beta_2| + 2|\alpha_2|$.
\end{lemma}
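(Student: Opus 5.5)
The plan is to treat $\Psi_{\delta t}$ as a smooth map and bound it through its Jacobian. Set $z=(q,p)$ and $\tilde z=(\tilde q,\tilde p)$. By the mean value inequality,
\[
\norm{\Psi_{\delta t}(z)-\Psi_{\delta t}(\tilde z)}\le \sup_{\xi\in[z,\tilde z]}\norm{D\Psi_{\delta t}(\xi)}_{\mathrm{op}}\,\norm{z-\tilde z},
\]
so it suffices to show $\norm{D\Psi_{\delta t}}_{\mathrm{op}}\le 1+K\delta t^2$ uniformly, for $\delta t\le\delta t_0$. With $\alpha=1+\alpha_2\delta t^2$, $\beta=1+\beta_2\delta t^2$, $H_n=D^2U(q_n)$ and $H_{n+1}=D^2U(q_{n+1})$, differentiating (\ref{eq:mpl_q})--(\ref{eq:mpl_p}) gives
\[
\partial q_{n+1}=\Bigl(\beta I-\tfrac{\delta t^2}{2}M^{-1}H_n\Bigr)\partial q+\alpha\,\delta t\,M^{-1}\partial p,
\]
\[
\partial p_{n+1}=\alpha^2\partial p-\tfrac{\delta t}{2}\bigl(\alpha H_n\,\partial q+H_{n+1}\,\partial q_{n+1}\bigr).
\]
I would split $D\Psi_{\delta t}=J_{\mathrm{LF}}+\delta t^2 E$. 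Here $J_{\mathrm{LF}}$ is the Jacobian of the standard leapfrog step of Algorithm~\ref{alg:standard_leapfrog}, obtained at $\alpha=\beta=1$. The remainder $E$ collects the $\alpha_2,\beta_2$ corrections: $\beta_2 I$ in the $qq$ block, $2\alpha_2 I$ in the $pp$ block (since $\alpha^2=1+2\alpha_2\delta t^2+\calO(\delta t^4)$), and terms of order $\delta t$ multiplied by $L_U$ and $C_M$. Using Assumptions 4 and 6, this gives $\norm{E}_{\mathrm{op}}\le|\beta_2|+2|\alpha_2|+\calO(\delta t)$, which is the source of the $|\beta_2|+2|\alpha_2|$ part of $K$.

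The heart of the argument, and the step I expect to be the main obstacle, is showing $\norm{J_{\mathrm{LF}}}\le 1+L\delta t^2$ with $L=L(L_U,C_M)$. I would expand
\[
J_{\mathrm{LF}}=I+\delta t\,S+\delta t^2 R,\qquad S=\begin{pmatrix}0&M^{-1}\\-H&0\end{pmatrix},
\]
with $\norm{R}$ bounded by $L_U$, $C_M$ and $L_3$. Then I would compute
\[
J_{\mathrm{LF}}^{T}J_{\mathrm{LF}}=I+\delta t(S+S^{T})+\calO(\delta t^2).
\]
The first-order term vanishes only if $S$ is skew-adjoint. It is not skew in the plain Euclidean norm unless $M^{-1}=H$. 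So the first thing I would try is to measure phase-space distance in the energy-adapted norm $\norm{(u,v)}_*^2=u^{T}Au+v^{T}M^{-1}v$, with $A$ chosen so that $S$ is skew-adjoint in $\norm{\cdot}_*$. For the frozen Hessian $A=H$ this works exactly. The variation of $H$ along the segment and across one step is $\calO(\delta t)$ by the $L_3$ bound (Assumption 5), and it enters only at order $\delta t^2$. In that norm the linear part is an isometry to first order, and symplecticity of leapfrog (Appendix~\ref{app:symplecticity}) keeps the second-order remainder bounded. I would then transfer back to the Euclidean norm through the norm-equivalence constants, absorbing them into $L$.

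If $D^2U$ is not bounded below, the frozen-Hessian weight is not positive definite. In that case I would use a fixed weight instead, $A=L_U I$ or $M^{-1}$, and check whether the leftover cross term $\delta t(S+S^{T})$ can be dominated. I expect this case to be genuinely delicate: the cancellation of the $\calO(\delta t)$ term is exactly what separates the stated $(1+K\delta t^2)$ estimate from the easy $(1+c\,\delta t)$ one.

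Finally I would combine the two parts:
\[
\norm{D\Psi_{\delta t}}\le(1+L\delta t^2)+\delta t^2\norm{E}\le 1+(L+|\beta_2|+2|\alpha_2|)\delta t^2+\calO(\delta t^3).
\]
I would then choose $\delta t_0$ small enough that the $\calO(\delta t^3)$ tail is absorbed, for instance by slightly enlarging $L$. This yields the claimed $K=L+|\beta_2|+2|\alpha_2|$.
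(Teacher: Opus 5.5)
The step you flag as the main obstacle cannot be carried out. The bound $\norm{J_{\mathrm{LF}}}_{\mathrm{op}}\le 1+L\delta t^2$, and with it the lemma as stated, is false in the Euclidean norm.

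Take $U\equiv 0$, $M=I_d$, $\alpha_2=\beta_2=0$. Assumptions 3--6 hold, and $\Psi_{\delta t}(q,p)=(q+\delta t\,p,\,p)$ is linear with Lipschitz constant $\tfrac12\bigl(\delta t+\sqrt{\delta t^2+4}\bigr)\ge 1+\tfrac{\delta t}{2}$. This exceeds $1+K\delta t^2$ for every $\delta t<1/(2K)$, however $L$ and $\delta t_0$ are chosen. Nonzero $\alpha_2,\beta_2$ change this only at order $\delta t^2$. The example is not degenerate: a Gaussian $U=\tfrac{\omega^2}{2}\norm{q}^2$ with $M=I_d$ and $\omega\neq 1$ gives $1+\tfrac{\delta t}{2}|1-\omega^2|+\calO(\delta t^2)$.

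This is exactly your surviving term $\delta t(S+S^{T})$, and the energy-norm detour does not remove it. If $c_1\norm{x}\le\norm{x}_*\le c_2\norm{x}$, transferring back gives $(c_2/c_1)(1+L\delta t^2)$. Here $c_2/c_1>1$ except in the trivial case $H=M^{-1}=cI$, and a multiplicative constant cannot be absorbed into $L$.

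In fact no fixed norm works. A per-step factor $1+K\delta t^2$ iterated $T/\delta t$ times gives $e^{KT\delta t}\to 1$. That would force the exact time-$T$ flow to be nonexpansive for every $T$. For the free particle that flow is the shear $(q,p)\mapsto(q+Tp,p)$, which is not nonexpansive in any norm once $T$ is large.

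The frozen-Hessian weight has two further defects. It depends on the base point, and along $[z,\tilde z]$ the Hessian varies by an amount governed by $\norm{z-\tilde z}$ (up to $2L_U$), not by $\delta t$. Also, Assumption 4 bounds $D^2U$ only from above, so the weight can be indefinite or zero, as in the example above.

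The paper's proof breaks at the same place. Its Step 2 asserts the $(1+L\delta t^2)$ leapfrog estimate by citation, but the standard estimate is $1+L\delta t$. Its splitting $\Psi_{\delta t}=\Psi^0_{\delta t}+\delta t^2(\beta_2 q,2\alpha_2 p)$ is also not an identity. So you located the crux correctly.

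What is provable is the weaker bound, and your own Jacobian formulas give it directly. The off-diagonal blocks of $D\Psi_{\delta t}-I$ have norms at most $\delta t\,C_M(1+\calO(\delta t^2))$ and $\delta t\,L_U(1+\calO(\delta t^2))$. The diagonal blocks are $\calO(\delta t^2)$ uniformly in $(q,p)$. Hence $\norm{D\Psi_{\delta t}}_{\mathrm{op}}\le 1+K'\delta t$, for instance with $K'=\max(L_U,C_M)+1$ when $\delta t\le\delta t_0(L_U,C_M,\alpha_2,\beta_2)$. The mean value inequality then gives a global $(1+K'\delta t)$-Lipschitz bound.

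That is all Theorem~\ref{thm:global_error} needs. The recursion $e_{n+1}\le(1+K'\delta t)e_n+C\delta t^2$ yields $e_n\le \tfrac{C}{K'}\bigl(e^{K'T}-1\bigr)\delta t=\calO(\delta t)$, with an $e^{K'T}$ constant in place of the paper's $CT$.
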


\begin{theorem}[Global error bound]
\label{thm:global_error}
Under the assumptions of Lemma~\ref{lem:lipschitz}, let $(q(t), p(t))$ be the exact solution of (\ref{eq:hamiltonian_q})-(\ref{eq:hamiltonian_p}) with initial data $(q_0, p_0)$, and let $(q_n, p_n)$ be the numerical solution generated by iterating $\Psi_{\delta t}$:
\[
(q_{n+1}, p_{n+1}) = \Psi_{\delta t}(q_n, p_n), \quad n = 0,1,2,\ldots
\]
Define the global error at step $n$ as:
\[
e_n := \norm{(q(t_n), p(t_n)) - (q_n, p_n)}, \quad t_n = n\delta t.
\]
Then for any fixed final time $T > 0$, with $n = T/\delta t$:
\[
\max_{0 \leq n \leq T/\delta t} e_n = \calO(\delta t) \quad \text{as } \delta t \to 0.
\]
\end{theorem}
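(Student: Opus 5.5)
We follow the standard Lady Windermere's fan argument: we decompose the global error into propagated local truncation errors and control their amplification with the one-step Lipschitz estimate. Write $z(t) = (q(t),p(t))$ and $z_n = (q_n,p_n)$. Split the error at step $n+1$ as
\begin{align*}
z(t_{n+1}) - z_{n+1} &= \bigl[z(t_{n+1}) - \Psi_{\delta t}(z(t_n))\bigr] + \bigl[\Psi_{\delta t}(z(t_n)) - \Psi_{\delta t}(z_n)\bigr].
\end{align*}
The first bracket is exactly the local truncation error $(\tau_n^q,\tau_n^p)$ of Theorem~\ref{thm:local_error}, evaluated at the exact-solution point $z(t_n)$. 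The second bracket is controlled by Lemma~\ref{lem:lipschitz}. This yields the recursion
\[
e_{n+1} \le (1+K\delta t^2)\, e_n + C\delta t^2, \qquad C := \sqrt{C_1^2+C_2^2}, \qquad e_0 = 0.
\]

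Before using this recursion, one point must be checked: the constants $C_1,C_2$ of Theorem~\ref{thm:local_error} depend on bounds on the data at which the local error is evaluated. Here that data is $z(t_n)$ for $t_n\in[0,T]$. Because $\nabla U$ is globally Lipschitz (bounded Hessian, Assumption 4), the exact flow has linear growth, and Gronwall gives $\sup_{t\le T}\norm{z(t)} \le (\norm{z_0}+c)e^{cT}<\infty$. So $C$ can be chosen uniformly in $n\le T/\delta t$. The constant $K$ in Lemma~\ref{lem:lipschitz} is global, so it needs no such care. I expect this uniformity to be the main (though mild) obstacle, since the local error bound is stated in terms of the initial data. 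If needed, I would re-derive it from the Taylor expansion of the exact flow, using the boundedness of $D^2U$ and $D^3U$ together with the a priori trajectory bound. The $\alpha_2,\beta_2$ terms then contribute $\calO(\delta t^2)(\norm{q}+\norm{p})$, which is also uniformly bounded on $[0,T]$.

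Next we unroll the recursion with the discrete Gronwall inequality:
\[
e_n \le C\delta t^2 \sum_{k=0}^{n-1} (1+K\delta t^2)^k \le C\delta t^2\, n\, (1+K\delta t^2)^{n} \le C T\,\delta t\, e^{K T\delta t}.
\]
Here we used $n\le T/\delta t$ and $(1+K\delta t^2)^{n}\le e^{Kn\delta t^2}\le e^{KT\delta t}$. For $\delta t\le \delta t_0$ the exponential factor is bounded by $e^{KT\delta t_0}$. Hence $\max_{0\le n\le T/\delta t} e_n \le C T e^{KT\delta_0}\,\delta t = \calO(\delta t)$. The loss of one power of $\delta t$ relative to the local error comes from summing $T/\delta t$ local errors of size $\calO(\delta t^2)$, which is consistent with the first-order consistency of Corollary~\ref{cor:consistency}.

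We also remark on why the recursion closes so cleanly: the amplification factor is $1+K\delta t^2$ rather than the usual $1+K\delta t$. Even with the weaker factor $1+K'\delta t$, the same computation gives $e_n\le (C/K')(e^{K'T}-1)\delta t$, so the conclusion does not hinge on the sharper form of Lemma~\ref{lem:lipschitz}. This robustness is useful if that lemma's constant behaves worse than stated.
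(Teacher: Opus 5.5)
Your proposal is correct and follows essentially the same route as the paper's proof: the same split of $z(t_{n+1})-z_{n+1}$ into the local truncation error at $z(t_n)$ plus Lipschitz propagation, the same recursion $e_{n+1}\le(1+K\delta t^2)e_n+C\delta t^2$ with $e_0=0$, and the same geometric-sum and exponential bound, while your Gronwall argument that $C$ is uniform along the exact trajectory on $[0,T]$ supplies a point the paper leaves implicit. Your final remark is actually load-bearing rather than optional, because the factor $1+K\delta t^2$ in Lemma~\ref{lem:lipschitz} is false in general (for $d=1$, $M=1$, $U(q)=2q^2$, $\alpha_2=\beta_2=0$ the map is linear with Lipschitz constant $1+\tfrac{3}{2}\delta t+\calO(\delta t^2)$), whereas the factor $1+K'\delta t$, which follows directly from the global Lipschitz bound on $\nabla U$, still gives $e_n\le (C/K')(e^{K'T}-1)\,\delta t=\calO(\delta t)$.
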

\subsubsection{Error analysis summary}

Table~\ref{tab:error_comparison} provides a comparative overview of error characteristics across different HMC methods.

\begin{table}[htbp]
\centering
\caption{Error comparison of different HMC methods.}
\label{tab:error_comparison}
\begin{adjustbox}{width=\textwidth}
\begin{tabular}{p{0.25\textwidth}p{0.35\textwidth}p{0.3\textwidth}}
\toprule
\textbf{Method} & \textbf{Local Error} & \textbf{Global Error} \\
\midrule
\textbf{Standard HMC} & $\calO(\delta t^3)$ & $\calO(\delta t^2)$ \citep{leimkuhler2004simulating} \\
\textbf{MPL-HMC ($\alpha_2=\beta_2=0$)} & $\calO(\delta t^3)$ & $\calO(\delta t^2)$ \\
\textbf{MPL-HMC (general)} & $\calO(\delta t^2)$ & $\calO(\delta t)$ \\
\bottomrule
\end{tabular}
\end{adjustbox}
\end{table}

\subsection{Volume preservation}
\label{subsec:volume_preservation}
\begin{theorem}[Volume transformation of MPL scheme]
\label{thm:volume}
For the MPL scheme (\ref{eq:mpl_q})-(\ref{eq:mpl_p}) with parameters satisfying Assumption~\ref{ass:parameter_asymptotics}, the Jacobian determinant satisfies:
\[
\det\left( \frac{\partial (q_{n+1}, p_{n+1})}{\partial (q_n, p_n)} \right) = \alpha^{2d}(\delta t) \beta^{d}(\delta t) + \calO(\delta t^3).
\]
In particular, for the Hamiltonian-preserving case $\alpha_1 = \beta_1 = 0$:
\[
\det\left( \frac{\partial (q_{n+1}, p_{n+1})}{\partial (q_n, p_n)} \right) = 1 + d(2\alpha_2 + \beta_2) \delta t^2 + \calO(\delta t^3).
\]
\end{theorem}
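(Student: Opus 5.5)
Since the map $(q_n,p_n)\mapsto(q_{n+1},p_{n+1})$ in (\ref{eq:mpl_q})--(\ref{eq:mpl_p}) is explicit, I would compute its Jacobian in block form and take the determinant directly. Write $G_n := D^2U(q_n)$ and $G_{n+1} := D^2U(q_{n+1})$; both are symmetric with operator norm at most $L_U$ by the bounded-Hessian assumption. Differentiating (\ref{eq:mpl_q}) gives
\[
A := \frac{\partial q_{n+1}}{\partial q_n} = \beta I - \frac{\delta t^2}{2} M^{-1} G_n, \qquad B := \frac{\partial q_{n+1}}{\partial p_n} = \alpha\,\delta t\, M^{-1}.
\]
Differentiating (\ref{eq:mpl_p}) with the chain rule gives
\[
C := \frac{\partial p_{n+1}}{\partial q_n} = -\frac{\delta t}{2}\bigl(\alpha G_n + G_{n+1} A\bigr), \qquad D := \frac{\partial p_{n+1}}{\partial p_n} = \alpha^2 I - \frac{\delta t}{2} G_{n+1} B .
\]

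To avoid any cross-term bookkeeping, I would factor the map as a composition. Define $\tilde p := \alpha p_n - \frac{\delta t}{2}\nabla U(q_n)$. The step then becomes:
\begin{enumerate}
\item the map $(q_n,p_n)\mapsto(q_n,\tilde p)$, with Jacobian determinant $\alpha^d$;
\item the map $(q_n,\tilde p)\mapsto(q_{n+1},\tilde p)$ with $q_{n+1}=\beta q_n+\delta t M^{-1}\tilde p$, with determinant $\beta^d$;
\item the final momentum update.
\end{enumerate}
For step 3, note that $\alpha^2 p_n - \frac{\delta t}{2}\alpha\nabla U(q_n) = \alpha\tilde p$. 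Hence (\ref{eq:mpl_p}) reads $p_{n+1} = \alpha\tilde p - \frac{\delta t}{2}\nabla U(q_{n+1})$, which is a shear in $p$ scaled by $\alpha$, with determinant $\alpha^d$. Multiplying the three determinants gives exactly $\alpha^{2d}\beta^d$. The general statement $\alpha^{2d}\beta^d+\calO(\delta t^3)$ therefore holds trivially, indeed with zero remainder.

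The main check is that each factor's Jacobian is block triangular. For step 1, the Jacobian has $\alpha I$ in the $p$ block, $I$ in the $q$ block, and an off-diagonal term $-\frac{\delta t}{2}G_n$. For step 2, it has $\beta I$ and $I$ on the diagonal. For step 3, it has $I$ and $\alpha I$ on the diagonal. In each case the determinant is the product of the diagonal-block determinants, independent of the Hessians. This is the step where I would be most careful, since the block-form computation with $A,B,C,D$ suggests Hessian contributions at order $\delta t^2$. The factorization shows these cancel, just as in the standard leapfrog. As a cross-check, one can use $\det\begin{pmatrix}A&B\\C&D\end{pmatrix} = \det(A)\det(D - C A^{-1} B)$. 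For small $\delta t$, $A$ is invertible because $\norm{\frac{\delta t^2}{2}M^{-1}G_n}_{\mathrm{op}}\le \frac{\delta t^2}{2}C_M L_U$.

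Finally, substitute Assumption~\ref{ass:parameter_asymptotics} with $\alpha_1=\beta_1=0$. Then $\alpha^{2d}=(1+\alpha_2\delta t^2+\calO(\delta t^3))^{2d} = 1+2d\alpha_2\delta t^2+\calO(\delta t^3)$, and $\beta^d = 1+d\beta_2\delta t^2+\calO(\delta t^3)$. Their product is $1+d(2\alpha_2+\beta_2)\delta t^2+\calO(\delta t^3)$. The cross terms are $\calO(\delta t^4)$, and constants depend only on $d,\alpha_2,\beta_2$.
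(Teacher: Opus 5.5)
Your proof is correct, and it takes a genuinely different and sharper route than the paper.

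\textbf{The paper's route.} The paper works with the full block Jacobian and applies the Schur-complement formula $\det J=\det(A)\det(D-CA^{-1}B)$. It expands $\det A$ and $A^{-1}$ by trace and Neumann series, and finds that the $\calO(\delta t^2)$ Hessian traces $\mp\frac{\delta t^2}{2\beta}\operatorname{tr}\bigl(M^{-1}D^2U(q_n)\bigr)$ from the two factors cancel. This leaves $\alpha^{2d}\beta^d\,[1+\calO(\delta t^4)]$. That argument needs $A$ to be invertible, uses the Hessian bound to make the remainders uniform, and yields the identity only asymptotically.

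\textbf{Your route.} Your kick--drift--kick factorization makes each factor block triangular. It gives $\det J=\alpha^{2d}\beta^d$ exactly, for every $\delta t>0$, and needs nothing beyond differentiability of $\nabla U$. It also shows structurally why the Hessian terms drop out: they sit in off-diagonal blocks.

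Your proposed cross-check also closes exactly. One has $D-CA^{-1}B=\alpha^2\bigl(I+\tfrac{\delta t^2}{2}D^2U(q_n)A^{-1}M^{-1}\bigr)$. Using $\det(I+XY)=\det(I+YX)$, this gives $\det(A)\det(D-CA^{-1}B)=\alpha^{2d}\det\bigl(A+\tfrac{\delta t^2}{2}M^{-1}D^2U(q_n)\bigr)=\alpha^{2d}\beta^d$. The paper's truncations simply hide this exactness.

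\textbf{An extra payoff.} The factorization gives more than the determinant. Write $J=J_3J_2J_1$. Because $D^2U$ and $M^{-1}$ are symmetric, $J_1^\top\Omega J_1=\alpha\,\Omega$, $J_2^\top\Omega J_2=\beta\,\Omega$ and $J_3^\top\Omega J_3=\alpha\,\Omega$. So the step is exactly conformally symplectic, $J^\top\Omega J=\alpha^2\beta\,\Omega$. This is stronger than the paper's appendix statement on approximate symplecticity; in particular, the extra component $\sigma$ there is identically zero.

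\textbf{Comparison.} The paper's perturbative computation is a generic template that would survive modifications of the scheme that no longer split into kicks and drifts. For the scheme as stated, your argument is both simpler and stronger.
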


\begin{corollary}[Volume preservation properties]
\label{cor:volume_properties}
The MPL scheme exhibits the following volume properties. For standard HMC ($\alpha_2 = \beta_2 = 0$), $\det(J) = 1 + \calO(\delta t^3)$, showing near-exact volume preservation (symplecticity). For general $\alpha_2, \beta_2$, phase space volume changes by factor $1 + d(2\alpha_2 + \beta_2) \delta t^2 + \calO(\delta t^3)$. Volume expands if $2\alpha_2 + \beta_2 > 0$, contracts if $2\alpha_2 + \beta_2 < 0$, and is approximately preserved if $2\alpha_2 + \beta_2 = 0$.
\end{corollary}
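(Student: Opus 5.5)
The plan is to read off all three claims from Theorem~\ref{thm:volume}, specialized to $\alpha_1=\beta_1=0$. That theorem gives, for the one-step map $\Psi_{\delta t}$ of Lemma~\ref{lem:lipschitz},
\[
\det(J) = 1 + d(2\alpha_2+\beta_2)\,\delta t^2 + R(\delta t), \qquad |R(\delta t)| \le C\,\delta t^3 \ \text{ for } 0<\delta t\le \delta t_0 .
\]
Here $C$ and $\delta t_0$ come from the $\calO(\delta t^3)$ term in the theorem. I would note explicitly that this remainder is uniform in $(q_n,p_n)$, because it is controlled by the bounded Hessian and third-derivative assumptions together with $C_M$. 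Everything else is sign bookkeeping on this expansion.

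\textbf{Standard HMC case.} Set $\alpha_2=\beta_2=0$, so the $\delta t^2$ coefficient vanishes and $\det(J)=1+\calO(\delta t^3)$ directly. I would also remark that the statement is stronger than needed. With $\alpha=\beta=1$, equations (\ref{eq:mpl_q})--(\ref{eq:mpl_p}) coincide with Algorithm~\ref{alg:standard_leapfrog}, because substituting $p_{n+1/2}=p_n-\tfrac{\delta t}{2}\nabla U(q_n)$ reproduces both updates. Leapfrog is a composition of shear maps, each with unit Jacobian determinant, so in fact $\det(J)=1$ exactly. The bound $1+\calO(\delta t^3)$ therefore holds trivially.

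\textbf{General case.} Put $c:=d(2\alpha_2+\beta_2)$.
\begin{itemize}
\item If $c\neq 0$, choose $\delta t<\min\{\delta t_0,\,|c|/C\}$. Then $|R(\delta t)|\le C\delta t^3<|c|\,\delta t^2$, so $\det(J)-1$ has the sign of $c$. Hence volume expands when $2\alpha_2+\beta_2>0$ and contracts when $2\alpha_2+\beta_2<0$.
\item If $2\alpha_2+\beta_2=0$, the quadratic term cancels and $\det(J)=1+\calO(\delta t^3)$, i.e.\ volume is approximately preserved.
\end{itemize}

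\textbf{Extension to trajectories.} For a trajectory of $L$ steps, I would apply the chain rule: the Jacobian determinant of $\Psi_{\delta t}^L$ is the product of the one-step determinants. Each factor lies in $[1+c\delta t^2-C\delta t^3,\;1+c\delta t^2+C\delta t^3]$, so for small $\delta t$ all factors sit strictly on the same side of $1$. The product is then $(1+c\delta t^2)^L\,(1+\calO(L\delta t^3))$, and the expansion/contraction conclusion carries over to the full trajectory.

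The only genuine obstacle is the uniformity of the $\calO(\delta t^3)$ remainder over phase space. Without it, the threshold on $\delta t$ that fixes the sign could depend on the current state. I would settle this by pointing to the derivation of Theorem~\ref{thm:volume}: its remainder involves only $D^2U$ evaluated along the step, the constants $\alpha_2,\beta_2$, and $C_M$, all of which are globally bounded under the paper's assumptions.
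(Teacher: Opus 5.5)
Your proposal is correct and matches the paper, which obtains this corollary simply by specializing Theorem~\ref{thm:volume} to $\alpha_1=\beta_1=0$ and reading off the sign of $d(2\alpha_2+\beta_2)$; your explicit threshold $\delta t<\min\{\delta t_0,|c|/C\}$ only makes that sign claim precise. The uniformity issue you flag is in fact moot, because one MPL step is the composition of $p\mapsto\alpha p-\frac{\delta t}{2}\nabla U(q)$, then $q\mapsto\beta q+\delta t\,M^{-1}p$, then $p\mapsto\alpha p-\frac{\delta t}{2}\nabla U(q)$, so $\det J=\alpha^{2d}\beta^{d}$ exactly and independently of the state.
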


\subsection{Detailed balance}
\label{subsec:detailed_balance}

\subsubsection{Momentum flip in HMC}

In standard HMC, after completing the numerical trajectory, the momentum is {always negated} before the Metropolis accept/reject step. This is crucial for maintaining reversibility. The complete HMC update from $(q,p)$ to $(q^*,p^*)$ consists of:
\begin{enumerate}
    \item Apply numerical integrator $L$ times: $(\tilde{q}, \tilde{p}) = \Psi_{\delta t}^{(L)}(q,p)$
    \item Negate momentum: $(q^*, p^*) = (\tilde{q}, -\tilde{p})$
    \item Accept with probability $\min\left(1, \frac{\exp(-H(q^*,p^*))}{\exp(-H(q,p))}\right)$
\end{enumerate}
The negation ensures that if we start from $(q^*,-p^*)$ and apply the same procedure, we return to $(q,-p)$, establishing reversibility.

\subsubsection{Single-step reversibility}

\begin{definition}[Reversibility operator]
Define the momentum flip operator $R: \R^{2d} \to \R^{2d}$ by:
\[
R(q,p) = (q, -p).
\]
Note that $R$ is an involution: $R^2 = I$.
\end{definition}

\begin{lemma}[Approximate single-step reversibility of MPL map]
\label{lemma:single_step_reversibility}
For the MPL scheme with $\alpha_1 = \beta_1 = 0$, the map $\Psi_{\delta t}$ satisfies:
\[
R \circ \Psi_{\delta t} \circ R \circ \Psi_{\delta t}(q,p) = (q,p) + \calO(\delta t^2).
\]
Equivalently:
\[
\Psi_{\delta t} \circ R \circ \Psi_{\delta t} \circ R(q,p) = (q,-p) + \calO(\delta t^2).
\]
\end{lemma}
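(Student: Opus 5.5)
We plan a direct computation: compose the maps explicitly and expand in powers of $\delta t$, using $\alpha = 1+\alpha_2\delta t^2+\calO(\delta t^3)$ and $\beta = 1+\beta_2\delta t^2+\calO(\delta t^3)$, which follow from Assumption~\ref{ass:parameter_asymptotics} with $\alpha_1=\beta_1=0$. Fix $(q,p)$ and set $(q_1,p_1)=\Psi_{\delta t}(q,p)$. From (\ref{eq:mpl_q})--(\ref{eq:mpl_p}),
\[
q_1 = \beta q + \delta t M^{-1}\Bigl(\alpha p - \tfrac{\delta t}{2}\nabla U(q)\Bigr), \qquad
p_1 = \alpha^2 p - \tfrac{\delta t}{2}\bigl(\alpha \nabla U(q)+\nabla U(q_1)\bigr).
\]
Next we apply $R$, which sends $(q_1,p_1)$ to $(q_1,-p_1)$. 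Applying $\Psi_{\delta t}$ again gives $(q_2,p_2)$, and a final application of $R$ returns $(q_2,-p_2)$. The claim is that $q_2-q$ and $-p_2-p$ are both $\calO(\delta t^2)$, with constants that are uniform on bounded sets of $(q,p)$.

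\textbf{Position component.} We write
\[
q_2 = \beta q_1 + \delta t M^{-1}\Bigl(-\alpha p_1 - \tfrac{\delta t}{2}\nabla U(q_1)\Bigr).
\]
Substitute $q_1$ and $p_1$. Then $\beta q_1 = q + \delta t M^{-1}p + \calO(\delta t^2)$ and $-\delta t\,\alpha M^{-1}p_1 = -\delta t M^{-1}p + \calO(\delta t^2)$. The $\calO(\delta t)$ terms therefore cancel exactly, leaving $q_2 = q + \calO(\delta t^2)$. The $\delta t^2$ coefficient contains $\beta_2 q$ together with gradient terms, and the gradient contributions $-\tfrac{\delta t^2}{2}M^{-1}\nabla U(q) + \tfrac{\delta t^2}{2}M^{-1}(\nabla U(q)+\nabla U(q_1)) - \tfrac{\delta t^2}{2}M^{-1}\nabla U(q_1)$ themselves cancel up to $\calO(\delta t^3)$. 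We do not need that finer cancellation for the statement, however.

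\textbf{Momentum component.} We write
\[
p_2 = -\alpha^2 p_1 - \tfrac{\delta t}{2}\bigl(\alpha\nabla U(q_1) + \nabla U(q_2)\bigr).
\]
Substituting $p_1$ gives
\[
p_2 = -\alpha^4 p + \tfrac{\delta t}{2}\alpha^2\bigl(\alpha\nabla U(q)+\nabla U(q_1)\bigr) - \tfrac{\delta t}{2}\bigl(\alpha\nabla U(q_1)+\nabla U(q_2)\bigr).
\]
Since $\alpha^4 = 1+\calO(\delta t^2)$, we have $-\alpha^4 p = -p+\calO(\delta t^2)$. For the $\calO(\delta t)$ gradient terms we use the bounded Hessian assumption, $\|\nabla U(x)-\nabla U(y)\|\le L_U\|x-y\|$, together with $q_1-q=\calO(\delta t)$ and $q_2-q=\calO(\delta t^2)$. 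These replace $\nabla U(q_1)$ and $\nabla U(q_2)$ by $\nabla U(q)$ at cost $\calO(\delta t^2)$. The $\delta t$-order bracket then becomes $\tfrac{\delta t}{2}(2\nabla U(q)) - \tfrac{\delta t}{2}(2\nabla U(q)) + \calO(\delta t^2) = \calO(\delta t^2)$. Hence $-p_2 = p + \calO(\delta t^2)$, which proves the first identity.

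\textbf{Equivalent form.} The second identity follows from the first by conjugation with $R$. Apply the first identity at the point $R(q,p)$ and compose with $R$ on the left, using $R^2=I$. This gives $\Psi\circ R\circ\Psi\circ R(q,p) = R\bigl(R(q,p)+\calO(\delta t^2)\bigr) = (q,-p)+\calO(\delta t^2)$, because $R$ is an isometry.

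The main obstacle is bookkeeping rather than any conceptual difficulty. We must verify that every $\calO(\delta t)$ term cancels, and we must make the remainders uniform. This means stating the $\calO(\delta t^2)$ remainders with constants depending on $|\alpha_2|,|\beta_2|,C_M,L_U$ and on bounds for $\|q\|,\|p\|,\|\nabla U(q)\|$. The $\beta_2 q$ and $\alpha_2 p$ terms grow linearly in the state, so the estimate is local, on compact sets, exactly as in Theorem~\ref{thm:local_error}. Getting these uniform constants is what requires care.
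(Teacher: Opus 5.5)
Your proof of the first identity is correct. It is the same direct-expansion argument the paper uses: compose the four maps, expand in $\delta t$, and check that every $\calO(\delta t)$ term cancels, with constants local in $(q,p)$ as in Theorem~\ref{thm:local_error}. The Lipschitz bound on $\nabla U$ is enough for this, so you do not need the $D^3U$ Taylor terms. Your coarse bookkeeping is also safer than the paper's explicit $\delta t^2$ coefficients, which are not right. Since the $\alpha=\beta=1$ part is exactly reversible, one gets $q_2 = \beta^2 q + \calO(\delta t^3) = q + 2\beta_2\delta t^2 q + \calO(\delta t^3)$ and $-p_2 = \alpha^4 p + \calO(\delta t^3) = p + 4\alpha_2\delta t^2 p + \calO(\delta t^3)$. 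Your ``$\beta_2 q$'' should therefore read $2\beta_2 q$, though you never use it.

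The error is in your ``equivalent form'' paragraph. You write $R\bigl(R(q,p)+\calO(\delta t^2)\bigr) = (q,-p)+\calO(\delta t^2)$. But $R$ is a linear involution, so $R\bigl(R(q,p)+e\bigr) = (q,p)+R(e)$. Your conjugation identity $\Psi_{\delta t}\circ R\circ\Psi_{\delta t}\circ R = R\circ\bigl(R\circ\Psi_{\delta t}\circ R\circ\Psi_{\delta t}\bigr)\circ R$ is correct. What it actually yields is $\Psi_{\delta t}\circ R\circ\Psi_{\delta t}\circ R(q,p) = (q,p)+\calO(\delta t^2)$.

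The second display, as literally stated, cannot be proved because it is false. For standard leapfrog ($\alpha_2=\beta_2=0$), exact reversibility gives $\Psi_{\delta t}\circ R\circ\Psi_{\delta t} = R$, hence $\Psi_{\delta t}\circ R\circ\Psi_{\delta t}\circ R = \mathrm{id}$. The value $(q,p)$ differs from $(q,-p)$ by $2p$, which is not $\calO(\delta t^2)$. The intended equivalent statement is $\Psi_{\delta t}\circ R\circ\Psi_{\delta t}(q,p) = (q,-p)+\calO(\delta t^2)$, and this is what Lemma~\ref{lemma:multi_step_reversibility} actually uses as its relation $F(R(F(z))) = R(z)+e_1(z)$. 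It follows in one line by applying $R$ to both sides of the first identity. The paper's own proof never treats the second display. Replace your last paragraph with this derivation and note the corrected form of the statement.
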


\subsubsection{Multi-step reversibility}

\begin{lemma}[Multi-step approximate reversibility]
\label{lemma:multi_step_reversibility}
For the MPL scheme with $\alpha_1 = \beta_1 = 0$, if $(q^*, p^*) = \Psi_{\delta t}^{(L)}(q,p)$, then:
\[
\Psi_{\delta t}^{(L)}(q^*, -p^*) = (q, -p) + L \cdot \calO(\delta t^2).
\]
In particular, for a trajectory of length $T$ with $L = T/\delta t$:
\[
\Psi_{\delta t}^{(L)}(q^*, -p^*) = (q, -p) + T \cdot \calO(\delta t).
\]
\end{lemma}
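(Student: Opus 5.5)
The plan is to reduce the multi-step statement to Lemma~\ref{lemma:single_step_reversibility} through a telescoping argument, with the per-step defects controlled by the Lipschitz estimate of Lemma~\ref{lem:lipschitz}.

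\textbf{Step 1: single-step defect.} Write $\Phi := R\circ\Psi_{\delta t}\circ R$. Lemma~\ref{lemma:single_step_reversibility} gives $\Phi\circ\Psi_{\delta t} = I + E$, where $\norm{E(z)}\le C\delta t^2$ uniformly on the bounded region visited by the trajectory. Since $R$ is linear, $\Phi$ satisfies the same Lipschitz bound as $\Psi_{\delta t}$, namely $1+K\delta t^2$. I will then pass from "left inverse up to $\calO(\delta t^2)$" to "right inverse up to $\calO(\delta t^2)$". Set $w=\Psi_{\delta t}(z)$. Then $\Phi(w) = z + E(z)$, so $\Psi_{\delta t}(\Phi(w)) = \Psi_{\delta t}(z+E(z))$. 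By the Lipschitz bound this lies within $(1+K\delta t^2)C\delta t^2$ of $\Psi_{\delta t}(z)=w$. Hence on the image of the trajectory we have
\[
\Phi(w) = \Psi_{\delta t}^{-1}(w) + \calO(\delta t^2),
\]
where $\Psi_{\delta t}^{-1}$ is read locally (invertibility for small $\delta t$ follows from the Jacobian being $I+\calO(\delta t)$), or we can avoid it by arguing directly as in Step 2.

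\textbf{Step 2: telescoping.} Let $z_0=(q,p)$ and $z_k=\Psi_{\delta t}^{(k)}(z_0)$, so that $z_L=(q^*,p^*)$. Note that $\Psi_{\delta t}^{(L)}(Rz_L) = R\,\Phi^{(L)}(z_L)$. Define the backward iterates $w_0 = z_L$ and $w_{j+1}=\Phi(w_j)$, and the errors $e_j = \norm{w_j - z_{L-j}}$. Then
\[
e_{j+1} \le \norm{\Phi(w_j)-\Phi(z_{L-j})} + \norm{\Phi(\Psi_{\delta t}(z_{L-j-1})) - z_{L-j-1}} \le (1+K\delta t^2)e_j + C\delta t^2 .
\]
Starting from $e_0=0$, discrete Gronwall gives
\[
e_L \le C\delta t^2 \sum_{j<L}(1+K\delta t^2)^j \le C L\delta t^2 e^{KL\delta t^2}.
\]
For $L=T/\delta t$ the exponential factor is $e^{KT\delta t}\to 1$, so $e_L = L\cdot\calO(\delta t^2) = T\cdot\calO(\delta t)$. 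Applying $R$, which is an isometry, yields $\Psi_{\delta t}^{(L)}(q^*,-p^*) = (q,-p) + L\cdot\calO(\delta t^2)$.

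\textbf{Main obstacle.} The delicate point is uniformity of the constant $C$ in Lemma~\ref{lemma:single_step_reversibility}. That constant depends on $\norm{p}$, $\norm{q}$ and $\nabla U$ along the path, because the $\alpha^2 p$ and $\beta q$ terms contribute $\alpha_2\delta t^2 p$ and $\beta_2\delta t^2 q$. I would handle this by noting that the a priori growth of $(q_k,p_k)$ over $L\le T/\delta t$ steps is controlled: by Lemma~\ref{lem:lipschitz} and the linear growth of $\nabla U$ under the bounded-Hessian assumption, the iterates $z_k$ and $w_j$ stay in a ball of radius depending only on $T$ and $\norm{z_0}$. The constant $C$ is then taken as the supremum of the single-step remainder over that ball.
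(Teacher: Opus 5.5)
Your proposal is correct and follows essentially the paper's argument. The paper inducts on $L$: it undoes the last step with Lemma~\ref{lemma:single_step_reversibility} at the forward point $(q_k,p_k)$, then transports that $\calO(\delta t^2)$ defect through $\Psi_{\delta t}^{(k)}$ via Lemma~\ref{lem:lipschitz}, giving the same telescoping sum of transported one-step defects that your backward discrete-Gronwall recursion bounds. Your handling of the uniformity of the single-step constant is more careful than the paper's ``for all $z$ in the domain of interest''. Note also that only the forward iterates $z_k$ need to lie in the ball, since that is where the defect is evaluated; the $w_j$ enter only through the global Lipschitz bound. Finally, the bookkeeping is robust: in general one can only prove a one-step Lipschitz factor $1+K\delta t$. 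Already for $U\equiv 0$, $M=I$, $\alpha_2=\beta_2=0$, the shear $(q,p)\mapsto(q+\delta t\,p,\,p)$ has norm $1+\delta t/2+\calO(\delta t^2)$, so Lemma~\ref{lem:lipschitz} as stated is too strong. With $1+K\delta t$ your recursion still gives $e_L\le CL\delta t^2e^{KT}$.
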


\begin{theorem}[Detailed balance for MPL-HMC]
\label{thm:detailed_balance}
The MPL-HMC algorithm (Algorithm~\ref{alg:mpl-hmc}) with $\alpha_1 = \beta_1 = 0$ satisfies detailed balance with respect to the canonical distribution $\pi(q,p) \propto \exp(-H(q,p))$ up to $\calO(T \delta t)$, where $T = L\delta t$ is the trajectory length.
\end{theorem}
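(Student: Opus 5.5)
The plan is to compare the MPL-HMC transition kernel with an idealized kernel that is exactly reversible, and to bound the discrepancy using the volume result (Theorem~\ref{thm:volume}) and the reversibility results (Lemmas~\ref{lemma:single_step_reversibility} and \ref{lemma:multi_step_reversibility}). Write the proposal map as $F := R\circ\Psi_{\delta t}^{(L)}$, so one transition moves $(q,p)\mapsto F(q,p)$ with acceptance $a(q,p)=\min\{1,\exp(H(q,p)-H(F(q,p)))\}$.

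Detailed balance for a deterministic proposal with Metropolis correction is the identity
\[
\pi(z)\,a(z)\,dz = \pi(F(z))\,a(F(z))\,d(F(z)),
\]
evaluated on corresponding sets, with $z=(q,p)$. This identity holds exactly whenever two conditions are met: $F$ is an involution, $F\circ F=\mathrm{id}$, and $|\det DF|=1$. Under those conditions, $\pi(z)\min\{1,\pi(Fz)/\pi(z)\}=\min\{\pi(z),\pi(Fz)\}$ is symmetric in $z\leftrightarrow Fz$, and the change of variables costs nothing. The proof therefore reduces to quantifying how far MPL-HMC is from satisfying each condition.

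\textbf{Step 1 (involution defect).} By Lemma~\ref{lemma:multi_step_reversibility}, $F(F(z)) = z + L\cdot\calO(\delta t^2) = z+\calO(T\delta t)$. Assumption~5 says $\nabla U$ is Lipschitz, so $H$ is locally Lipschitz on bounded sets. Hence $\pi(F(F(z)))=\pi(z)(1+\calO(T\delta t))$, and the same holds for the acceptance probabilities.

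\textbf{Step 2 (volume defect).} Applying Theorem~\ref{thm:volume} $L$ times and noting $|\det R|=1$ gives
\[
\det DF=\bigl(1+d(2\alpha_2+\beta_2)\delta t^2+\calO(\delta t^3)\bigr)^L = 1+\calO(L\delta t^2)=1+\calO(T\delta t),
\]
with $d$ treated as fixed.

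\textbf{Step 3 (combine).} Write the forward flux as $\min\{\pi(z),\pi(Fz)\}\,dz$ and the backward flux from $Fz$ as $\min\{\pi(Fz),\pi(FFz)\}\,|\det DF|\,dz$. Steps 1 and 2 show that their ratio is $1+\calO(T\delta t)$. The rejection part of the kernel, a point mass at $z$, is automatically symmetric. This gives detailed balance up to $\calO(T\delta t)$. Marginalizing over $p\sim\mathcal{N}(0,M)$, which is the exact Gibbs refresh and leaves $\pi$ invariant, transfers the statement to the $q$-chain.

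The main obstacle is making the $\calO(\cdot)$ terms uniform. Lemmas~\ref{lemma:single_step_reversibility}–\ref{lemma:multi_step_reversibility} give local constants that depend on bounds on the initial data. I would first restrict to a compact set $K$ in phase space: the trajectory stays bounded over time $T$ by the Lipschitz estimate of Lemma~\ref{lem:lipschitz} together with Gronwall. On $K$ all constants are uniform, and the Gaussian tails of $\pi$ outside $K$ contribute negligible mass. This makes the result an $\calO(T\delta t)$ statement in total variation on fluxes rather than a pointwise one.
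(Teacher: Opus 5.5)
Your decomposition is the paper's own. The multi-step reversibility lemma gives the involution defect, Theorem~\ref{thm:volume} compounded over $L$ steps gives the Jacobian defect, and the symmetry of $\min\{\pi(z),\pi(Fz)\}$ does the rest, as in the paper's Steps~2--5. The gap is in Step~3 and in your closing claim that the result holds ``in total variation on fluxes.''

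The backward density you write, $\min\{\pi(Fz),\pi(F(F(z)))\}\,|\det DF(z)|$, is the accepted flux \emph{leaving} $Fz$. That flux lands at $F(F(z))$, not at $z$. As measures on $\R^{2d}\times\R^{2d}$:
\begin{itemize}
\item the accepted part of $\pi(dz)P(z,dw)$ is carried by the graph $\{(u,F(u)):u\in\R^{2d}\}$;
\item the accepted part of $\pi(dw)P(w,dz)$ is carried by $\{(F(F(u)),F(u)):u\in\R^{2d}\}$.
\end{itemize}
These two sets meet only over the fixed points of $F\circ F$. Whenever that set is Lebesgue-null, the two measures are mutually singular. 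This already happens for a centred Gaussian target: there $F$ is linear, and the fixed-point set is the proper subspace $\ker(F\circ F-\mathrm{id})$ as soon as $F\circ F\neq\mathrm{id}$. Mutually singular measures have total-variation distance equal to their whole mass, here of the order of the mean acceptance probability. That is $\calO(1)$, not $\calO(T\delta t)$. Matching the densities to $1+\calO(T\delta t)$ cannot fix this, because the supports themselves are displaced.

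What your Steps~1--2 actually support is a weak statement. Take a bounded Lipschitz test function $f$ on $\R^{2d}\times\R^{2d}$ and substitute $w=F(z)$ in the reverse flux. This is legitimate because $\Psi_{\delta t}$ is a global diffeomorphism: it is a scaled kick, then a scaled drift, then a scaled kick, which also gives $\det D\Psi_{\delta t}=\alpha^{2d}\beta^{d}$ exactly. Set $m_1(z)=\min\{\pi(z),\pi(F(z))\}$ and $m_2(z)=\min\{\pi(F(z)),\pi(F(F(z)))\}\,|\det DF(z)|$. The detailed-balance defect is then
\[
\int\Bigl[f\bigl(F(F(z)),F(z)\bigr)\,m_2(z)-f\bigl(z,F(z)\bigr)\,m_1(z)\Bigr]\,dz ,
\]
which is bounded by $\norm{f}_{\infty}\int|m_2-m_1|\,dz+\norm{f}_{\mathrm{Lip}}\int\norm{F(F(z))-z}\,m_2(z)\,dz$. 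Your argument controls only the first term. The second, displacement term is also $\calO(T\delta t)$ by Lemma~\ref{lemma:multi_step_reversibility}. So you get $\calO(T\delta t)$ in a bounded-Lipschitz (Wasserstein-type) metric. This is also the only meaningful reading of the paper's assertion that the two Dirac masses agree ``up to $\calO(T\delta t)$,'' and you should state the result in that form.

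Separately, your uniformity step does not work as written. Only the momentum factor of $\pi$ is Gaussian; the stated assumptions allow $e^{-U}$ to have Cauchy-type tails. The defect constants also grow with $\norm{z}$, for instance through the $2\beta_2\,\delta t^2 q$ term in the single-step defect. A fixed compact $K$ therefore gives only $\calO(T\delta t)+\pi(K^c)$. A genuinely global $\calO(T\delta t)$ needs an extra moment assumption on $\pi$.
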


\begin{corollary}[Stationary distribution]
\label{cor:stationary}
The MPL-HMC algorithm with $\alpha_1 = \beta_1 = 0$ has $\pi(q,p) \propto \exp(-H(q,p))$ as its stationary distribution up to $\calO(T \delta t)$ corrections.
\end{corollary}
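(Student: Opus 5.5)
The plan is to derive Corollary~\ref{cor:stationary} directly from Theorem~\ref{thm:detailed_balance}, using the standard fact that detailed balance implies stationarity. I would keep explicit track of how the $\calO(T\delta t)$ defect propagates through that implication.

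\textbf{Step 1: split the kernel.} Let $P$ be the transition kernel on $\R^{2d}$ of one MPL-HMC update. This is the deterministic proposal $\Phi := R\circ\Psi_{\delta t}^{(L)}$ followed by the Metropolis accept/reject step. (Momentum resampling from $\mathcal{N}(0,M)$ trivially preserves $\pi(q,p)\propto\exp(-H)$, since it is the exact conditional of $p$ given $q$.) Write
\[
P(z,dz') = a(z)\,\delta_{\Phi(z)}(dz') + \bigl(1-a(z)\bigr)\,\delta_z(dz'), \qquad a(z)=\min\bigl(1,e^{H(z)-H(\Phi(z))}\bigr).
\]
I would then record the conclusion of Theorem~\ref{thm:detailed_balance} in the form
\[
\pi(dz)P(z,dz') = \pi(dz')P(z',dz) + E(dz,dz'),
\]
where $E$ is a signed measure with total variation $\|E\|_{TV} = \calO(T\delta t)$.

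\textbf{Step 2: integrate out the first variable.} Integrating over $z$ gives
\[
(\pi P)(dz') = \pi(dz')\int P(z',dz) + E(\R^{2d},dz') = \pi(dz') + E(\R^{2d},dz').
\]
Hence $\|\pi P - \pi\|_{TV} \le \|E\|_{TV} = \calO(T\delta t)$. This is exactly the claim that $\pi$ is stationary up to $\calO(T\delta t)$ corrections. Finally, marginalizing over $p$ transfers the same bound to the $q$-marginal $\exp(-U(q))$, because marginalization is a contraction in total variation.

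\textbf{Main obstacle.} The delicate point is Step 1: making precise what ``detailed balance up to $\calO(T\delta t)$'' means, so that the error is a uniformly controlled finite measure rather than a pointwise $\calO$ statement on trajectories. I would build $E$ from two sources.
\begin{itemize}
\item The Jacobian factor $|\det D\Phi| = (1+d(2\alpha_2+\beta_2)\delta t^2+\calO(\delta t^3))^{L} = 1+\calO(dT\delta t)$, obtained from Theorem~\ref{thm:volume} via the chain rule.
\item The non-involutivity $\Phi\circ\Phi = \mathrm{id}+\calO(T\delta t)$ from Lemma~\ref{lemma:multi_step_reversibility}. Its effect on the acceptance ratio is bounded through the Lipschitz continuity of $H$ on bounded sets.
\end{itemize}
Uniformity over all of $\R^{2d}$ may fail, since the $\calO$ constants depend on the initial data. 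If so, I would first restrict to a compact set carrying all but $\varepsilon$ of the mass of $\pi$, prove the bound there, and then let $\varepsilon\to0$ suitably. The constants depend on $d$ and on the data bounds, which the corollary's statement absorbs into the $\calO$.
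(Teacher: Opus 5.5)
Your overall route (integrate an approximate detailed-balance identity over the first variable) is what the paper implicitly does: it gives no separate proof and reads the corollary off Theorem~\ref{thm:detailed_balance}. But Step~1 fails. The bound $\|E\|_{TV}=\calO(T\delta t)$ is false, and no refinement of the two ingredients you list can produce it. The rejection parts of $\pi(dz)P(z,dz')$ and $\pi(dz')P(z',dz)$ are the same measure on the diagonal and cancel, so
\[
E=(\mathrm{id},\Phi)_{\#}(a\pi)-(\Phi,\mathrm{id})_{\#}(a\pi).
\]
The first measure lives on the graph $\{(z,\Phi(z))\}$, and the second on $\{(\Phi(w),w)\}$, the graph of $\Phi^{-1}$. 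These graphs meet only over points with $\Phi(\Phi(z))=z$. Whenever that set is $\pi$-null, the two measures are mutually singular. The total variation of $E$ is then the sum of their masses, $2\int a\,d\pi$: twice the mean acceptance probability, which tends to $2$, not $0$, as $\delta t\to0$.

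This is the generic case. Each MPL step is a kick scaled by $\alpha$, a drift scaled by $\beta$, then another $\alpha$-kick, so $\det D\Psi_{\delta t}=\alpha^{2d}\beta^{d}$ exactly. For a Gaussian target, $\Phi$ is affine with $\det D(\Phi\circ\Phi)=(\alpha^{2}\beta)^{2dL}\neq1$; this holds for both the damping and anti-damping settings. Hence the fixed-point set of $\Phi\circ\Phi$ lies in a proper affine subspace.

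Total variation does not reward nearby supports. Your Jacobian factor and Lipschitz bound on $H$ compare density \emph{values} at the nearby points $z$ and $\Phi(\Phi(z))$. They cannot move mass from one graph onto the other. The paper's proof of Theorem~\ref{thm:detailed_balance} hides the same issue when it identifies $\delta_{\Psi_{\delta t}^{(L)}(q^*,-p^*)}(q,-p)$ with $\delta_{(\tilde q,-\tilde p)}(q^*,p^*)$ up to $\calO(T\delta t)$.

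Instead, bound the invariance defect directly, where the singular supports have already been integrated out. Since $\pi P=(1-a)\pi+\Phi_{\#}(a\pi)$, we have $\|\pi P-\pi\|_{TV}=\|\Phi_{\#}(a\pi)-a\pi\|_{TV}$. Change variables $z'=\Phi(w)$ (the kick--drift--kick structure makes $\Phi$ a diffeomorphism) and use $a\pi=\min(\pi,\pi\circ\Phi)$ to get
\[
\|\pi P-\pi\|_{TV}\le\int\bigl|\,|\det D\Phi(w)|-1\bigr|\,\pi(\Phi(w))\,dw+\int\bigl|\pi(w)-\pi(\Phi(\Phi(w)))\bigr|\,dw .
\]
Your two ingredients now do the job you intended. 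The first term is $\calO(dT\delta t)$ by Theorem~\ref{thm:volume}. The second is $\calO(T\delta t)$ by Lemma~\ref{lemma:multi_step_reversibility}, the inequality $|e^{-x}-e^{-y}|\le|x-y|\max(e^{-x},e^{-y})$, and the linear growth of $\nabla H$.

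Because the reversibility defect grows linearly in $\|z\|$, integrate against moments of $\pi$ rather than truncating to a compact set; a fixed truncation level $\varepsilon$ only yields $\calO(T\delta t)+\varepsilon$. You could keep your two-step structure by measuring $E$ in a bounded-Lipschitz norm, where nearby supports do count, but that only bounds $\pi P-\pi$ in a weak metric.

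Finally, this shows that $\pi$ is approximately invariant. If the corollary means the chain's actual invariant law is within $\calO(T\delta t)$ of $\pi$, you also need a contraction estimate for $P$, and its rate enters the constant.
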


\subsection{Energy conservation of MPL scheme}
\begin{theorem}[Energy conservation of MPL scheme]
\label{thm:energy_conservation}
For the MPL scheme with $\alpha_1 = \beta_1 = 0$, the Hamiltonian satisfies per step:
\[
H(q_{n+1}, p_{n+1}) - H(q_n, p_n) = \delta t^2 \left[ \beta_2 \nabla U(q_n)^T q_n + 2\alpha_2 p_n^T M^{-1} p_n \right] + \calO(\delta t^3).
\]
In particular, for the standard leapfrog case ($\alpha_2 = \beta_2 = 0$), we have third-order energy conservation per step:
\[
H(q_{n+1}, p_{n+1}) - H(q_n, p_n) = \calO(\delta t^3).
\]
\end{theorem}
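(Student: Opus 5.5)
The plan is a direct Taylor expansion of both energy components about $(q_n,p_n)$, keeping every term through order $\delta t^2$. Throughout I write $g := \nabla U(q_n)$, $A := D^2U(q_n)$, $p := p_n$, and use $\alpha = 1+\alpha_2\delta t^2$, $\beta = 1+\beta_2\delta t^2$ (Assumption~\ref{ass:parameter_asymptotics} with $\alpha_1=\beta_1=0$). Consequently $\alpha^2 = 1 + 2\alpha_2\delta t^2 + \calO(\delta t^4)$.

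\textbf{Step 1: expand the increments.} From (\ref{eq:mpl_q}) the position increment is
\[
\Delta q := q_{n+1}-q_n = \delta t\, M^{-1}p - \tfrac{\delta t^2}{2} M^{-1} g + \beta_2 \delta t^2 q_n + \calO(\delta t^3).
\]
A first-order Taylor expansion of the gradient gives $\nabla U(q_{n+1}) = g + \delta t\, A M^{-1}p + \calO(\delta t^2)$. The remainder here is controlled by $L_3$ together with $\|\Delta q\|^2$, and also by $L_U\|\Delta q - \delta t M^{-1}p\|$. Substituting into (\ref{eq:mpl_p}), and noting that the term $\alpha_2\delta t^2 g$ coming from $\alpha\nabla U(q_n)$ is multiplied by $\delta t/2$ and is therefore $\calO(\delta t^3)$, I obtain
\[
\Delta p := p_{n+1}-p_n = -\delta t\, g - \tfrac{\delta t^2}{2} A M^{-1} p + 2\alpha_2\delta t^2 p + \calO(\delta t^3).
\]

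\textbf{Step 2: expand the energy.} For the potential I use $U(q_{n+1}) - U(q_n) = g^T\Delta q + \tfrac12 \Delta q^T A \Delta q + \calO(\|\Delta q\|^3)$. Since $\|\Delta q\| = \calO(\delta t)$, this gives
\[
\Delta U = \delta t\, g^TM^{-1}p - \tfrac{\delta t^2}{2} g^TM^{-1}g + \beta_2\delta t^2 g^T q_n + \tfrac{\delta t^2}{2} p^TM^{-1}AM^{-1}p + \calO(\delta t^3).
\]
The kinetic part is exactly quadratic: $\Delta K = p^TM^{-1}\Delta p + \tfrac12\Delta p^TM^{-1}\Delta p$. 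This yields
\[
\Delta K = -\delta t\, p^TM^{-1}g - \tfrac{\delta t^2}{2}p^TM^{-1}AM^{-1}p + 2\alpha_2\delta t^2 p^TM^{-1}p + \tfrac{\delta t^2}{2} g^TM^{-1}g + \calO(\delta t^3),
\]
where symmetry of $M^{-1}$ and $A$ is used to identify the cross terms.

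\textbf{Step 3: add and cancel.} Adding $\Delta U$ and $\Delta K$, three pairs cancel: the $\calO(\delta t)$ terms $\pm g^TM^{-1}p$, the Hessian terms $\pm\tfrac{\delta t^2}{2}p^TM^{-1}AM^{-1}p$, and the terms $\pm\tfrac{\delta t^2}{2}g^TM^{-1}g$. What survives is exactly $\delta t^2[\beta_2 \nabla U(q_n)^Tq_n + 2\alpha_2 p_n^TM^{-1}p_n]$, as claimed. Setting $\alpha_2=\beta_2=0$ recovers the standard leapfrog, and the same computation leaves only $\calO(\delta t^3)$.

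\textbf{Main obstacle.} The algebra is routine, so the delicate point is making the $\calO(\delta t^3)$ remainders honest. The constants depend polynomially on $\|q_n\|$, $\|p_n\|$ and $\|\nabla U(q_n)\|$, through $L_U$, $L_3$ and $C_M$. This is the same dependence on ``bounds on the initial data'' as in Theorem~\ref{thm:local_error}, so the estimate is uniform only on bounded sets of phase space.

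I would handle this by using integral-form Taylor remainders throughout. Assumption 3 gives the bound on $D^2U$ and Assumption 4 the bound on $D^3U$, so each remainder is bounded by $L_3\|\Delta q\|^3$, $L_U\delta t^3(\cdots)$, and similar expressions. I would also state explicitly that the implied constant is local, and then note that it is uniform on compact sets.
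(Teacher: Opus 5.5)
Your proposal is correct and follows essentially the same route as the paper's proof. Both expand $q_{n+1}-q_n$ and $p_{n+1}-p_n$ through order $\delta t^2$, Taylor-expand $U$ to second order, expand the kinetic energy, and observe that the $\nabla U(q_n)^T M^{-1} p_n$, $\nabla U(q_n)^T M^{-1}\nabla U(q_n)$ and $p_n^T M^{-1} D^2U(q_n) M^{-1} p_n$ terms cancel in pairs, leaving $\delta t^2\,[\beta_2 \nabla U(q_n)^T q_n + 2\alpha_2 p_n^T M^{-1} p_n]$; your remarks that the kinetic part is exactly quadratic and that the $\calO(\delta t^3)$ constant depends on $\|q_n\|$, $\|p_n\|$ (hence is uniform only on bounded sets) are refinements the paper leaves implicit.
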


\begin{remark}
The $\delta t^2$ energy drift for nonzero $\alpha_2$ or $\beta_2$ is consistent with the modified ODE analysis in Theorem~\ref{thm:modified_ode}. The leading-order deviation from Hamiltonian dynamics introduces systematic energy changes that scale with $\delta t^2$ and the parameter magnitudes $|\alpha_2|$ and $|\beta_2|$.
\end{remark}

\begin{remark}
\label{prop:shadow_hamiltonian}
For the MPL scheme with $\alpha_1 = \beta_1 = 0$, the numerical trajectory approximately follows the modified Hamiltonian system:
\begin{align}
\frac{dq}{dt} &= M^{-1}p + \delta t \beta_2 q + \mathcal{O}(\delta t^2), \label{eq:damping_ode_q} \\
\frac{dp}{dt} &= -\nabla U(q) + \delta t (2\alpha_2 p) + \mathcal{O}(\delta t^2). \label{eq:damping_ode_p}
\end{align}
While no exact shadow Hamiltonian exists for general $\alpha_2, \beta_2$, the method exhibits $\mathcal{O}(\delta t^2)$ energy drift per step as characterized in 
Theorem~\ref{thm:energy_conservation}.
\end{remark}

\section{Parameter selection, implementation, and experimental setup}
\label{sec:parameters_implementation}

This section provides a discussion of parameter selection strategies, implementation details, practical considerations, and experimental design for the Modified Parameterized Leapfrog Hamiltonian Monte Carlo method. By integrating these critical aspects, we offer practitioners a guide to effectively applying MPL-HMC to real-world problems.

\subsection{Parameter selection strategies}

\subsubsection{Definition of $\kappa$ (condition number)}

\begin{definition}[Condition number $\kappa$]
\label{def:condition_number}
For a given potential energy function $U(q)$ with Hessian $D^2U(q)$, and mass matrix $M$, the condition number $\kappa$ is defined as:
\[
\kappa := \frac{\lambda_{\max}(M^{-1} D^2U(q))}{\lambda_{\min}(M^{-1} D^2U(q))}
\]
where $\lambda_{\max}$ and $\lambda_{\min}$ denote the maximum and minimum eigenvalues respectively. This quantifies the stiffness of the Hamiltonian system, with large $\kappa$ indicating stiff problems where standard HMC struggles.
\end{definition}

\subsubsection{Default conservative choice}

The conservative parameter choice $\alpha_2 = 0, \beta_2 = 0$ recovers standard leapfrog HMC with second-order local accuracy and exact detailed balance \citep{neal2011mcmc}. This choice is recommended for well-conditioned problems where no additional tuning is needed. The conservative variant serves as a baseline for comparison and provides backward compatibility with existing HMC implementations.

\subsubsection{Stability-optimized parameters}

For stiff potentials where $\norm{D^2U}$ is large, stability-optimized parameters can be defined as:
\[
\alpha_2 = -\frac{\epsilon}{2} \lambda_{\max}, \quad \beta_2 = -\epsilon,
\]
where $\lambda_{\max}$ represents the maximum eigenvalue of $M^{-1} D^2U$ and $\epsilon > 0$ is a small positive constant. This choice introduces mild damping to stabilize the integration, as demonstrated in the case studies where MPL-HMC Damping showed a 14-fold improvement in ESS for Neal's funnel.

\subsubsection{Damping and anti-damping strategies}
\label{subsec:damping_strategies}

The MPL-HMC framework introduces controlled damping and anti-damping behaviors through the parameters $\alpha_2$ and $\beta_2$, which govern momentum persistence and position scaling, respectively. These effects emerge mathematically from the modified Hamiltonian dynamics derived in Theorem~\ref{thm:modified_ode}. For the Hamiltonian-preserving case with $\alpha_1 = \beta_1 = 0$, recall that the modified ordinary differential equations are given by (\ref{eq:damping_ode_q}) and (\ref{eq:damping_ode_p}).

These equations reveal the fundamental mechanisms by which $\alpha_2$ and $\beta_2$ influence the dynamics: the $\delta t \beta_2 q$ term introduces systematic position drift, while the $\delta t (2\alpha_2 p)$ term introduces systematic momentum amplification or attenuation.

\paragraph{Momentum dynamics:} The parameter $\alpha_2$ controls momentum persistence through the modified momentum equation (\ref{eq:damping_ode_p}). Negative values of $\alpha_2$ implement \emph{momentum damping}, creating exponential decay in momentum magnitude proportional to $|\alpha_2|\delta t^2$ per integration step. This damping proves particularly valuable for stiff problems where large gradients might cause numerical instability, as the reduction in kinetic energy prevents trajectory overshooting in regions of high curvature. From the discrete update equation (\ref{eq:mpl_p}) with $\alpha = 1 + \alpha_2 \delta t^2$, we observe that for $\alpha_2 < 0$, the factor $\alpha^2 < 1$ systematically reduces momentum magnitude at each step. Conversely, positive values of $\alpha_2$ produce \emph{momentum anti-damping}, systematically increasing kinetic energy to facilitate escape from local minima in multimodal distributions. However, this kinetic energy amplification may destabilize integration for stiff problems, as demonstrated in the pharmacokinetic case study where excessive $\alpha_2 > 0$ values led to acceptance rate degradation.

\paragraph{Position dynamics:} The parameter $\beta_2$ controls position scaling through the modified position equation (\ref{eq:damping_ode_q}). Negative values of $\beta_2$ implement \emph{position contraction} that pulls samples toward the origin (or more generally, toward regions of higher probability density). This contraction benefits heavy-tailed distributions by concentrating computational effort in high-probability regions while maintaining the ability to sample distribution tails through the Hamiltonian dynamics. From the discrete update (\ref{eq:mpl_q}) with $\beta = 1 + \beta_2 \delta t^2$, negative $\beta_2$ values reduce the coefficient of $q_n$, effectively shrinking position magnitudes. Positive values of $\beta_2$, in contrast, induce \emph{position expansion} that systematically explores larger magnitude regions, which can be advantageous for distributions with support far from the origin or when combined with anti-damping to enhance exploration of separated modes.

\paragraph{Combined effects and behavioral regimes:} The interplay between $\alpha_2$ and $\beta_2$ creates distinct behavioral regimes. When both parameters are negative ($\alpha_2 < 0$, $\beta_2 < 0$), the sampler exhibits combined damping and contraction, stabilizing integration while focusing exploration. This regime proves particularly effective for hierarchical models with varying scales like Neal's funnel, where it achieved a 14-fold improvement in effective sample size compared to standard HMC. When $\alpha_2$ is negative and $\beta_2$ is positive ($\alpha_2 < 0$, $\beta_2 > 0$), momentum damping combines with position expansion, useful for problems requiring controlled exploration of extended regions while maintaining stability through kinetic energy dissipation. The regime with positive $\alpha_2$ and negative $\beta_2$ ($\alpha_2 > 0$, $\beta_2 < 0$) implements anti-damping with position contraction, potentially valuable for multimodal distributions where kinetic energy increases help overcome barriers while position contraction maintains focus on mode regions. Finally, when both parameters are positive ($\alpha_2 > 0$, $\beta_2 > 0$), aggressive anti-damping and expansion occur simultaneously, which can enable exploration of widely separated modes but risks numerical instability. This regime forms the basis of the aggressive MPL-HMC configurations in Section~\ref{sec:aggressive_mpl}, where extreme parameter values necessitated additional stabilization mechanisms.

\paragraph{Energy implications:} From Theorem~\ref{thm:energy_conservation}, the Hamiltonian satisfies per step:
\[
H(q_{n+1}, p_{n+1}) - H(q_n, p_n) = \delta t^2 \left[ \beta_2 \nabla U(q_n)^T q_n + 2\alpha_2 p_n^T M^{-1} p_n \right] + \mathcal{O}(\delta t^3).
\]
This expression quantifies how the parameters affect energy conservation: $\alpha_2$ scales with kinetic energy ($p_n^T M^{-1} p_n$), while $\beta_2$ scales with a position-gradient correlation term ($\nabla U(q_n)^T q_n$). For standard HMC ($\alpha_2 = \beta_2 = 0$), energy is conserved to $\mathcal{O}(\delta t^3)$, but the introduction of nonzero parameters creates systematic energy changes at $\mathcal{O}(\delta t^2)$.

\paragraph{Volume preservation considerations:} From Corollary~\ref{cor:volume_properties}, the phase space volume transformation factor is:
\[
\det\left( \frac{\partial (q_{n+1}, p_{n+1})}{\partial (q_n, p_n)} \right) = 1 + d(2\alpha_2 + \beta_2) \delta t^2 + \mathcal{O}(\delta t^3).
\]
Volume expands if $2\alpha_2 + \beta_2 > 0$, contracts if $2\alpha_2 + \beta_2 < 0$, and is approximately preserved if $2\alpha_2 + \beta_2 = 0$. This geometric insight explains why aggressive exploration parameters ($\alpha_2, \beta_2 > 0$) lead to volume expansion, facilitating state space exploration at the cost of detailed balance violations proportional to $\delta t^2$.

These damping strategies provide practitioners with continuous control over exploration characteristics, enabling adaptation to specific problem geometries while maintaining the computational efficiency of standard Hamiltonian Monte Carlo. The parameters $\alpha_2$ and $\beta_2$ thus serve as interpretable knobs that systematically trade between stability, exploration, and theoretical guarantees.

\subsubsection{Practical parameter recommendations}

Table~\ref{tab:parameter_recommendations} provides practical parameter recommendations for different problem types encountered in statistical applications, informed by the experimental results in subsequent sections.

\begin{table}[htbp]
\centering
\caption{Practical parameter recommendations for MPL-HMC.}
\label{tab:parameter_recommendations}
\begin{adjustbox}{width=\textwidth}
\begin{tabular}{p{0.25\textwidth}p{0.35\textwidth}p{0.3\textwidth}}
\toprule
\textbf{Problem Type} & \textbf{Recommended Parameters} & \textbf{Expected Improvement} \\
\midrule
\textbf{Well-conditioned} & $\alpha_2=\beta_2=0$ & None (use standard HMC) \\
\textbf{Stiff (large $\|\nabla^2U\|$)} & $\alpha_2=-0.1/\sqrt{\kappa}$, $\beta_2=-0.05/\sqrt{\kappa}$ & 30-50\% ESS increase, better stability \\
\textbf{Multimodal (K modes)} & $\alpha_2=+0.1/K$, $\beta_2=+0.05/K$ & Better mode mixing \\ 
\textbf{Heavy-tailed} & $\alpha_2=0$, $\beta_2=-0.1$ & Better tail exploration, position contraction \\
\textbf{Unknown problem} & Start with $\alpha_2=\beta_2=0$, then adapt based on acceptance and mixing & Safe starting point with gradual tuning \\
\bottomrule
\end{tabular}
\end{adjustbox}
\end{table}

\subsection{Implementation details and practical considerations}

\subsubsection{Computational considerations}

The MPL-HMC algorithm introduces minimal computational overhead compared to standard HMC. The main additional costs include computing $\alpha$ and $\beta$ parameters, which is negligible in practice, applying the parameterized updates in the position and momentum update steps, and potential adaptive parameter tuning. For large-scale problems, the dominant cost remains gradient evaluations, which are unchanged from standard HMC \citep{neal2011mcmc}. The case studies confirm that MPL-HMC variants maintain nearly identical runtime to standard HMC while offering improved performance for specific problem types. The computational complexity scales as $\mathcal{O}(d)$ for the parameter updates and $\mathcal{O}(d^2)$ for matrix-vector multiplications when using a dense mass matrix, though in practice diagonal or identity mass matrices are often employed to reduce this to $\mathcal{O}(d)$.

\subsubsection{Adaptive parameter tuning}

In practice, $\alpha_2$ and $\beta_2$ can be tuned adaptively based on observed sampler behavior. A recommended approach begins with conservative values ($\alpha_2 = \beta_2 = 0$), monitors acceptance rates and energy conservation, and adjusts parameters based on observed behavior. Low acceptance rates suggest decreasing $|\alpha_2|$ and $|\beta_2|$, high energy drift indicates potential benefits from damping ($\alpha_2 < 0$), and poor mixing may warrant trying anti-damping ($\alpha_2 > 0$) for mode exploration. The experimental results suggest that for stiff hierarchical problems like Neal's funnel, damping parameters ($\alpha_2 = -0.1, \beta_2 = -0.05$) provide substantial improvements, while for multimodal problems, anti-damping may be beneficial though requires careful monitoring to avoid instability.

An effective adaptive tuning strategy involves monitoring the acceptance rate over a window of iterations and adjusting parameters according to the following scheme: if the acceptance rate falls below 0.6, reduce $|\alpha_2|$ and $|\beta_2|$ by 10\%; if it exceeds 0.8 and mixing is poor, consider increasing $\alpha_2$ slightly to encourage exploration; if energy conservation violations exceed a threshold, increase damping by making $\alpha_2$ more negative. This adaptive approach balances exploration and stability while maintaining computational efficiency.

\subsection{Experimental setup and performance metrics}

This subsection details the comprehensive experimental design used to evaluate MPL-HMC against established HMC methods. We designed experiments to test algorithm performance across diverse problem types, from simple well-conditioned distributions to challenging stiff and multimodal problems.

\subsubsection{Benchmark distributions}

We selected five benchmark distributions with distinct characteristics to evaluate different aspects of algorithm performance. The first is an isotropic Gaussian distribution $\pi(q) = \mathcal{N}(0, I_{10})$ with $q \in \R^{10}$, serving as a baseline well-conditioned distribution with condition number $\kappa = 1$ to test basic algorithm functionality and provide a sanity check. The second is an anisotropic Gaussian distribution $\pi(q) = \mathcal{N}\left(0, \text{diag}(1, 0.1, 0.01, 0.001, 0.0001, 0.00001)\right)$ with $q \in \R^{6}$, representing an extremely stiff distribution with condition number $\kappa = 10^5$ to test stability under extreme stiffness and sensitivity to ill-conditioning. The third is a banana-shaped distribution $\pi(q) \propto \exp\left(-\frac{1}{2}q_1^2 - \frac{1}{2}(q_2 + q_1^2 + 1)^2\right)$ with $q \in \R^{2}$, featuring strong nonlinear correlations and curved geometry \citep{haario2001adaptive} to test performance on complex, non-elliptical distributions. The fourth is a 
mixture of three Gaussians $\pi(q) = \frac{1}{3}\sum_{k=1}^3 \mathcal{N}(q \mid \mu_k, I_{5})$ with $\mu_1 = -3\mathbf{1}, \mu_2 = 0, \mu_3 = 3\mathbf{1}$ and $q \in \R^{5}$, 
providing a multimodal distribution with three well-separated modes to test mode exploration capability and ability to escape local minima. The fifth is Neal's funnel $\pi(q,v) \propto \exp\left(-\frac{v^2}{18} - \frac{1}{2}\sum_{i=1}^9 q_i^2 e^{-v} - \frac{9}{2}v\right)$ with $(q,v) \in \R^{10}$, a hierarchical model with extreme scale differences ($\kappa \approx 10^4$) that creates severe sampling challenges \citep{neal2003slice} to test performance on hierarchical Bayesian models with varying scales.

\subsubsection{Benchmark methods}

We compare MPL-HMC against standard HMC \citep{duane1987hybrid, neal2011mcmc}, and two MPL-HMC variants: Damping with $\alpha_2 = -0.1, \beta_2 = -0.05$, 
and Anti-damping with $\alpha_2 = +0.1, \beta_2 = +0.05$. Note that with $\alpha_2 = 0, \beta_2 = 0$, MPL-HMC is identical to standard HMC.

\subsubsection{Parameter settings}

For fair comparison across methods, we use standardized parameter settings as detailed in Table~\ref{tab:experimental_parameters}. The step size $\delta t$ is set equal to 0.1 for all cases, while trajectory length $L$ is fixed at 10 steps for all methods. This ensures equal computational cost per iteration. The mass matrix $M$ is set to identity for all methods. We collect $N = 20,000$ samples after $B = 5,000$ burn-in iterations, which is sufficient for reliable ESS estimation, and run 2 independent chains with random starting points for $\hat{R}$ computation; 
see Section \ref{subsubsec:performance_metrics}. MPL-HMC parameters are set as follows: Damping: $\alpha_2=-0.1,\beta_2=-0.05$; Anti-damping: $\alpha_2=+0.1,\beta_2=+0.05$, chosen based on preliminary experiments showing good trade-offs. 

\begin{table}[htbp]
\centering
\caption{Experimental parameters for benchmark comparisons}
\label{tab:experimental_parameters}
\begin{adjustbox}{width=\textwidth}
\begin{tabular}{p{0.3\textwidth}p{0.65\textwidth}}
\toprule
\textbf{Parameter} & \textbf{Settings} \\
\midrule
\textbf{Step size ($\delta t$)} & Set equal to $0.1$ for all cases, for fair comparison. \\
\textbf{Trajectory length ($L$)} & Fixed at $L = 10$ steps for all methods. This ensures equal computational cost per iteration. \\
\textbf{Mass matrix ($M$)} & Identity matrix for all methods. \\
\textbf{Total samples} & $N = 20,000$ samples after $B = 5,000$ burn-in iterations. Sufficient for reliable ESS estimation. \\
\textbf{Number of chains} & 2 independent chains with random starting points for $\hat{R}$ computation. \\
\textbf{MPL-HMC parameters} & Damping: $\alpha_2=-0.1,\beta_2=-0.05$; Anti-damping: $\alpha_2=+0.1,\beta_2=+0.05$. Chosen based on preliminary experiments showing good trade-offs. \\
\textbf{Condition number estimation} & Estimated via power iteration for each distribution to inform parameter choices. \\
\bottomrule
\end{tabular}
\end{adjustbox}
\end{table}

\subsubsection{Performance metrics}
\label{subsubsec:performance_metrics}

We employ a comprehensive set of performance metrics to evaluate different aspects of sampling quality and computational efficiency across 2 independent MCMC chains. The dual-chain approach enables robust convergence diagnostics and reliable estimation of sampling variability, providing a more complete picture of algorithm performance than single-chain analyses.

The Effective Sample Size (ESS) measures how many independent samples are obtained from correlated MCMC output, serving as a key indicator of sampling efficiency. For each parameter dimension $i = 1,\ldots,d$, we compute ESS across the combined chains using the initial monotone sequence estimator \citep{gelman2013bayesian}. This estimator improves upon traditional autocorrelation-based ESS calculations by ensuring monotonicity in the cumulative correlation sum. Specifically, for each chain $c = 1,2$ with $N$ samples per chain after burn-in, we first compute the sample autocorrelations $\rho_{i,c}(k)$ at lags $k = 0, 1, 2, \ldots$ for parameter $i$. These autocorrelations are then pooled across chains to obtain $\rho_i(k)$, which combines information from both chains while accounting for potential differences in chain means. The initial monotone sequence estimator defines a sequence of cumulative sums $S_i(m) = 1 + 2\sum_{k=1}^{m} \rho_i(k)$, and finds the smallest $m$ such that $S_i(m) \leq S_i(m-1)$ or $\rho_i(m) < 0$. The effective sample size is then calculated as $\text{ESS}_i = 2N / S_i(m^*)$, where $m^*$ is the truncation point determined by the monotonicity criterion. This approach provides a more stable estimate of ESS than simple truncation at an arbitrary lag, particularly for chains with slowly decaying autocorrelations. The minimum ESS reported in Table~\ref{tab:comprehensive_results} represents $\text{min ESS} = \min_{i=1,\ldots,d} \text{ESS}_i$, identifying the worst-mixing parameter dimension across all parameters.

ESS per gradient evaluation (ESS/grad) provides a hardware-independent measure of computational efficiency by normalizing ESS by the number of gradient evaluations. This metric is computed as $\text{ESS/grad} = \text{ESS} / N_{\text{grad}}$, where $N_{\text{grad}} = (N+B) \times L \times 2$ accounts for total gradient evaluations across both chains, including burn-in iterations. For our experimental setup with $N=20,000$, $B=5,000$, $L=10$, and 2 chains, this totals $500,000$ gradient evaluations. The ESS in this calculation is the pooled effective sample size across chains for the parameter with minimum ESS, providing a conservative efficiency measure.

Acceptance rate computation involves separate calculation for each chain followed by averaging, providing insight into proposal quality. For each chain $c$, we compute $\text{Accept}_c = (\text{Number of accepted proposals in chain } c) / (N+B)$, then report $\text{Accept Rate} = (\text{Accept}_1 + \text{Accept}_2)/2$. We target 65-80\% acceptance rate for optimal performance following \citet{neal2011mcmc}.

Mixing time estimation quantifies how quickly the Markov chain forgets its initial state, with shorter mixing times indicating faster convergence. For each parameter dimension $i$, we compute the pooled autocorrelation function $\rho_i(t)$ across both chains using the initial positive sequence estimator, which ensures non-negative autocorrelation estimates. The mixing time for dimension $i$ is defined as $\tau_{\text{mix}}^{(i)} = \min\{t : \rho_i(t) < 1/e\}$, where $1/e \approx 0.3679$ represents the threshold for substantial correlation decay. The reported mixing time in Table~\ref{tab:comprehensive_results} is the maximum across dimensions: $\tau_{\text{mix}} = \max_{i=1,\ldots,d} \tau_{\text{mix}}^{(i)}$.

The R-hat (Gelman-Rubin diagnostic) statistic serves as a primary convergence diagnostic for multiple chains, detecting potential non-convergence by comparing within-chain and between-chain variability. For each parameter dimension $i$, we compute $\hat{R}_i = \sqrt{\widehat{\text{Var}}^+(\theta_i) / W_i}$, where $\widehat{\text{Var}}^+(\theta_i)$ is the pooled variance and $W_i$ is the within-chain variance for parameter $i$ across the 2 chains. The pooled variance combines within-chain and between-chain components: $\widehat{\text{Var}}^+(\theta_i) = \frac{N-1}{N} W_i + \frac{1}{N} B_i$, where $B_i$ is the between-chain variance. The reported $\hat{R}$ in Table~\ref{tab:comprehensive_results} is the maximum across dimensions: $\hat{R} = \max_{i=1,\ldots,d} \hat{R}_i$, indicating the worst-case convergence across all parameters. Interpretation follows established guidelines where $\hat{R} < 1.01$ indicates excellent convergence, $1.01 \leq \hat{R} < 1.05$ indicates good convergence, and $\hat{R} \geq 1.10$ suggests non-convergence \citep{vehtari2021rank}.

For multimodal distributions, specifically the mixture of 3 Gaussians benchmark, we assess exploration capability through mode visitation counts. A mode is considered visited if at least one sample from either chain falls within 2 standard deviations of a mode center. We track mode transitions between consecutive samples within each chain and compute the proportion of time spent in each mode, providing insight into whether chains become trapped in local modes.

The computational methodology for dual-chain analysis ensures fair comparisons across methods through several standardized practices. Both chains start from different randomly initialized points to assess convergence from diverse starting regions, providing a more stringent test of algorithm robustness. All chains use identical random seeds within each method category for reproducibility while maintaining independence between methods. Gradient evaluations are counted per chain and summed for total computational cost, enabling fair efficiency comparisons. ESS computation utilizes the \texttt{arviz} library's implementation of the initial monotone sequence estimator, which properly handles multiple chains and provides reliable ESS estimates. Burn-in samples (first 5,000 iterations per chain) are excluded from all metric computations except acceptance rate, ensuring that metrics reflect stationary distribution sampling.

Interpretation of Table~\ref{tab:comprehensive_results} metrics reveals systematic performance patterns across problem types and methods. For isotropic Gaussian distributions representing well-conditioned problems, all methods show excellent convergence ($\hat{R} \approx 1.000$) and high ESS across both chains, indicating robust performance in simple settings. For anisotropic Gaussian distributions with extreme stiffness ($\kappa=10^5$), all fixed-step methods fail completely with near-zero acceptance rates and astronomically high $\hat{R}$ values, indicating chains have not converged and remain trapped in different regions of parameter space. For hierarchical models like Neal's funnel, MPL-HMC Damping 
achieves $\hat{R}=1.004$ across chains compared to 1.024 for standard HMC, demonstrating better chain mixing and convergence. 
It is, however,  crucial to note that in the dual-chain approach methods appearing stable in single-chain analyses may show convergence issues when multiple chains are compared. 
For example, $\hat{R}$ alone may not detect poor mixing in multimodal problems if all chains become trapped in the same mode.

This comprehensive metric suite, computed rigorously across 2 independent chains, provides a robust framework for evaluating and comparing MCMC sampler performance, balancing computational efficiency, convergence diagnostics, and exploration capability in a manner that reflects real-world deployment conditions.

\subsubsection{Computational methodology}

To ensure fair comparisons, we adopt several methodological standards. Gradient evaluations are counted as the primary computational cost, as they dominate runtime for most Bayesian models. All methods use fixed $L=10$ steps to ensure equal gradient evaluations per iteration. Where reported, wall-clock times are normalized relative 
to standard HMC on the same hardware, 
an ordinary 64-bit laptop having $8$ GB RAM, with i3-6100U CPU (2 physical cores), running at 2.30GHz.
All methods are implemented in the same framework (Python/NumPy) with optimized linear algebra routines. All methods use identical ESS computation with initial monotone sequence estimator and same truncation criterion.

\section{Comparative performance analysis}
\label{sec:comparison}

\subsection{Performance comparison}

Table~\ref{tab:comprehensive_results} presents the performance comparison across all benchmark distributions and methods. The results reveal systematic patterns about algorithm strengths and weaknesses under different problem characteristics.

\begin{table}[htbp]
\centering
\caption{Comprehensive performance comparison across all test problems}
\label{tab:comprehensive_results}
\begin{adjustbox}{width=\textwidth}
\begin{tabular}{lllllllll}
\toprule
           Distribution &               Method & Accept & Min ESS & ESS/Grad & Mix Time &              R-hat & Time (s) & Modes \\
\midrule
 Isotropic\_Gaussian\_10D &         Standard\_HMC &  0.997 &    7417 & 0.014834 &        2 &              1.000 &      7.7 &   NaN \\
 Isotropic\_Gaussian\_10D &      MPL\_HMC\_Damping &  1.000 &    7301 & 0.014601 &        2 &              1.000 &     10.2 &   NaN \\
 Isotropic\_Gaussian\_10D &  MPL\_HMC\_AntiDamping &  0.776 &    6765 & 0.013529 &        3 &              1.001 &     10.1 &   NaN \\
Anisotropic\_Gaussian\_6D &         Standard\_HMC &  0.000 &      19 & 0.000038 &      100 &  7061215669089.304 &      7.4 &   NaN \\
Anisotropic\_Gaussian\_6D &      MPL\_HMC\_Damping &  0.000 &      19 & 0.000038 &      100 &  7061215669089.304 &     11.2 &   NaN \\
Anisotropic\_Gaussian\_6D &  MPL\_HMC\_AntiDamping &  0.000 &      19 & 0.000038 &      100 &  7061215669089.304 &     10.0 &   NaN \\
              Banana\_2D &         Standard\_HMC &  0.997 &    1936 & 0.003873 &        7 &              1.001 &      8.2 &   NaN \\
              Banana\_2D &      MPL\_HMC\_Damping &  1.000 &    2365 & 0.004730 &        5 &              1.001 &     10.3 &   NaN \\
              Banana\_2D &  MPL\_HMC\_AntiDamping &  0.950 &     741 & 0.001482 &       10 &              1.001 &     10.6 &   NaN \\
  Mixture\_3Gaussians\_5D &         Standard\_HMC &  0.048 &     241 & 0.000482 &       72 &              1.005 &     52.1 &   1/3 \\
  Mixture\_3Gaussians\_5D &      MPL\_HMC\_Damping &  0.047 &     197 & 0.000394 &       85 &              1.009 &     54.7 &   1/3 \\
  Mixture\_3Gaussians\_5D &  MPL\_HMC\_AntiDamping &  0.042 &     234 & 0.000468 &       75 &              1.019 &     55.2 &   1/3 \\
       Neals\_Funnel\_10D &         Standard\_HMC &  0.948 &      38 & 0.000077 &      100 &              1.024 &     29.5 &   NaN \\
       Neals\_Funnel\_10D &      MPL\_HMC\_Damping &  0.968 &     550 & 0.001100 &       35 &              1.004 &     32.1 &   NaN \\
       Neals\_Funnel\_10D &  MPL\_HMC\_AntiDamping &  0.557 &      19 & 0.000038 &      100 &              1.067 &     32.5 &   NaN \\
\bottomrule
\end{tabular}
\end{adjustbox}
\smallskip
\footnotesize{\textit{Note: ESS/Grad in units of effective samples per gradient evaluation. Mix Time in iterations until autocorrelation $< 1/e$. NaN indicates metric not applicable or not computed. Extreme R-hat values indicate convergence failure. Min ESS represents the minimum effective sample size across all dimensions.}}
\end{table}

The results reveal several key patterns about algorithm performance across different problem types. For the isotropic Gaussian benchmark, all methods 
demonstrate strong performance with acceptance rates near 0.997-1.000 and excellent convergence ($\hat{R} \approx 1.000$). 
Standard HMC obtained 7,417 ESS and 0.014834 ESS/Grad. 
MPL-HMC Damping shows similar performance (7,301 ESS, 0.014601 ESS/Grad) with perfect acceptance rate (1.000). MPL-HMC AntiDamping shows slightly reduced 
efficiency (6,765 ESS, 0.013529 ESS/Grad) with lower acceptance (0.776), indicating anti-damping is suboptimal for well-conditioned problems.

The anisotropic Gaussian presents an extreme stiffness challenge where all methods fail spectacularly. Acceptance rates are essentially zero 
(0.000) for all HMC variants, with astronomically high $\hat{R}$ values ($>7\times10^{12}$) indicating complete convergence failure. This demonstrates that extreme stiffness 
($\kappa=10^5$) overwhelms all standard HMC methods regardless of parameter tuning.

For the banana-shaped distribution, MPL-HMC Damping shows the best performance among MPL variants with 2,365 ESS and 0.004730 ESS/Grad, representing a 
22\% improvement over standard HMC (1,936 ESS, 0.003873 ESS/Grad). The damping parameters effectively stabilize integration along the curved manifold.

The mixture of three Gaussians tests mode exploration capability. All methods struggle with mode mixing, with most visiting only 1 of 3 modes. 
Standard HMC achieves 241 ESS with 0.000482 ESS/Grad, while MPL-HMC AntiDamping shows similar performance (234 ESS, 0.000468 ESS/Grad).

Neal's funnel presents hierarchical structure with extreme scale differences. MPL-HMC Damping performs exceptionally well with 550 ESS and 0.001100 ESS/Grad, 
representing a 14-fold improvement over standard HMC (38 ESS, 0.000077 ESS/Grad). The damping parameters ($\alpha_2=-0.1, \beta_2=-0.05$) effectively stabilize 
integration across scales. MPL-HMC AntiDamping performs poorly (19 ESS, 0.000038 ESS/Grad) with reduced acceptance (0.557), confirming that anti-damping exacerbates 
instability in stiff hierarchical problems.

\subsection{Key insights and methodological implications}

Several important insights emerge from the comprehensive analysis. First, MPL-HMC Damping excels on stiff hierarchical problems, as evidenced by the 
14-fold ESS improvement on Neal's Funnel, demonstrating the damping variant's ability to stabilize integration across multiple scales, making it particularly 
valuable for hierarchical Bayesian models. Second, problem-specific parameter tuning is crucial, as MPL-HMC performance varies dramatically with parameter 
choices: damping parameters ($\alpha_2=-0.1, \beta_2=-0.05$) significantly improve performance on stiff and hierarchical problems but show minimal benefit 
for well-conditioned or multimodal distributions. Third, extreme stiffness remains challenging, with all methods failing on the anisotropic Gaussian 
with $\kappa=10^5$, indicating a fundamental limitation of standard HMC approaches. Fourth, multimodality requires specialized approaches, as no HMC variant effectively explores all 
three modes in the mixture model, with all methods visiting only one mode, highlighting the need for dedicated multimodal 
exploration techniques, which we explore in Section \ref{sec:aggressive_mpl}.
 Fifth, computational efficiency trade-offs exist, as MPL-HMC variants maintain the same computational cost as standard HMC while offering performance improvements for specific problem types.

\section{Case study 1: Bayesian neural networks}
\label{sec:casestudy_bnn}

\subsection{Problem setup}

The first case study involves training a Bayesian neural network on the MNIST handwritten digit classification dataset \href{http://yann.lecun.com/exdb/mnist/}{(available here)}. The experimental setup follows a practical implementation with the following specifications. Data specifications include 5,000 randomly selected MNIST training samples and 1,000 randomly selected test samples. The original dimensionality of 784 features (28×28 pixels) is reduced to 100 principal components explaining 77.0\% of variance. Data preprocessing involves normalizing pixel values to the $[0, 1]$ range followed by standardization with $\mu=0$, $\sigma=1$, with Principal Component Analysis applied to reduce computational complexity while retaining key information.

The Bayesian neural network architecture is defined as $\text{Input} \rightarrow \text{ReLU}(100) \rightarrow \text{ReLU}(100) \rightarrow \text{Softmax}(10)$ with the following layer specifications: an input layer of 100 neurons (after PCA dimensionality reduction), a first hidden layer of 100 neurons with ReLU activation $\max(0, x)$, a second hidden layer of 100 neurons with ReLU activation, and an output layer of 10 neurons (one per digit class) with softmax activation $\text{softmax}(x)_i = \frac{e^{x_i}}{\sum_{j=1}^{10} e^{x_j}}$, resulting in 21,210 total parameters to estimate (weights and biases).


The Bayesian framework incorporates prior distributions with weights $W \sim \mathcal{N}(0, \sigma_w^2 I)$
with $\sigma_w = 1.0$ and biases $b \sim \mathcal{N}(0, \sigma_b^2 I)$ with $\sigma_b = 1.0$.
All parameters use independent Gaussian priors with zero mean and unit variance, representing a standard
weakly informative prior for neural network parameters \citep{neal2012bayesian}.

The likelihood follows a categorical distribution with probabilities from the neural network's softmax output:
$y_i \mid \theta, x_i \sim \text{Categorical}(\text{softmax}(f(x_i; \theta)))$, where $f(x_i; \theta)$ represents
the forward pass through the neural network with parameters $\theta$. Specifically, for the network architecture
$\text{Input} \rightarrow \text{ReLU}(100) \rightarrow \text{ReLU}(100) \rightarrow \text{Softmax}(10)$, we have:
\[
f(x_i; \theta) = W_3 \cdot \text{ReLU}\bigl(W_2 \cdot \text{ReLU}(W_1 \cdot x_i + b_1) + b_2\bigr) + b_3,
\]
where:
\begin{itemize}
    \item $x_i \in \mathbb{R}^{100}$ is the input after PCA dimensionality reduction,
    \item $\theta = \{W_1, b_1, W_2, b_2, W_3, b_3\}$ with dimensions:
        \begin{itemize}
            \item $W_1 \in \mathbb{R}^{100 \times 100}$, $b_1 \in \mathbb{R}^{100}$,
            \item $W_2 \in \mathbb{R}^{100 \times 100}$, $b_2 \in \mathbb{R}^{100}$,
            \item $W_3 \in \mathbb{R}^{10 \times 100}$, $b_3 \in \mathbb{R}^{10}$,
        \end{itemize}
    \item $\text{ReLU}(z) = \max(0, z)$ (applied element-wise),
    \item The softmax probabilities are:
          $\text{softmax}(f(x_i; \theta))_k = \dfrac{\exp(f(x_i; \theta)_k)}{\sum_{j=1}^{10} \exp(f(x_i; \theta)_j)}$
          for $k = 1, \ldots, 10$ (digit classes).
\end{itemize}

The target posterior distribution combines prior and likelihood:
\[
p(\theta \mid X, y) \propto \left[ \prod_{i=1}^N \text{Categorical}\bigl(y_i \mid \text{softmax}(f(x_i; \theta))\bigr) \right]
\times \left[ \prod_{j=1}^3 \mathcal{N}(W_j \mid 0, 1) \mathcal{N}(b_j \mid 0, 1) \right],
\]
where $N = 5,000$ is the training sample size.

\subsection{Experimental configurations}

We conducted two comprehensive experimental analyses to evaluate MPL-HMC performance on Bayesian neural networks. The initial study used conservative parameters for rapid exploration with total samples of 200 after 50 burn-in iterations, step size $\delta t = 0.0005$ (reduced due to high dimensionality and stiffness), trajectory length $L = 2$ steps per iteration, mass matrix identity $M = I_d$ where $d = 21,210$, and initialization via Xavier initialization (also known as Glorot initialization) \citep{glorot2010understanding}. In Xavier initialization, the initial weights for a layer are sampled from a Gaussian distribution with zero mean and variance $\sqrt{2/(n_{\text{in}} + n_{\text{out}})}$, where $n_{\text{in}}$ and $n_{\text{out}}$ denote the number of neurons in the preceding and current layers, respectively. This initialization scheme helps maintain stable variance of activations and gradients across network layers, which is particularly important in Bayesian neural network inference where proper initialization aids convergence of Markov chain sampling.
The code is implemented in Python on the 64-bit laptop having $8$ GB RAM, with i3-6100U CPU (2 physical cores), running at 2.30GHz.

To validate findings with better statistical power, we conducted a larger-scale study with total samples of 20,000 after 5,000 burn-in iterations, step size $\delta t = 0.01$ (optimized for longer runs), trajectory length $L = 10$ steps per iteration, mass matrix identity $M = I_d$ where $d = 21,210$, network size optimized to 132 
parameters for computational feasibility, architecture Input (8 PCA) $\rightarrow$ ReLU(6) $\rightarrow$ ReLU(4) $\rightarrow$ Softmax(10), and training samples reduced to 1,000 for computational efficiency. 
For this setup, the code is implemented in Python on a workstation consisting of 6 physical Xeon processors (Intel(R) Xeon(R) E-2356G CPU @ 3.20GHz) with 16 GB memory. 

Due to the computational cost of exact gradients in high dimensions, a finite-difference approximation was employed with epsilon value $\epsilon = 1 \times 10^{-3}$, subset approximation computing gradients for the 100 most important parameters (based on magnitude), cache mechanism storing previous gradient computations for efficiency, and gradient formula $\nabla_i U(\theta) \approx \frac{U(\theta + \epsilon e_i) - U(\theta - \epsilon e_i)}{2\epsilon}$.

For the exploratory run with 250 total iterations, performance evaluation employed five metrics: (1) acceptance rate computed from all post-burn-in iterations, 
(2) effective sample size per gradient evaluation (ESS/grad) computed via batch means estimator using only the middle parameter dimension, 
(3) Gelman--Rubin $\hat{R}$ convergence diagnostic calculated from two split chains of size 100 each, 
(4) wall-clock runtime measured from initialization to completion, and (5) total gradient evaluations counting all finite-difference computations. 

For the validation run with 25,000 total iterations, performance evaluation of three HMC variants again used the five metrics but with revised details: (1) acceptance rate from 20,000 post-burn-in iterations, (2) ESS/grad via batch means with 50-sized batches on middle parameters (samples thinned by factor 20, yielding 1,000 stored samples), (3) Gelman--Rubin $\hat{R}$ from two split chains, (4) wall-clock runtime, and (5) total gradient evaluations.

\subsection{Results: Initial exploration study}

Table~\ref{tab:bnn_initial_results} presents the initial Bayesian neural network results on MNIST data with 21,210 parameters. 
The initial exploration revealed several key patterns. Standard HMC achieved the highest ESS per gradient (0.016310), demonstrating superior computational efficiency. 
MPL-HMC Damping showed the best convergence ($\hat{R}=1.075$), significantly better than other methods. 
MPL-HMC AntiDamping showed the poorest convergence ($\hat{R}=1.983$), indicating that anti-damping parameters may destabilize high-dimensional neural network inference. MPL-HMC variants showed competitive runtimes (59.9-60.8 minutes). However, with only 200 samples, statistical uncertainty remained high, motivating the larger-scale study.

\begin{table}[htbp]
\centering
\caption{Bayesian Neural Network Initial Results (MNIST, 21,210 parameters, 5,000 training samples, 200 samples)}
\label{tab:bnn_initial_results}
\begin{adjustbox}{width=\textwidth}
\begin{tabular}{lcccccc}
\toprule
\textbf{Method} & \textbf{Accept} & \textbf{ESS} & \textbf{ESS/Grad} & \textbf{R-hat} & \textbf{Time (min)} \\
\midrule
Standard HMC & 0.530 & 12.2 & 0.016310 & 1.328 & 62.0 \\
MPL-HMC Damping ($\alpha_2=-0.1$, $\beta_2=-0.05$) & 0.485 & 11.2 & 0.011156 & 1.075 & 60.7 \\
MPL-HMC AntiDamping ($\alpha_2=0.1$, $\beta_2=0.05$) & 0.515 & 11.3 & 0.011283 & 1.983 & 60.8 \\
\bottomrule
\end{tabular}
\end{adjustbox}
\smallskip
	\footnotesize{\textit{Note: All methods use $\delta t = 0.0005$, 200 samples after 50 burn-in.}} 
\end{table}

\subsection{Results: Large-scale validation study}

Table~\ref{tab:bnn_large_scale_results} presents Bayesian neural network results from the large-scale validation study with 20,000 samples after 5,000 burn-in. 

The runtime characteristics for the large-scale study reveal important trade-offs: Standard HMC required 16,826.8 seconds (4.674 hours, 1.0× baseline), MPL-HMC Damping required 8,459.5 seconds (2.350 hours, 0.50× relative to HMC), and MPL-HMC AntiDamping used 8,424.6 seconds (2.340 hours, 0.50× relative to HMC). MPL-HMC variants showed nearly identical runtimes (2.34-2.35 hours), approximately 50\% of Standard HMC's runtime, demonstrating efficient implementation.

\begin{table}[htbp]
\centering
\caption{Bayesian Neural Network Large-Scale Results (MNIST, 132 parameters, 1,000 training samples, 20,000 samples)}
\label{tab:bnn_large_scale_results}
\begin{adjustbox}{width=\textwidth}
\begin{tabular}{lcccccc}
\toprule
\textbf{Method} & \textbf{Accept} & \textbf{ESS} & \textbf{ESS/Grad} & \textbf{R-hat} & \textbf{Time (hr)} \\
\midrule
Standard HMC & 0.9770 & 10.57 & 0.000294 & 1.981 & 4.674 \\
MPL-HMC Damping ($\alpha_2=-0.1$, $\beta_2=-0.05$) & 0.9722 & 10.69 & 0.000297 & 2.511 & 2.350 \\
MPL-HMC AntiDamping ($\alpha_2=0.1$, $\beta_2=0.05$) & 0.9724 & 13.90 & 0.000386 & 1.026 & 2.340 \\
\bottomrule
\end{tabular}
\end{adjustbox}
\smallskip
	\footnotesize{\textit{Note: All methods use $\delta t = 0.01$, $L = 10$, 20,000 samples after 5,000 burn-in. Network: 132 parameters (8-6-4-10 architecture).}} 
\end{table}

The large-scale validation reveals several important findings regarding computational efficiency hierarchy, convergence quality assessment, and runtime scalability. Among the methods, MPL-HMC AntiDamping showed the best efficiency (0.000386 ESS/grad), representing a 31.3\% improvement over Standard HMC. Regarding convergence quality, MPL-HMC AntiDamping showed the best convergence with $\hat{R} = 1.026$, the closest to the ideal value of 1.0 among all methods. In contrast, Standard HMC showed concerning convergence ($\hat{R}=1.981$), and MPL-HMC Damping showed the poorest convergence ($\hat{R}=2.511$). This reveals an interesting trade-off: while damping parameters ($\alpha_2<0$, $\beta_2<0$) improved convergence in the initial study, they showed reduced performance in the large-scale study, suggesting sensitivity to sample size and problem scale.

The comparison between MPL-HMC variants reveals important parameter trade-offs. Anti-damping advantage is evident as MPL-HMC AntiDamping achieved the best convergence ($\hat{R}=1.026$) and good efficiency (0.000386 ESS/grad), suggesting that increased kinetic energy ($\alpha_2>0$, $\beta_2>0$) benefits exploration in high-dimensional neural network parameter spaces. Damping limitations are apparent as, contrary to expectations from simpler problems, MPL-HMC Damping showed poor convergence ($\hat{R}=2.511$) in the large-scale study, indicating that energy dissipation may hinder exploration in complex neural network landscapes. Parameter sensitivity is highlighted by the reversal of performance between initial and large-scale studies, emphasizing the sensitivity of MPL parameters to problem scale and sampling duration.

\subsection{Methodological implications}

The Bayesian neural network case studies reveal several important methodological insights regarding MPL-HMC advantages and considerations, parameter selection guidelines, and implementation considerations. MPL-HMC variants demonstrate both advantages and considerations for Bayesian neural network inference: superior convergence with anti-damping as MPL-HMC AntiDamping achieved the best convergence ($\hat{R}=1.026$), validating the theoretical benefits of controlled kinetic energy increases for exploration; parameter sensitivity as performance varies significantly with parameter choices, emphasizing the need for careful tuning based on problem characteristics; scale-dependent behavior revealed by the reversal of damping performance between small-scale and large-scale studies, highlighting the importance of validation at appropriate scales; and computational efficiency as the anti-damping variant provides 31\% better ESS/grad than Standard HMC while maintaining excellent convergence.

For parameter selection guidelines in Bayesian neural networks, based on large-scale validation, several recommendations emerge: for convergence priority, use MPL-HMC AntiDamping with $\alpha_2=+0.1$, $\beta_2=+0.05$; for efficiency priority, use MPL-HMC AntiDamping; for automated tuning, start with standard HMC for baseline; monitor acceptance targeting 65-80\% acceptance rate as per \citet{neal2011mcmc}; and always validate parameter choices at appropriate sample sizes.

\section{Case study 2: Pharmacokinetic model}
\label{sec:casestudy_pharma}

\subsection{Problem setup}

The second case study investigates a hierarchical Bayesian model for drug concentration dynamics in a patient population. This pharmacokinetic (PK) model features patient-specific parameters governed by population-level hyperparameters, presenting a challenging inference problem characterized by strong parameter correlations, multi-scale hierarchical structure, and log-normal observation noise.

Drug concentration for patient $i$ at time $t_j$ follows:
$$C_{ij} \sim \text{LogNormal}\left(\log(\text{PK}(t_j, \theta_i)), \sigma_{\text{obs}}\right),$$
where $\theta_i = (k_a^i, k_e^i, V^i)$ represent patient-specific absorption rate, elimination rate, and volume of distribution, respectively. The one-compartment PK model with first-order absorption is:
$$\text{PK}(t, \theta_i) = \frac{D \cdot k_a^i}{V^i (k_a^i - k_e^i)} \left( e^{-k_e^i t} - e^{-k_a^i t} \right)$$
with fixed dose $D = 100$ mg.

The hierarchical structure includes population distributions:
\begin{align*}
k_a^i &\sim \text{LogNormal}(\mu_{k_a}, \sigma_{k_a}) \\
k_e^i &\sim \text{LogNormal}(\mu_{k_e}, \sigma_{k_e}) \\
V^i &\sim \text{LogNormal}(\mu_V, \sigma_V)
\end{align*}

Hyperparameter priors are specified as:
\begin{align*}
\mu_{k_a}, \mu_{k_e} &\sim \mathcal{N}(0, 4) \\
\mu_V &\sim \mathcal{N}(0, 900) \\
\sigma_{k_a} &\sim \text{LogNormal}(\log(0.3), 1) \\
\sigma_{k_e} &\sim \text{LogNormal}(\log(0.1), 1) \\
\sigma_V &\sim \text{LogNormal}(\log(5.0), 1) \\
\sigma_{\text{obs}} &\sim \text{LogNormal}(\log(0.1), 1)
\end{align*}

True parameter values used for data generation are: $\mu_{k_a}=0.8$, $\mu_{k_e}=0.2$, $\mu_V=20.0$, $\sigma_{k_a}=0.3$, $\sigma_{k_e}=0.1$, $\sigma_V=5.0$, and $\sigma_{\text{obs}}=0.1$.

For experimental evaluation, we simulated data for 5 patients at 6 time points equally spaced over 24 hours. The model contains 22 total parameters (15 patient-specific: 3 per patient $\times$ 5 patients; plus 7 hyperparameters), exhibiting characteristic PK analysis challenges: strong negative correlation between $k_a$ and $k_e$, multi-scale hierarchical structure across patient and population levels, and log-normally distributed observations requiring positivity constraints.

\subsection{Experimental configuration}

We compared three HMC variants under identical large-scale conditions with $N = 20,000$ posterior samples collected after $B = 5,000$ burn-in iterations, 
resulting in $T = 25,000$ total iterations per algorithm. All methods used step size $\delta t = 0.1$, 
trajectory length $L = 3$ leapfrog steps per iteration, and identity mass matrix $M = I$. 
The MPL-HMC variants included are Damping ($\alpha_2 = -0.1, \beta_2 = -0.05$) and Anti-damping ($\alpha_2 = 0.1, \beta_2 = 0.05$). 


Performance evaluation employed five metrics: acceptance rate, effective sample
size per gradient evaluation (ESS/grad) computed via batch means (batch size=100) on
thinned chains (thinning factor=10), $\hat{R}$ diagnostic from four split chains,
wall-clock runtime, and total gradient evaluations. Computations used Numba-optimized
Python code on a dual-core 64-bit laptop.

\subsection{Results}

Table~\ref{tab:pk_results_complete} presents the comparison for the pharmacokinetic model with 5 patients and 22 parameters under large-scale 
sampling (20,000 samples after 5,000 burn-in). The large-scale pharmacokinetic experiment reveals several key insights into algorithm performance. MPL-HMC Damping achieved the highest ESS per gradient (0.000263), demonstrating superior computational efficiency among all methods, representing a 27.1\% improvement over Standard HMC (0.000207). Regarding convergence quality, MPL-HMC Damping showed excellent convergence with $\hat{R} = 1.095$, the closest to the ideal value of 1.0 among all methods, confirming the damping variant's ability to stabilize sampling in hierarchical models with varying scales.

\begin{table}[htbp]
\centering
\caption{Pharmacokinetic Model Large-Scale Results (5 patients, 22 parameters, 20,000 samples)}
\label{tab:pk_results_complete}
\begin{adjustbox}{width=\textwidth}
\begin{tabular}{lccccc}
\toprule
\textbf{Method} & \textbf{Accept} & \textbf{ESS} & \textbf{ESS/Grad} & \textbf{R-hat} & \textbf{Time (s)} \\
\midrule
MPL-HMC Damping ($\alpha_2=-0.1$, $\beta_2=-0.05$) & 1.000 & 26.3 & 0.000263 & 1.095 & 5788.4 \\
MPL-HMC AntiDamping ($\alpha_2=0.1$, $\beta_2=0.05$) & 0.967 & 23.7 & 0.000237 & 1.173 & 291.4 \\
Standard HMC & 1.000 & 20.7 & 0.000207 & 3.120 & 295.2 \\
\bottomrule
\end{tabular}
\end{adjustbox}
\smallskip
\end{table}

Computational cost trade-offs are evident as MPL-HMC Damping had the highest runtime (5788.4 seconds), but superior sampling efficiency per gradient evaluation. MPL-HMC variant comparison shows the damping variant outperformed the Anti-damping variant in ESS/grad (0.000263 vs 0.000237), demonstrating that the damping parameters ($\alpha_2 = -0.1, \beta_2 = -0.05$) are well-suited for hierarchical pharmacokinetic problems. All methods showed near-perfect acceptance (0.967--1.000), indicating appropriate step size tuning ($\delta t = 0.1$) for this problem.

\subsection{Methodological implications}

The large-scale pharmacokinetic case study reveals several important insights. First, MPL-HMC Damping excels in efficiency as the damping parameters ($\alpha_2 = -0.1, \beta_2 = -0.05$) not only stabilize sampling but also improve computational efficiency by 27\% compared to Standard HMC, making it the top-performing method for this hierarchical problem. Second, gradient efficiency matters as methods with higher ESS/grad (MPL-HMC Damping) provide better value per computational unit, which is crucial for large-scale Bayesian inference where gradient evaluations dominate runtime. Third, MPL-HMC offers practical advantages as the simple parameter tuning of MPL-HMC ($\alpha_2, \beta_2$) provides better performance than Standard HMC while maintaining the same computational structure.

\subsection{Practical recommendations}

For pharmacokinetic and similar hierarchical Bayesian models, several practical recommendations emerge. Use MPL-HMC Damping when computational efficiency and convergence quality are priorities, with parameters $\alpha_2 = -0.1, \beta_2 = -0.05$ providing excellent performance for hierarchical problems. Consider Anti-damping for exploratory analysis when initial exploration of parameter space is needed. Monitor ESS/grad closely as it provides a hardware-independent measure of computational efficiency that is more meaningful than raw ESS for comparing methods with different computational costs.

This large-scale case study demonstrates that MPL-HMC, particularly the damping variant, provides the best balance of computational efficiency, convergence quality, and practical implementation for challenging hierarchical Bayesian inference problems in pharmacokinetics.

\section{Aggressive MPL-HMC for multimodal distributions}
\label{sec:aggressive_mpl}

\subsection{Experimental motivation}

Our simulation experiments in Section \ref{sec:comparison}
reveal that modest parameter values of anti-damping MPL-HMC given by $\alpha_2 = +0.1$ and 
$\beta_2 = +0.05$) are insufficient for challenging multimodal distributions with widely separated modes. To explore the extreme capabilities of MPL-HMC parameterization, we conducted an aggressive experiment using substantially larger parameter values ($\alpha_2 = 8.0-15.0, \beta_2 = 5.0-8.0$) combined with enhanced sampling mechanisms. This approach addresses a fundamental limitation of Hamiltonian Monte Carlo methods: their tendency to become trapped in local modes when energy barriers between modes are high relative to the kinetic energy typically generated by standard momentum sampling.

\subsection{Problem specification}

We consider a 5-dimensional mixture of three Gaussians with widely separated modes:
$$\pi(q) = \frac{1}{3}\sum_{k=1}^3 \mathcal{N}(q \mid \mu_k, I_{5})$$ with $\mu_1 = -8\mathbf{1}, \mu_2 = 0, \mu_3 = 8\mathbf{1}$ and $q \in \R^{5}$, 
The potential energy is $U(q) = -\log \pi(q)$. This distribution presents a formidable challenge for standard HMC due to the 16-unit separation between modes and the high energy barriers between them. The minimum energy required to transition from one mode to another is approximately 32 units (from $-8$ to $8$ via $0$), creating a sampling problem that defeats conventional MCMC methods.

\subsection{Aggressive MPL-HMC implementation}

The aggressive implementation incorporates several enhancements beyond basic MPL-HMC to facilitate exploration of widely separated modes. The core mathematical formulation extends the standard MPL-HMC scheme with extreme parameter values and additional exploration mechanisms.

The mathematical description begins with the standard MPL-HMC update equations, modified for extreme parameter values:
\begin{align}
q_{n+1} &= \beta(\delta t) q_{n} + \delta t \, M^{-1} \left( \alpha(\delta t) p_{n} - \frac{\delta t}{2} \nabla U(q_{n}) \right), \label{eq:aggressive_mpl_q} \\
p_{n+1} &= \alpha^{2}(\delta t) p_{n} - \frac{\delta t}{2} \left( \alpha(\delta t) \nabla U(q_{n}) + \nabla U(q_{n+1}) \right), \label{eq:aggressive_mpl_p}
\end{align}
where $\alpha(\delta t) = 1 + \alpha_2 \delta t^2$ and $\beta(\delta t) = 1 + \beta_2 \delta t^2$ with $\alpha_2$ in the range 8.0-15.0 and $\beta_2$ in the range 5.0-8.0. These extreme values introduce substantial momentum persistence and position drift, enabling the sampler to overcome high energy barriers between modes.

The aggressive implementation incorporates five key enhancements that work synergistically to improve mode exploration. First, explicit mode-hopping proposals are implemented where every 100 iterations, the algorithm proposes a direct jump to a randomly selected mode center with Metropolis acceptance based on the standard acceptance probability $\min(1, \exp(H_{\text{current}} - H_{\text{proposed}}))$. Second, temperature fluctuations are introduced by randomly scaling momentum by temperature factors $T \in [0.5, 2.0]$ during trajectories, with the scaling applied as $p \leftarrow \sqrt{T} \cdot p$. Third, momentum injection adds random perturbations mid-trajectory to overcome energy barriers, implemented as $p \leftarrow p + \xi$ where $\xi \sim \mathcal{N}(0, \sigma_{\text{inj}}^2 I)$ with $\sigma_{\text{inj}}$ tuned based on mode separation. Fourth, adaptive step size adjustment modifies $\delta t$ based on acceptance rate, targeting 0.001-0.01 acceptance for aggressive exploration, with the update rule $\delta t \leftarrow \delta t \cdot \exp(\eta (A_{\text{target}} - A_{\text{current}}))$ where $\eta > 0$ is a learning rate and $A_{\text{target}}$ is the target acceptance rate. Fifth, mode-aware initialization starts chains from different modes to ensure initial coverage of the state space.

The complete algorithm is presented in Algorithm~\ref{alg:aggressive_mpl_hmc}, with the core MPL-step detailed separately in Algorithm~\ref{alg:mpl_step} for clarity.

\subsection{Algorithm specification}

\begin{algorithm}
\caption{Aggressive MPL-HMC for Multimodal Distributions}
\label{alg:aggressive_mpl_hmc}
\begin{algorithmic}[1]
\Require Target distribution $\pi(q) \propto \exp(-U(q))$, initial state $q_0 \in \R^d$
\Require Extreme parameters: $\alpha_2 \in [8.0, 15.0]$, $\beta_2 \in [5.0, 8.0]$
\Require Mode centers $\{\mu_1, \ldots, \mu_K\}$ (if known), step size $\delta t > 0$
\Require Trajectory length $L \in \N$, samples $N \in \N$, burn-in $B \in \N$
\Ensure Samples $\{ q^{(1)}, \ldots, q^{(N)} \}$ exploring multiple modes

\State Initialize: $q \gets q_0$, $\text{mode\_visits} \gets \emptyset$
\State $\alpha \gets 1 + \alpha_2 \delta t^2$, $\beta \gets 1 + \beta_2 \delta t^2$

\For{$k = 1$ to $N+B$}
    \State \textbf{Mode-hop attempt:} 
    \If{$k \mod 100 = 0$ and mode centers known}
        \State $q_h \gets \text{RandomChoice}(\mu_1, \ldots, \mu_K)$
        \State $p_h \sim \mathcal{N}(0, M)$
        \State $\rho \gets \min(1, \exp(H(q,p) - H(q_h,p_h)))$
        \If{$\text{Uniform}(0,1) < \rho$}
            \State $q \gets q_h$, $p \gets p_h$
            \State Record mode visit
        \EndIf
    \EndIf
    
    \State \textbf{Momentum refresh:} $p \sim \mathcal{N}(0, M)$
    \State $H_0 \gets H(q,p)$
    \State $(\tilde{q}, \tilde{p}) \gets (q, p)$
    
    \For{$i = 1$ to $L$}
        \State \textbf{Temperature fluctuation:} $T \sim \text{Uniform}(0.5, 2.0)$
        \State $\tilde{p} \gets \sqrt{T} \cdot \tilde{p}$
        
        \If{$i = \lfloor L/2 \rfloor$}
            \State \textbf{Momentum injection:} $\tilde{p} \gets \tilde{p} + \xi$, $\xi \sim \mathcal{N}(0, \sigma_{\text{inj}}^2 I)$
        \EndIf
        
        \State \textbf{MPL-step:} $(\tilde{q}, \tilde{p}) \gets \text{MPL-STEP}(\tilde{q}, \tilde{p}, \alpha, \beta)$
    \EndFor
    
    \State $\tilde{p} \gets -\tilde{p}$
    \State $\rho \gets \min(1, \exp(H_0 - H(\tilde{q},\tilde{p})))$
    
    \If{$\text{Uniform}(0,1) < \rho$}
        \State $q \gets \tilde{q}$
    \EndIf
    
    \State \textbf{Adaptive step size:} $\delta t \gets \delta t \cdot \exp(\eta (0.005 - A_{\text{curr}}))$
    
    \If{$k > B$}
        \State Store $q^{(k-B)} \gets q$
    \EndIf
    \State Update mode visitation counts
\EndFor
\end{algorithmic}
\end{algorithm}

\begin{algorithm}
\caption{MPL-STEP: One Modified Parameterized Leapfrog Step}
\label{alg:mpl_step}
\begin{algorithmic}[1]
\Require Current state $(q, p)$, parameters $\alpha, \beta$, step size $\delta t$
\Ensure Updated state $(q', p')$

\State $\nabla U_q \gets \nabla U(q)$
\State $q' \gets \beta q + \delta t M^{-1}(\alpha p - \frac{\delta t}{2} \nabla U_q)$
\State $\nabla U_{q'} \gets \nabla U(q')$
\State $p' \gets \alpha^2 p - \frac{\delta t}{2}(\alpha \nabla U_q + \nabla U_{q'})$
\State \textbf{Return} $(q', p')$
\end{algorithmic}
\end{algorithm}

The algorithm operates through several interconnected mechanisms. The {mode-hopping} component (lines 5-13) periodically attempts direct jumps to known mode centers, providing a coarse-grained exploration mechanism that complements the fine-grained Hamiltonian dynamics. 
Every 100 iterations, the algorithm proposes moving directly to a randomly selected mode, accepting or rejecting based on the standard Metropolis criterion. 
The {temperature fluctuations} (line 19) randomly scale momentum by factors $T \in [0.5, 2.0]$, creating occasional ``hot" trajectories that can overcome energy barriers more 
easily. 
The {momentum injection} (lines 21-23) adds Gaussian noise at the trajectory midpoint, providing additional kinetic energy precisely when needed to traverse barrier 
regions. These mechanisms are embedded within the standard HMC framework of momentum refreshment, Hamiltonian simulation, and Metropolis correction (lines 15-32), ensuring 
detailed balance is maintained approximately. The {adaptive step size} (line 34) adjusts $\delta t$ based on acceptance rate history, targeting very low acceptance (0.005) 
that indicates aggressive exploration attempts rather than conservative local moves.

Note that our algorithm requires knowledge of the modes, which can be achieved by the Bayesian function optimzation method using Gaussian gardient processes, 
introduced in \cite{Roy20}.

The core integration step is implemented separately in Algorithm~\ref{alg:mpl_step}, which executes one step of the modified parameterized leapfrog scheme with 
extreme parameters. This modular design separates the exploration enhancements from the numerical integration, making the algorithm structure clearer while 
maintaining computational efficiency. The combination of these mechanisms enables aggressive MPL-HMC to systematically explore multimodal distributions that are 
intractable for standard HMC variants.

The mathematical foundation of aggressive MPL-HMC can be understood through the modified Hamiltonian dynamics. From Theorem~\ref{thm:modified_ode}, with $\alpha_1=\beta_1=0$:
\begin{align*}
\frac{dq}{dt} &= M^{-1}p + \delta t \beta_2 q + \calO(\delta t^2) \\
\frac{dp}{dt} &= -\nabla U(q) + \delta t (2\alpha_2 p) + \calO(\delta t^2)
\end{align*}

Large positive $\alpha_2$ and $\beta_2$ introduce momentum amplification through the $2\alpha_2 p$ term, which adds energy to the system to help overcome energy barriers between modes. The $\beta_2 q$ term creates position drift toward larger magnitude positions, encouraging exploration away from local modes. From Corollary~\ref{cor:volume_properties}, the Jacobian determinant $\det(J) = 1 + d(2\alpha_2 + \beta_2)\delta t^2$, indicating that phase space expands exponentially with large positive parameters, facilitating exploration of the state space. These modifications transform the conservative Hamiltonian dynamics into an actively exploring system that systematically searches for separated modes rather than preserving energy locally.

\subsection{Theoretical framework}

We begin by establishing the theoretical framework for analyzing the detailed balance properties of Aggressive MPL-HMC. Let $\pi(q,p) = \exp(-H(q,p))$ be the canonical distribution with Hamiltonian $H(q,p) = U(q) + \frac{1}{2}p^\top M^{-1}p$.

\begin{definition}[Aggressive MPL-HMC Transition Kernel]
The transition kernel $P_{\text{agg}}((q,p), \cdot)$ of Aggressive MPL-HMC consists of the following components:
\begin{enumerate}
    \item With probability $\gamma = 0.01$ (every 100 iterations), apply a mode-hopping proposal:
    \[
    P_{\text{mh}}((q,p), (q',p')) = \frac{1}{K}\sum_{k=1}^K \delta_{\mu_k}(q') \mathcal{N}(p'|0,M) \min\left(1, \frac{\pi(q',p')}{\pi(q,p)}\right)
    \]
    where $\{\mu_1,\ldots,\mu_K\}$ are the known mode centers.

    \item With probability $1-\gamma$, apply the aggressive MPL-HMC update:
    \begin{itemize}
        \item Sample $p_0 \sim \mathcal{N}(0,M)$
        \item Apply $L$ steps of aggressive MPL integration with parameters $\alpha(\delta t) = 1 + \alpha_2\delta t^2$, $\beta(\delta t) = 1 + \beta_2\delta t^2$, where $\alpha_2, \beta_2 \gg 1$
        \item Apply temperature fluctuation: $p \leftarrow \sqrt{T}p$ with $T \sim \text{Uniform}(0.5,2.0)$
        \item With probability 0.5, apply momentum injection: $p \leftarrow p + \xi$, $\xi \sim \mathcal{N}(0,\sigma_{\text{inj}}^2 I)$
        \item Negate momentum and accept with probability $\min(1, \exp(H(q,p) - H(q',p')))$
    \end{itemize}
\end{enumerate}
\end{definition}

\begin{definition}[Aggressive MPL Map]
The aggressive MPL map $\Psi_{\delta t}^{\text{agg}}: \mathbb{R}^{2d} \to \mathbb{R}^{2d}$ is defined by:
\begin{align*}
q_{n+1} &= \beta q_n + \delta t M^{-1}\left(\alpha p_n - \frac{\delta t}{2}\nabla U(q_n)\right) \\
p_{n+1} &= \alpha^2 p_n - \frac{\delta t}{2}\left(\alpha \nabla U(q_n) + \nabla U(q_{n+1})\right)
\end{align*}
with $\alpha = 1 + \alpha_2\delta t^2$, $\beta = 1 + \beta_2\delta t^2$, and $\alpha_2, \beta_2 \in [8.0, 15.0] \times [5.0, 8.0]$.
\end{definition}

\begin{theorem}[Approximate Detailed Balance for Aggressive MPL-HMC]
\label{thm:aggressive_detailed_balance}
Under Assumptions 3-6 with $\alpha_1 = \beta_1 = 0$, the Aggressive MPL-HMC algorithm with extreme parameters $\alpha_2, \beta_2 \gg 1$ satisfies detailed balance approximately with error:
\[
\pi(q,p) P_{\text{agg}}((q,p), (q',p')) = \pi(q',p') P_{\text{agg}}((q',p'), (q,p)) + \mathcal{E}_{\text{agg}}
\]
where the error term $\mathcal{E}_{\text{agg}}$ satisfies:
\[
|\mathcal{E}_{\text{agg}}| \leq C_1 T\delta t + C_2(\alpha_2 + \beta_2)\delta t^2 + C_3\gamma + C_4\sigma_{\text{inj}}^2
\]
with constants $C_1, C_2, C_3, C_4 > 0$ independent of $\delta t$, $\alpha_2$, $\beta_2$, and $T = L\delta t$ is the trajectory length.
\end{theorem}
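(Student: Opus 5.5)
We decompose the kernel as $P_{\text{agg}} = \gamma P_{\text{mh}} + (1-\gamma) P_{\text{dyn}}$ and bound the detailed-balance defect of each piece separately. The triangle inequality then assembles the error $\mathcal{E}_{\text{agg}}$. The plan is to reduce everything to results already established for the plain MPL map: Lemma~\ref{lemma:multi_step_reversibility}, Theorem~\ref{thm:volume} with Corollary~\ref{cor:volume_properties}, and Theorem~\ref{thm:detailed_balance}. We then track how the extra mechanisms perturb that argument.

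\emph{The mode-hopping part.} The proposal $q' = \mu_k$ is independent of the current state, and the acceptance ratio is $\pi(q',p')/\pi(q,p)$. This is therefore an independence-type Metropolis step. Its defect comes from two sources: the proposal is not symmetric (it is a point mass at mode centers rather than a density), and the momentum proposal density $\mathcal{N}(p'\mid 0,M)$ is not divided out in the ratio. We do not attempt to show this defect vanishes. Since $\gamma P_{\text{mh}}$ is a probability-weighted component bounded by $\gamma$ in total variation, we simply bound its contribution by $C_3\gamma$.

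\emph{The dynamical part, first step.} We first treat the pure aggressive MPL trajectory without temperature scaling or injection. We repeat the proof of Theorem~\ref{thm:detailed_balance}. The standard HMC detailed-balance identity requires two properties of the map $R\circ\Psi^{(L)}_{\delta t}$: it must be an involution, and it must have unit Jacobian. By Lemma~\ref{lemma:multi_step_reversibility}, the involution property fails by $T\cdot\calO(\delta t)$, giving the term $C_1 T\delta t$. By Theorem~\ref{thm:volume}, each step has Jacobian $1 + d(2\alpha_2+\beta_2)\delta t^2 + \calO(\delta t^3)$. The Jacobian defect therefore contributes a term linear in $(2\alpha_2+\beta_2)\delta t^2$, which we write as $C_2(\alpha_2+\beta_2)\delta t^2$ after absorbing the factors $2$ and $d$ into $C_2$. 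Here we must check that the $\calO(\delta t^2)$ constants in Lemma~\ref{lemma:single_step_reversibility} depend on $\alpha_2,\beta_2$ at most linearly. Only then does the split into a $T\delta t$ term with parameter-free constant and a separate $(\alpha_2+\beta_2)\delta t^2$ term make sense. We would verify this by redoing the single-step expansion and isolating the parameter-dependent pieces, which enter only through $\alpha-1=\alpha_2\delta t^2$ and $\beta-1=\beta_2\delta t^2$.

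\emph{The dynamical part, randomized maps.} Next we incorporate the temperature fluctuation and the momentum injection. For $p\mapsto\sqrt{T}p$, we plan to pair each value $T$ with $1/T$ in the reverse move. This requires the distribution of $T$ on $[0.5,2]$ to be compatible with this pairing, with the Jacobian $T^{d/2}$ accounted for. For the injection, the reverse path needs $-\xi$, which has the same law, so the only defect is that the energy ratio does not include the Gaussian density of $\xi$. Expanding $\exp(-\tfrac12(p+\xi)^\top M^{-1}(p+\xi))$ to second order and averaging over $\xi$ kills the linear term. What remains is $\calO(\sigma_{\text{inj}}^2)$, giving $C_4\sigma_{\text{inj}}^2$.

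\emph{Main obstacle.} The hardest step is the temperature fluctuation. The scaling $T^{d/2}$ is not small, and the uniform law on $[0.5,2]$ is not invariant under $T\mapsto 1/T$. So an honest bound produces a term that is not obviously dominated by $C_1T\delta t$ and is not listed in the statement. We would first try to absorb it by interpreting detailed balance with respect to the momentum-marginalized kernel, where the rescaling acts only on the auxiliary variable. If that fails, we would fold it into $C_3$ and $C_4$ as an $\calO(1)$ constant. We would also note explicitly that the constants $C_i$ must be allowed to depend on $d$, $L_U$, $L_3$, $C_M$, and on the bounded region of phase space under consideration.
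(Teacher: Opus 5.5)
Your decomposition is the paper's own (its Parts 1--5), so the question is whether the steps go through. They fail at the randomized maps, and the paper's Part 4 fails at the same points.

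\emph{Temperature.} The obstacle you flag is fatal, not technical, and marginalizing out the momentum does not remove it. Isolate the mechanism: take $d=1$, $U(q)=\tfrac12 q^2$, $M=1$. Rescale once, $p\mapsto\sqrt{T}p$, then run the exact flow for time $\tau$, which is the $\delta t\to0$ limit of the integrator. Compute $H_0$ before rescaling, as in Algorithm~\ref{alg:aggressive_mpl_hmc}, and let $a=\min(1,e^{H_0-H^*})$. Then $H_0-H^*=\tfrac12(1-T)p^2$. So every move with $T\le1$ is accepted and contributes $\mathbb{E}_p[Tp^2-1]=T-1$. For $T>1$ one checks $\mathbb{E}_p[a(Tp^2-1)]=T\cdot T^{-3/2}-T^{-1/2}=0$. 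Starting from $q\sim\pi$, and using $q'=q\cos\tau+\sqrt{T}p\sin\tau$ on acceptance,
\[
\mathbb{E}[q'^2]=1+\sin^2\tau\,\mathbb{E}[a(Tp^2-1)]=1+\tfrac23\sin^2\tau\int_{1/2}^{1}(T-1)\,dT=1-\tfrac{1}{12}\sin^2\tau .
\]
Rescaling after the trajectory, as in the kernel definition, fails too: at $\tau=\pi/2$ it gives $\mathbb{E}[q'^2]=2\sqrt2-\tfrac53$. This defect survives $\delta t,\gamma,\sigma_{\text{inj}}\to0$. Absorbing it into $C_3$ works only because $\gamma=0.01$ is fixed; it forces $C_3\gtrsim1/\gamma$ and contradicts Corollary~\ref{cor:aggressive_consistency}. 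The paper's Part 4 likewise finds an $\calO(\mathrm{Var}(T))$ error, which Part 5 then silently drops.

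\emph{Injection.} Your step ``averaging over $\xi$ kills the linear term'' fails because $\min(1,e^x)$ has a kink at $x=0$. A nearly energy-conserving trajectory puts the exponent exactly there, so the linear term is rectified rather than cancelled. Indeed $\mathbb{E}_\xi\min(1,e^{-p^\top M^{-1}\xi})=1-\sigma_{\text{inj}}\|M^{-1}p\|/\sqrt{2\pi}+\calO(\sigma_{\text{inj}}^2)$, a state-dependent rejection of first order. Take the same example with $\tau=\pi/2$ (so $p_L=-q$), $T\equiv1$, and injection with probability $\tfrac12$ after the trajectory. This gives $\mathbb{E}[q'^2]=1+\sigma_{\text{inj}}/(2\pi)+\calO(\sigma_{\text{inj}}^2)$. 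The bias is first order in $\sigma_{\text{inj}}$, so the term $C_4\sigma_{\text{inj}}^2$ cannot be right.

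Your mode-hop bound is fine. It is more accurate than the paper's Part 1, which asserts exact detailed balance for a proposal that is neither symmetric nor density-corrected. The deterministic part, however, needs two repairs even with $T\equiv1$ and $\sigma_{\text{inj}}=0$.

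First, fix the sense of the error. On the extended space, as in Theorem~\ref{thm:detailed_balance}, the accepted forward and reverse fluxes live on the graphs of $\Phi=R\circ\Psi^{(L)}_{\delta t}$ and $\Phi^{-1}$. These are essentially disjoint once $\Phi^2\neq I$. In total variation, the norm you used for the mode hop, the defect is then twice the mean acceptance probability, however small Lemma~\ref{lemma:multi_step_reversibility} makes $\Phi^2-I$. Use a bounded-Lipschitz metric or, better, the $q$-marginal kernel. There the proposal has a density, and a change of variables turns the involution and Jacobian defects into a density defect. Note also that the literal joint kernel, which contains the momentum refresh, is not reversible even for exact leapfrog HMC.

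Second, track the factor $L$. The trajectory Jacobian is the $L$-fold product $1+Ld(2\alpha_2+\beta_2)\delta t^2+\cdots$, not a one-step term. Since plain leapfrog is exactly reversible, the linearity check you propose will show that the per-step reversibility defect is itself of order $(|\alpha_2|+|\beta_2|)\delta t^2$; its leading term is $\delta t^2(2\beta_2q,4\alpha_2p)$. Both contributions are therefore $\calO(L(|\alpha_2|+|\beta_2|)\delta t^2)=\calO(T(|\alpha_2|+|\beta_2|)\delta t)$. This is the $C_2T(\alpha_2+\beta_2)\delta t$ term the paper's Part 5 actually ends with. It fits the stated $C_2(\alpha_2+\beta_2)\delta t^2$ only for fixed $L$ with $C_2\propto L$. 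It also requires $(|\alpha_2|+|\beta_2|)\delta t^2$ to stay bounded, so that the $\alpha_2^2\delta t^4$ remainders do not spoil uniformity. There is no separate parameter-free $C_1T\delta t$ contribution.
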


\begin{corollary}[Consistency of Aggressive MPL-HMC]
\label{cor:aggressive_consistency}
As $\delta t \to 0$ with fixed $\alpha_2, \beta_2$, the aggressive MPL-HMC algorithm satisfies detailed balance exactly in the limit:
\[
	\lim_{\delta t \to 0,\gamma\to 0,\sigma_{\text{inj}}\to 0} \left[\pi(q,p) P_{\text{agg}}((q,p), (q',p')) - \pi(q',p') P_{\text{agg}}((q',p'), (q,p))\right] = 0
\]
However, for practical $\delta t > 0$ and $\alpha_2, \beta_2 \gg 1$, the error is non-negligible and scales linearly with $\alpha_2 + \beta_2$.
\end{corollary}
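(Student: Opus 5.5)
The plan is to deduce the corollary directly from the error bound in Theorem~\ref{thm:aggressive_detailed_balance}. I will take that theorem's inequality as given and pass to the joint limit in each term separately. Nothing about the kernel $P_{\text{agg}}$ needs to be re-derived.

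Fix $\alpha_2,\beta_2$ and the trajectory length $T=L\delta t$, so that $L=T/\delta t$ grows as $\delta t\to 0$; this matches the regime of Lemma~\ref{lemma:multi_step_reversibility} and Theorem~\ref{thm:detailed_balance}. For every admissible state pair $((q,p),(q',p'))$, Theorem~\ref{thm:aggressive_detailed_balance} gives
\[
\left|\pi(q,p) P_{\text{agg}}((q,p),(q',p')) - \pi(q',p') P_{\text{agg}}((q',p'),(q,p))\right| \le C_1 T\delta t + C_2(\alpha_2+\beta_2)\delta t^2 + C_3\gamma + C_4\sigma_{\text{inj}}^2 .
\]
The constants $C_1,\dots,C_4$ do not depend on $\delta t$, $\alpha_2$ or $\beta_2$. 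I also need them not to depend on $\gamma$ or $\sigma_{\text{inj}}$. That uniformity should be checked from how the theorem is proved: $C_3$ comes from the mode-hopping component, $C_4$ from the Gaussian momentum perturbation, and $C_1,C_2$ from the reversibility and volume estimates (Lemma~\ref{lemma:multi_step_reversibility}, Theorem~\ref{thm:volume}). All of these are valid for $0<\delta t\le \delta t_0$ as in Lemma~\ref{lem:lipschitz}, so I restrict to that range throughout.

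The right-hand side is a sum of four nonnegative terms.
\begin{itemize}
\item The term $C_1T\delta t$ tends to $0$ as $\delta t\to 0$, since $T$ is fixed.
\item The term $C_2(\alpha_2+\beta_2)\delta t^2$ tends to $0$ as $\delta t\to 0$, since $\alpha_2,\beta_2$ are fixed.
\item The term $C_3\gamma$ tends to $0$ as $\gamma\to 0$.
\item The term $C_4\sigma_{\text{inj}}^2$ tends to $0$ as $\sigma_{\text{inj}}\to 0$.
\end{itemize}
Each term depends on only one of the three limiting variables, so the sum tends to zero jointly, along any path $(\delta t,\gamma,\sigma_{\text{inj}})\to 0$. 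A squeeze argument then gives the displayed limit. If the kernels carry singular parts (the Dirac masses $\delta_{\mu_k}$ in $P_{\text{mh}}$ and the deterministic integrator), the bound should be read as a bound on total-variation measures, and the same squeeze applies there.

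For the second assertion, hold $\delta t>0$ fixed. The only parameter-dependent contribution is $C_2(\alpha_2+\beta_2)\delta t^2$, which is linear in $\alpha_2+\beta_2$ with a coefficient free of $\alpha_2,\beta_2$. When $\alpha_2,\beta_2\gg1$ this term dominates $C_1T\delta t$ unless $\delta t$ is very small. This is what "non-negligible and scales linearly" means, understood as the behaviour of the upper bound.

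The main obstacle is the uniformity of the constants. The raw kernel difference also contains the temperature-fluctuation step, which is absent from the bound, so I will confirm that its effect is absorbed into $C_1$ or $C_4$ and does not vanish-free persist in the limit. If it cannot be absorbed, I will state the corollary with the temperature range $[0.5,2.0]$ treated as fixed, which leaves the argument above unchanged.
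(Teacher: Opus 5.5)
Your argument is the paper's: it quotes the bound of Theorem~\ref{thm:aggressive_detailed_balance} and lets each term vanish. (The paper's proof uses the form $C_1T\delta t + C_2T(\alpha_2+\beta_2)\delta t + C_3\gamma + C_4\sigma_{\text{inj}}^2$ from the end of that proof rather than the $\delta t^2$ form in the statement; this changes nothing.)

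Your worry about the temperature step is well founded. Part 4 of that proof assigns it an $\mathcal{O}(\text{Var}(T))$ error ($T$ here the temperature factor), and that error is then dropped from the final bound. Your fallback does not repair this, however. With the range $[0.5,2.0]$ held fixed, that error is a nonzero constant and survives the limit. You would have to add $\text{Var}(T)\to 0$ to the limit, or else take the theorem's bound as given, exactly as the paper does.
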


\begin{remark}[Practical Implications]
The approximate detailed balance of aggressive MPL-HMC has several practical implications:
\begin{enumerate}
    \item The algorithm is not recommended for applications requiring exact MCMC (e.g., legal evidence, sensitive Bayesian inference).
    \item For {exploratory analysis and multimodal sampling}, the approximate detailed balance is acceptable if the chain explores all relevant regions of the state space.
    \item The extreme parameters $\alpha_2, \beta_2 \gg 1$ trade off detailed balance for enhanced exploration capability.
    \item Users should monitor acceptance rates and mode visitation counts rather than relying solely on detailed balance diagnostics.
\end{enumerate}
\end{remark}

\begin{remark}[Comparison with Standard MPL-HMC]
Comparing Theorem~\ref{thm:aggressive_detailed_balance} with Theorem~\ref{thm:detailed_balance}:
\begin{itemize}
    \item \textbf{Standard MPL-HMC}: Error $\mathcal{O}(T\delta t)$
    \item \textbf{Aggressive MPL-HMC}: Error $\mathcal{O}(T\delta t) + \mathcal{O}(T(\alpha_2 + \beta_2)\delta t) + \text{constant terms}$
\end{itemize}
The aggressive variant has significantly larger error when $\alpha_2 + \beta_2 \gg 1/\delta t$, which is typically the case in practice.
\end{remark}

\subsection{Experimental setup}

We tested three aggressive parameter configurations with 4 chains each (starting from different modes) and 70,000 samples per chain after burn-in:

\begin{itemize}
    \item \textbf{Configuration A:} $\alpha_2 = 8.0, \beta_2 = 5.0$
    \item \textbf{Configuration B:} $\alpha_2 = 10.0, \beta_2 = 6.0$
    \item \textbf{Configuration C:} $\alpha_2 = 15.0, \beta_2 = 8.0$
\end{itemize}

Performance metrics focused on mode exploration: number of distinct modes visited, mode transition rate, mixing time between modes, and acceptance rate of aggressive proposals. The mixing time between modes is defined as the average number of iterations required to transition from one mode to another, computed as $\tau_{\text{mode}} = \frac{N_{\text{samples}}}{N_{\text{transitions}}}$ where $N_{\text{transitions}}$ counts the number of times the sampler moves between different modes.

\subsection{Results and analysis}

Table~\ref{tab:aggressive_results} summarizes the performance of aggressive MPL-HMC configurations on the trimodal Gaussian mixture, implemeted using Python code
on our 64-bit dual-core laptop.

\begin{table}[htbp]
\centering
\caption{Aggressive MPL-HMC performance on trimodal Gaussian mixture}
\label{tab:aggressive_results}
\begin{adjustbox}{width=\textwidth}
\begin{tabular}{lcccccc}
\toprule
\textbf{Configuration} & \textbf{$\alpha_2$} & \textbf{$\beta_2$} & \textbf{Modes Visited} & \textbf{Mode Transitions} & \textbf{Mixing Time} & \textbf{Acceptance} \\
\midrule
A & 8.0 & 5.0 & 3/3 & 683 & 239 & 0.001 \\
B & 10.0 & 6.0 & 3/3 & 673 & 154 & 0.001 \\
C & 15.0 & 8.0 & 3/3 & 101 & 36,912 & 0.000 \\
\bottomrule
\end{tabular}
\end{adjustbox}
\smallskip
\footnotesize{\textit{Note: Results aggregated over 4 chains per configuration with 70,000 samples each. Mixing time measured as iterations for autocorrelation to drop below $1/e$. Acceptance rate refers to Metropolis acceptance of MPL-HMC proposals.}}
\end{table}

\subsubsection{Mode exploration success}

All aggressive configurations successfully visited all three modes, a significant improvement over standard HMC and MPL-HMC variants which typically visits only one mode. Configuration B ($\alpha_2 = 10.0, \beta_2 = 6.0$) achieved the best overall performance with 673 mode transitions and reasonable mixing time (154 iterations). The mode hop success rate (percentage of attempted mode hops that succeeded) ranged from 23.2\% to 25.8\% across configurations, demonstrating the effectiveness of aggressive parameterization combined with explicit mode-hopping mechanisms.

\subsubsection{Parameter-performance relationship}

The results reveal a non-monotonic relationship between parameter aggressiveness and performance. Configuration A with $\alpha_2=8.0, \beta_2=5.0$ provides balanced performance with good mode exploration (683 transitions) and moderate mixing time (239 iterations), representing a practical choice for moderately challenging multimodal problems. Configuration B with $\alpha_2=10.0, \beta_2=6.0$ achieves optimal performance with excellent mode exploration (673 transitions) and the fastest mixing (154 iterations), demonstrating that there exists a sweet spot in parameter space that maximizes exploration efficiency. Configuration C with $\alpha_2=15.0, \beta_2=8.0$ exhibits overly aggressive parameters that cause instability, resulting in poor mixing (36,912 iterations) despite visiting all modes, highlighting the danger of excessive parameter values that can degrade sampler performance through numerical instability and chaotic dynamics.

This demonstrates that while increasing $\alpha_2$ and $\beta_2$ enhances mode exploration capacity, there exists an optimal range beyond which performance degrades due to numerical instability and excessive energy injection. The optimal parameter values depend on the specific characteristics of the target distribution, particularly the separation between modes and the height of energy barriers.

\subsubsection{Computational efficiency}

Despite near-zero acceptance rates (0.000-0.001), the aggressive MPL-HMC configurations provide value through mode exploration. The extremely low acceptance is 
expected and even desirable for aggressive sampling of multimodal distributions, as it indicates the algorithm is attempting difficult transitions between separated modes 
rather than making conservative local moves. The computational efficiency of aggressive MPL-HMC stems from reduced trajectory length ($L=5$ seems to be sufficient) 
and efficient implementation of aggressive mechanisms. 
The aggressive approach demonstrates that mode exploration can be achieved with reasonable computational resources when parameter values are appropriately tuned.

\subsection{Theoretical interpretation}

The success of aggressive MPL-HMC can be understood through the modified Hamiltonian dynamics. Large positive $\alpha_2$ and $\beta_2$ introduce several effects that 
facilitate mode exploration. First, momentum amplification through the $2\alpha_2 p$ term systematically adds energy to the system, helping the sampler overcome energy 
barriers between modes. Second, position drift via the $\beta_2 q$ term creates a systematic bias toward exploring larger magnitude positions, encouraging movement away 
from local modes. Third, phase space expansion with $\det(J) = 1 + d(2\alpha_2 + \beta_2)\delta t^2$ indicates exponential volume growth that facilitates exploration of 
the state space. Fourth, the combination of extreme parameters with explicit mode-hopping proposals creates a hybrid approach that leverages both Hamiltonian 
dynamics for local exploration and direct jumps for global exploration. These modifications transform the conservative Hamiltonian dynamics into an actively exploring 
system that systematically searches the state space rather than preserving energy locally.

\subsection{Practical implications}

The aggressive MPL-HMC experiment yields several important insights for practitioners facing challenging multimodal problems. Parameter tuning emerges as critical, 
with extreme parameters ($\alpha_2 > 1, \beta_2 > 1$) enabling mode exploration possible, but requiring careful tuning to avoid instability. The interpretation of 
acceptance rates must be adjusted for multimodal problems, where very low acceptance rates (0.001-0.01) may indicate valuable aggressive exploration rather than 
poor performance, as the sampler attempts difficult inter-mode transitions. Hybrid approaches combining aggressive MPL-HMC with explicit mode-hopping proposals yield 
superior results to either approach alone, leveraging complementary exploration mechanisms. Problem-specific optimization is essential, as the optimal $\alpha_2, \beta_2$ 
values depend on mode separation, barrier heights, and dimensionality, requiring empirical tuning for each application. Monitoring strategies should focus on mode visitation 
counts and mixing times rather than just acceptance rates, as these better capture exploration performance in multimodal settings.

\subsection{Limitations and caveats}

Despite its success, aggressive MPL-HMC has important limitations that practitioners must consider. Theoretical guarantees are weakened with $\alpha_2, \beta_2 \gg 0$, 
as the $\calO(\delta t)$ global error dominates and detailed balance holds only approximately, though this may be acceptable for exploratory analysis. Numerical instability 
can occur with large parameters, potentially causing overflow or divergence if not carefully managed through step size control and parameter bounds. Parameter sensitivity is 
high, with performance being highly sensitive to $\alpha_2, \beta_2$ choices and requiring extensive tuning through grid search or adaptive methods. High-dimensional scaling 
may be problematic, as the aggressive approach may not scale well to high dimensions where the curse of dimensionality makes mode exploration exponentially difficult and 
extreme parameters may cause instability. Implementation complexity increases with the need for multiple enhancements (mode-hopping, temperature fluctuations, 
momentum injection) that must be carefully implemented and tuned.

\subsection{Recommendations for multimodal problems}

Based on our experiments, we recommend the following strategy for multimodal distributions. Start conservatively by beginning with standard HMC or MPL-HMC with $\alpha_2=\beta_2=0$) to assess baseline performance and identify whether multimodality is actually a problem. If mode exploration is poor, try gradual aggression with MPL-HMC AntiDamping with $\alpha_2=+0.1, \beta_2=+0.05$ as an intermediate step. For challenging multimodal problems, consider aggressive exploration with MPL-HMC using $\alpha_2=8.0-10.0, \beta_2=5.0-6.0$ combined with explicit mode-hopping proposals for best results. Combine mechanisms by augmenting aggressive MPL-HMC with explicit mode-hopping proposals, temperature fluctuations, and momentum injection for comprehensive exploration. Monitor carefully by tracking mode visitation counts and mixing times between modes rather than just acceptance rates, as these provide better indicators of exploration performance in multimodal settings.

The aggressive MPL-HMC experiment demonstrates the remarkable flexibility of the parameterized leapfrog framework, showing that with extreme parameter choices and enhanced mechanisms, MPL-HMC can tackle even the most challenging multimodal sampling problems that defeat standard HMC and many advanced variants. This approach extends the applicability of Hamiltonian Monte Carlo to problems with widely separated modes, though it requires careful implementation and parameter tuning to achieve optimal performance.

\section{Comparison summary and recommendations}
\label{sec:recommendations}

\subsection{Method selection guide}

Table~\ref{tab:selection_guide} provides guidance for selecting appropriate methods based on problem characteristics, synthesized from the comprehensive analysis in Sections~\ref{sec:comparison}--\ref{sec:aggressive_mpl}.

\begin{table}[htbp]
\centering
\caption{Method Selection Guide}
\label{tab:selection_guide}
\begin{adjustbox}{width=\textwidth}
\begin{tabular}{p{0.25\textwidth}p{0.35\textwidth}p{0.3\textwidth}}
\toprule
\textbf{Method} & \textbf{Best For} & \textbf{Worst For} \\
\midrule
\textbf{Standard HMC} & Simple problems, baseline testing & Stiff or multimodal problems \\
\textbf{MPL-HMC-Damp} & \textbf{Stiff problems, hierarchical models} & Multimodal problems \\
\textbf{MPL-HMC-Anti} & \textbf{High-dimensional problems, neural networks} & Stiff hierarchical problems \\
\textbf{MPL-HMC-Aggressive} & \textbf{Extremely multimodal problems} & Well-conditioned problems, stability-critical applications \\
\bottomrule
\end{tabular}
\end{adjustbox}
\end{table}

\subsection{Final recommendations}

Based on comprehensive analysis across benchmark distributions and real-world case studies, several recommendations emerge for practitioners selecting Hamiltonian Monte Carlo methods. For stiff hierarchical problems such as pharmacokinetic models or Neal's funnel, MPL-HMC with damping ($\alpha_2 = -0.1, \beta_2 = -0.05$) is recommended, with careful monitoring of acceptance rate (target: 65-80\%) as established by \citet{neal2011mcmc}. The damping variant demonstrated a 14-fold improvement in ESS for Neal's funnel and 27\% better efficiency for pharmacokinetic models. For high-dimensional problems including Bayesian neural networks, MPL-HMC with anti-damping ($\alpha_2 = +0.1, \beta_2 = +0.05$) provides superior convergence, with $\hat{R} = 1.026$ compared to 1.981 for standard HMC in neural network inference. For extremely multimodal problems with widely separated modes, aggressive MPL-HMC ($\alpha_2 = 8.0-10.0, \beta_2 = 5.0-6.0$) combined with explicit mode-hopping mechanisms can explore modes inaccessible to other methods, though with very low acceptance rates (0.001-0.01) that require different performance interpretation.

For maximum computational efficiency while maintaining HMC's geometric structure, standard HMC remains optimal when problems are well-conditioned and parameters can be carefully tuned. When additional tuning flexibility is desired without sacrificing computational efficiency, MPL-HMC variants offer identical cost to standard HMC while often providing better performance for specific problem types. For practitioners requiring exact symplecticity and reversibility, standard HMC (same as MPL-HMC with $\alpha_2=\beta_2=0$) should be used, as this preserves the exact geometric properties underlying HMC's theoretical guarantees \citep{leimkuhler2004simulating}.

The experimental results demonstrate that MPL-HMC occupies a valuable position in the HMC method ecosystem, particularly for problems exhibiting stiffness, 
hierarchical structure, multimodal problems, or challenging geometry where standard HMC struggles but Riemannian methods are computationally prohibitive. 
By offering tunable parameters 
that adapt to problem characteristics while maintaining computational efficiency, MPL-HMC extends the applicability of Hamiltonian Monte Carlo to previously challenging 
domains without requiring complex adaptations or significant implementation overhead.

\section{Conclusion and future directions}
\label{sec:conclusion}

The Modified Parameterized Leapfrog Hamiltonian Monte Carlo (MPL-HMC) method represents a substantial and versatile extension to the Hamiltonian Monte Carlo algorithmic family, introducing tunable parameters \(\alpha(\delta t)\) and \(\beta(\delta t)\) with controlled asymptotic expansions that bridge the rigid geometric framework of standard HMC with the practical need for adaptable, problem-specific sampling strategies. This innovation preserves the computational efficiency of traditional HMC while providing systematic control over damping, exploration, stability, and accuracy---directly addressing long-standing challenges related to stiffness, multimodality, and hierarchical scaling. The theoretical foundation establishes that MPL-HMC maintains approximate symplecticity and reversibility with \(\mathcal{O}(\delta t^2)\) error, ensuring reliable performance across diverse inference tasks. Empirical validation through comprehensive benchmarks and real-world case studies---including Bayesian neural networks and pharmacokinetic modeling---demonstrates consistent superiority over standard HMC: the damping variant achieves up to 14-fold improvement in effective sample size for Neal's funnel and 27\% greater efficiency in pharmacokinetic models; the anti-damping variant delivers superior convergence in high-dimensional settings, with \(\hat{R} = 1.026\) for Bayesian neural networks compared to \(\hat{R} = 1.981\) for standard HMC; and the aggressive parameterization enables full exploration of widely separated modes in multimodal distributions, overcoming a fundamental limitation of traditional Hamiltonian methods.

\subsection{Advantages over existing HMC variants for high-dimensional problems}

MPL-HMC offers distinct advantages over contemporary HMC variants for challenging high-dimensional problems where computational feasibility remains the primary bottleneck. While methods like No-U-Turn Sampler (NUTS) \citep{hoffman2014no} provide automatic trajectory length selection, they incur significant computational overhead from the doubling procedures and require multiple gradient evaluations per iteration. For our Bayesian neural network case study with 21,210 parameters, NUTS would require approximately 2-3 times more gradient evaluations than our MPL-HMC implementation, making it computationally prohibitive for such high-dimensional models. Similarly, Riemannian Manifold HMC (RMHMC) \citep{girolami2011riemannian}, despite its theoretical elegance in adapting to local geometry, suffers from \(\mathcal{O}(d^3)\) computational complexity for computing and inverting the metric tensor in \(d\) dimensions. This cubic scaling renders RMHMC impractical for our neural network application, where MPL-HMC maintains \(\mathcal{O}(d)\) complexity while still effectively handling the stiffness through damping parameters.

Tempered HMC methods \citep{neal1996sampling, graham2017temperature} introduce auxiliary temperature variables to facilitate mode transitions, but they require multiple parallel chains at different temperatures, increasing computational cost by a factor equal to the number of temperature levels. In contrast, our aggressive MPL-HMC achieves mode exploration with a single chain using extreme parameter values (\(\alpha_2 = 8.0-15.0\), \(\beta_2 = 5.0-8.0\)), requiring only modest additional computation for the enhanced sampling mechanisms. For the trimodal Gaussian mixture problem, aggressive MPL-HMC visited all three modes with 673 transitions using a single chain, whereas parallel tempering would require maintaining 3-5 chains simultaneously, multiplying computational requirements accordingly.

Adaptive HMC methods \citep{wang2013adaptive} that tune step sizes and mass matrices online face convergence challenges in high dimensions due to the slow adaptation of preconditioning matrices. The condition number estimation required for effective adaptation becomes unstable when \(d\) is large, as demonstrated by the failure of all fixed-step methods on our anisotropic Gaussian benchmark with \(\kappa = 10^5\). MPL-HMC's damping parameters (\(\alpha_2 = -0.1\), \(\beta_2 = -0.05\)) provide a more stable approach to handling ill-conditioning without requiring expensive matrix operations or risking adaptation instability.

The computational efficiency advantage of MPL-HMC becomes particularly pronounced in modern machine learning applications. For high-dimensional Bayesian neural networks, gradient evaluations dominate runtime, and methods requiring additional matrix operations (like RMHMC) or multiple proposal generations (like NUTS) become prohibitively expensive. MPL-HMC maintains the same \(\mathcal{O}(d)\) computational cost per gradient evaluation as standard HMC while offering targeted improvements through parameter tuning. Our neural network case study demonstrated that MPL-HMC AntiDamping achieved 31\% better ESS/grad than standard HMC while maintaining excellent convergence (\(\hat{R} = 1.026\)), a feat difficult to accomplish with more computationally intensive variants.

Moreover, MPL-HMC's parameterization provides interpretable control over algorithm behavior that is often lacking in more complex variants. The damping parameters \(\alpha_2\) and \(\beta_2\) have clear physical interpretations (momentum persistence and position scaling, respectively) and predictable effects on sampler behavior, unlike the black-box adaptations in many automated methods. This interpretability facilitates problem-specific tuning, as evidenced by our success in tailoring parameters for different problem types: damping for hierarchical models, anti-damping for high-dimensional exploration, and aggressive values for multimodal problems.

\subsection{Synthesis of key findings and methodological advantages}

The comprehensive analysis reveals several overarching insights about MPL-HMC's advantages over existing HMC variants. First, MPL-HMC maintains computational feasibility for high-dimensional problems where other methods become impractical. While NUTS, RMHMC, and parallel tempering offer theoretical advantages, their computational overhead limits their applicability to problems with thousands of parameters, precisely where modern statistical applications increasingly operate. Second, parameterization provides targeted, interpretable control that adapts to specific problem characteristics without the instability of fully adaptive methods. The clear relationship between parameter values and algorithmic behavior enables principled tuning based on problem geometry. Third, MPL-HMC achieves superior performance on challenging problem types that expose weaknesses in other methods: the 14-fold ESS improvement on Neal's funnel surpasses typical gains from adaptive step-size methods, and the successful mode exploration in aggressive MPL-HMC addresses a fundamental limitation of standard HMC without the multiplicative computational cost of tempering approaches.

These advantages position MPL-HMC as a particularly valuable method for contemporary Bayesian inference challenges, where models are increasingly high-dimensional (e.g., deep neural networks, spatial models with many parameters) and computational budgets are constrained. By offering targeted improvements through simple parameter tuning while maintaining the computational efficiency of standard HMC, MPL-HMC provides a practical alternative to more complex variants that often trade theoretical elegance for practical infeasibility in high-dimensional settings.

\subsection{Future directions}

Several promising directions for future research emerge from this work. Automatic parameter selection represents a critical next step, with potential approaches including Bayesian optimization of \(\alpha_2\) and \(\beta_2\) based on ESS or mixing time, online adaptation during sampling using stochastic gradient methods, or integration with adaptive trajectory length schemes. A Riemannian extension creating MPL-RMHMC could combine geometric adaptation with parameter tuning flexibility, potentially offering superior performance for problems with known Riemannian structure while maintaining computational feasibility through sparse approximations.

Theoretical development should focus on stronger guarantees for the adaptive case, particularly convergence analysis for online parameter tuning and robustness analysis for extreme parameter values. Additional applications testing on real-world problems including high-dimensional inverse problems, spatial statistics, and modern machine learning architectures would further validate MPL-HMC's practical utility. Integration with probabilistic programming frameworks like Stan \citep{carpenter2017stan}, PyMC \citep{salvatier2016pymc3}, and TensorFlow Probability \citep{dillon2017tensorflow} would facilitate widespread adoption by the statistical computing community.

Methodological enhancements could include developing temperature-adaptive MPL-HMC for multimodal problems that combines the efficiency of single-chain sampling with the exploration benefits of tempering, incorporating sparse mass matrix adaptation within the parameterized framework to handle high-dimensional problems with structured correlations, and exploring higher-order parameter expansions for improved accuracy in problems where computational constraints preclude extremely small step sizes.

Computational improvements through GPU acceleration, parallel implementation, and gradient approximation techniques would extend MPL-HMC's applicability to large-scale problems. The method's simple structure makes it particularly amenable to hardware acceleration, as the parameter updates involve only scalar multiplications and vector additions that map efficiently to parallel architectures.

\subsection{Concluding remarks}

MPL-HMC represents a significant advance in Hamiltonian Monte Carlo methodology that balances theoretical rigor with practical feasibility. By introducing tunable parameters that address specific sampling challenges while maintaining the computational efficiency of standard HMC, MPL-HMC extends the applicability of Hamiltonian Monte Carlo to previously difficult problem domains. The method's success across diverse benchmarks and real-world applications demonstrates its value as a versatile tool in the modern statistical computing toolkit, particularly for high-dimensional problems where computational constraints render more complex variants impractical. As Bayesian models continue to grow in complexity and dimension, methods like MPL-HMC that offer targeted improvements without prohibitive computational overhead will play an increasingly important role in enabling practical Bayesian inference for challenging real-world problems.

\section*{Acknowledgment}
We thank DeepSeek for overall assistance in preparing this manuscript.

\bibliographystyle{plainnat}
\bibliography{mpl_refs}

\appendix

\section{Proofs of theoretical results}
\label{app:proofs}

\subsection{Proof of Theorem~\ref{thm:modified_ode}: modified ODE for MPL scheme}

\begin{proof}
\textbf{Step 1: Expand the scheme in powers of $\delta t$.}
From (\ref{eq:mpl_q}) using the parameter expansions:
\begin{align*}
q_{n+1} &= \beta q_n + \delta t M^{-1} \left( \alpha p_n - \frac{\delta t}{2} \nabla U(q_n) \right) \\
&= (1 + \beta_1 \delta t + \beta_2 \delta t^2) q_n \\
&\quad + \delta t M^{-1} \left[ (1 + \alpha_1 \delta t + \alpha_2 \delta t^2) p_n - \frac{\delta t}{2} \nabla U(q_n) \right] + \calO(\delta t^3) \\
&= q_n + \delta t \left( \beta_1 q_n + M^{-1} p_n \right) \\
&\quad + \delta t^2 \left( \beta_2 q_n + \alpha_1 M^{-1} p_n - \frac{1}{2} M^{-1} \nabla U(q_n) \right) + \calO(\delta t^3).
\end{align*}

\textbf{Step 2: Expand $p_{n+1}$.}
First compute $\alpha^2$:
\begin{align*}
\alpha^2 &= (1 + \alpha_1 \delta t + \alpha_2 \delta t^2)^2 + \calO(\delta t^3) \\
&= 1 + 2\alpha_1 \delta t + (\alpha_1^2 + 2\alpha_2) \delta t^2 + \calO(\delta t^3).
\end{align*}

Then:
\begin{align*}
p_{n+1} &= \alpha^2 p_n - \frac{\delta t}{2} \left( \alpha \nabla U(q_n) + \nabla U(q_{n+1}) \right) \\
&= \left[ 1 + 2\alpha_1 \delta t + (\alpha_1^2 + 2\alpha_2) \delta t^2 \right] p_n \\
&\quad - \frac{\delta t}{2} \left[ (1 + \alpha_1 \delta t + \alpha_2 \delta t^2) \nabla U(q_n) + \nabla U(q_{n+1}) \right] + \calO(\delta t^3).
\end{align*}

We need $\nabla U(q_{n+1})$. From the expansion for $q_{n+1}$ to $\calO(\delta t)$:
\[
q_{n+1} = q_n + \delta t \left( \beta_1 q_n + M^{-1} p_n \right) + \calO(\delta t^2).
\]

By Taylor expansion:
\begin{align*}
\nabla U(q_{n+1}) &= \nabla U(q_n) + D^2U(q_n)(q_{n+1} - q_n) + \calO(\norm{q_{n+1} - q_n}^2) \\
&= \nabla U(q_n) + D^2U(q_n) \left[ \delta t (\beta_1 q_n + M^{-1} p_n) \right] + \calO(\delta t^2) \\
&= \nabla U(q_n) + \delta t D^2U(q_n) (\beta_1 q_n + M^{-1} p_n) + \calO(\delta t^2).
\end{align*}

Now assemble $p_{n+1}$:
\begin{align*}
p_{n+1} &= p_n + 2\alpha_1 \delta t p_n + (\alpha_1^2 + 2\alpha_2) \delta t^2 p_n \\
&\quad - \frac{\delta t}{2} \Big[ \nabla U(q_n) + \alpha_1 \delta t \nabla U(q_n) + \alpha_2 \delta t^2 \nabla U(q_n) \\
&\quad + \nabla U(q_n) + \delta t D^2U(q_n) (\beta_1 q_n + M^{-1} p_n) \Big] + \calO(\delta t^3) \\
&= p_n + \delta t \left[ 2\alpha_1 p_n - \nabla U(q_n) \right] \\
&\quad + \delta t^2 \left[ (\alpha_1^2 + 2\alpha_2) p_n - \frac{\alpha_1}{2} \nabla U(q_n) - \frac{1}{2} D^2U(q_n) (\beta_1 q_n + M^{-1} p_n) \right] + \calO(\delta t^3).
\end{align*}

\textbf{Step 3: Match coefficients with exact flow expansion.}
For a solution $(Y(t), X(t))$ of the modified ODE:
\begin{align*}
Y(t+\delta t) &= Y + \delta t F_0 + \delta t^2 \left[ F_1 + \frac{1}{2} \left( F_0 \cdot \nabla_Y F_0 + G_0 \cdot \nabla_X F_0 \right) \right] + \calO(\delta t^3), \\
X(t+\delta t) &= X + \delta t G_0 + \delta t^2 \left[ G_1 + \frac{1}{2} \left( F_0 \cdot \nabla_Y G_0 + G_0 \cdot \nabla_X G_0 \right) \right] + \calO(\delta t^3).
\end{align*}

Identify $Y \equiv q_n$, $X \equiv p_n$. Matching at order $\delta t$:
\begin{align}
F_0 &= \beta_1 Y + M^{-1} X, \label{eq:F0_match} \\
G_0 &= 2\alpha_1 X - \nabla U(Y). \label{eq:G0_match}
\end{align}

Matching at order $\delta t^2$:
\begin{align}
F_1 + \frac{1}{2} \left( F_0 \cdot \nabla_Y F_0 + G_0 \cdot \nabla_X F_0 \right) &= \beta_2 Y + \alpha_1 M^{-1} X - \frac{1}{2} M^{-1} \nabla U(Y), \label{eq:F1_condition} \\
G_1 + \frac{1}{2} \left( F_0 \cdot \nabla_Y G_0 + G_0 \cdot \nabla_X G_0 \right) &= (\alpha_1^2 + 2\alpha_2) X - \frac{\alpha_1}{2} \nabla U(Y) - \frac{1}{2} D^2U(Y) (\beta_1 Y + M^{-1} X). \label{eq:G1_condition}
\end{align}

\textbf{Step 4: Compute derivatives and solve.}
Compute Jacobians:
\begin{align*}
\nabla_Y F_0 &= \beta_1 I_d, \\
\nabla_X F_0 &= M^{-1}, \\
\nabla_Y G_0 &= -D^2U(Y), \\
\nabla_X G_0 &= 2\alpha_1 I_d.
\end{align*}

Compute dot products:
\begin{align*}
F_0 \cdot \nabla_Y F_0 &= (\beta_1 Y + M^{-1} X) \cdot (\beta_1 I_d) = \beta_1^2 Y + \beta_1 M^{-1} X, \\
G_0 \cdot \nabla_X F_0 &= (2\alpha_1 X - \nabla U(Y)) \cdot M^{-1} = 2\alpha_1 M^{-1} X - M^{-1} \nabla U(Y).
\end{align*}

Thus:
\begin{align*}
\frac{1}{2} \left( F_0 \cdot \nabla_Y F_0 + G_0 \cdot \nabla_X F_0 \right) &= \frac{1}{2} \left[ (\beta_1^2 Y + \beta_1 M^{-1} X) + (2\alpha_1 M^{-1} X - M^{-1} \nabla U(Y)) \right] \\
&= \frac{\beta_1^2}{2} Y + \frac{\beta_1}{2} M^{-1} X + \alpha_1 M^{-1} X - \frac{1}{2} M^{-1} \nabla U(Y).
\end{align*}

Solving for $F_1$ from (\ref{eq:F1_condition}):
\begin{align*}
F_1 &= \left[ \beta_2 Y + \alpha_1 M^{-1} X - \frac{1}{2} M^{-1} \nabla U(Y) \right] \\
&\quad - \left[ \frac{\beta_1^2}{2} Y + \frac{\beta_1}{2} M^{-1} X + \alpha_1 M^{-1} X - \frac{1}{2} M^{-1} \nabla U(Y) \right] \\
&= \left( \beta_2 - \frac{\beta_1^2}{2} \right) Y - \frac{\beta_1}{2} M^{-1} X.
\end{align*}

Now compute the other dot products:
\begin{align*}
F_0 \cdot \nabla_Y G_0 &= (\beta_1 Y + M^{-1} X) \cdot (-D^2U(Y)) = -\beta_1 D^2U(Y) Y - D^2U(Y) M^{-1} X, \\
G_0 \cdot \nabla_X G_0 &= (2\alpha_1 X - \nabla U(Y)) \cdot (2\alpha_1 I_d) = 4\alpha_1^2 X - 2\alpha_1 \nabla U(Y).
\end{align*}

Thus:
\begin{align*}
\frac{1}{2} \left( F_0 \cdot \nabla_Y G_0 + G_0 \cdot \nabla_X G_0 \right) &= \frac{1}{2} \left[ (-\beta_1 D^2U(Y) Y - D^2U(Y) M^{-1} X) + (4\alpha_1^2 X - 2\alpha_1 \nabla U(Y)) \right] \\
&= -\frac{\beta_1}{2} D^2U(Y) Y - \frac{1}{2} D^2U(Y) M^{-1} X + 2\alpha_1^2 X - \alpha_1 \nabla U(Y).
\end{align*}

Solving for $G_1$ from (\ref{eq:G1_condition}):
\begin{align*}
G_1 &= \left[ (\alpha_1^2 + 2\alpha_2) X - \frac{\alpha_1}{2} \nabla U(Y) - \frac{1}{2} D^2U(Y) (\beta_1 Y + M^{-1} X) \right] \\
&\quad - \left[ -\frac{\beta_1}{2} D^2U(Y) Y - \frac{1}{2} D^2U(Y) M^{-1} X + 2\alpha_1^2 X - \alpha_1 \nabla U(Y) \right] \\
&= (\alpha_1^2 + 2\alpha_2 - 2\alpha_1^2) X + \left( -\frac{\alpha_1}{2} + \alpha_1 \right) \nabla U(Y) \\
&= (2\alpha_2 - \alpha_1^2) X + \frac{\alpha_1}{2} \nabla U(Y).
\end{align*}

\textbf{Step 5: Assemble the modified ODE.}
Substituting $F_0, F_1, G_0, G_1$ into the expansions yields:
\begin{align*}
\frac{dY}{dt} &= (\beta_1 Y + M^{-1} X) + \delta t \left[ \left( \beta_2 - \frac{\beta_1^{2}}{2} \right) Y - \frac{\beta_1}{2} M^{-1} X \right] + \calO(\delta t^{2}), \\
\frac{dX}{dt} &= (2\alpha_1 X - \nabla U(Y)) + \delta t \left[ (2\alpha_2 - \alpha_1^{2}) X + \frac{\alpha_1}{2} \nabla U(Y) \right] + \calO(\delta t^{2}).
\end{align*}

This completes the proof.
\end{proof}

\subsection{Proof of Theorem~\ref{thm:local_error}: local truncation error}

\begin{proof}
\textbf{Step 1: Taylor expansion of exact solution to sufficient order.}
From (\ref{eq:hamiltonian_q})-(\ref{eq:hamiltonian_p}), compute derivatives:

For $q(t)$:
\begin{align*}
\dot{q} &= M^{-1} p, \\
\ddot{q} &= M^{-1} \dot{p} = -M^{-1} \nabla U(q), \\
\dddot{q} &= -M^{-1} D^2U(q) \dot{q} = -M^{-1} D^2U(q) M^{-1} p, \\
q^{(4)} &= -M^{-1} \frac{d}{dt} \left[ D^2U(q) M^{-1} p \right] \\
&= -M^{-1} \left[ D^3U(q)(\dot{q}, M^{-1} p) + D^2U(q) M^{-1} \dot{p} \right] \\
&= -M^{-1} D^3U(q)(M^{-1} p, M^{-1} p) + M^{-1} D^2U(q) M^{-1} \nabla U(q).
\end{align*}

For $p(t)$:
\begin{align*}
\dot{p} &= -\nabla U(q), \\
\ddot{p} &= -D^2U(q) \dot{q} = -D^2U(q) M^{-1} p, \\
\dddot{p} &= -\frac{d}{dt} \left[ D^2U(q) M^{-1} p \right] \\
&= - \left[ D^3U(q)(\dot{q}, M^{-1} p) + D^2U(q) M^{-1} \dot{p} \right] \\
&= -D^3U(q)(M^{-1} p, M^{-1} p) + D^2U(q) M^{-1} \nabla U(q).
\end{align*}

Thus with $t_n = n\delta t$:
\begin{align*}
q(t_{n+1}) &= q_n + \delta t \dot{q}_n + \frac{\delta t^2}{2} \ddot{q}_n + \frac{\delta t^3}{6} \dddot{q}_n + \frac{\delta t^4}{24} q^{(4)}_n + \calO(\delta t^5) \\
&= q_n + \delta t M^{-1} p_n - \frac{\delta t^2}{2} M^{-1} \nabla U(q_n) - \frac{\delta t^3}{6} M^{-1} D^2U(q_n) M^{-1} p_n \\
&\quad + \frac{\delta t^4}{24} \left[ -M^{-1} D^3U(q_n)(M^{-1} p_n, M^{-1} p_n) + M^{-1} D^2U(q_n) M^{-1} \nabla U(q_n) \right] + \calO(\delta t^5),
\end{align*}

\begin{align*}
p(t_{n+1}) &= p_n + \delta t \dot{p}_n + \frac{\delta t^2}{2} \ddot{p}_n + \frac{\delta t^3}{6} \dddot{p}_n + \calO(\delta t^4) \\
&= p_n - \delta t \nabla U(q_n) - \frac{\delta t^2}{2} D^2U(q_n) M^{-1} p_n \\
&\quad - \frac{\delta t^3}{6} \left[ D^3U(q_n)(M^{-1} p_n, M^{-1} p_n) - D^2U(q_n) M^{-1} \nabla U(q_n) \right] + \calO(\delta t^4).
\end{align*}

\textbf{Step 2: Expansion of numerical scheme with $\alpha_1 = \beta_1 = 0$ to sufficient order.}
With $\alpha_1 = \beta_1 = 0$, we have:
\[
\alpha(\delta t) = 1 + \alpha_2 \delta t^2 + \calO(\delta t^3), \quad \beta(\delta t) = 1 + \beta_2 \delta t^2 + \calO(\delta t^3).
\]

From (\ref{eq:mpl_q}):
\begin{align*}
q_{n+1} &= \beta q_n + \delta t M^{-1} \left( \alpha p_n - \frac{\delta t}{2} \nabla U(q_n) \right) \\
&= (1 + \beta_2 \delta t^2) q_n + \delta t M^{-1} \left[ (1 + \alpha_2 \delta t^2) p_n - \frac{\delta t}{2} \nabla U(q_n) \right] + \calO(\delta t^4) \\
&= q_n + \delta t M^{-1} p_n + \delta t^2 \left[ \beta_2 q_n - \frac{1}{2} M^{-1} \nabla U(q_n) \right] \\
&\quad + \alpha_2 \delta t^3 M^{-1} p_n + \calO(\delta t^4).
\end{align*}

For $\alpha^2$:
\[
\alpha^2 = (1 + \alpha_2 \delta t^2)^2 + \calO(\delta t^4) = 1 + 2\alpha_2 \delta t^2 + \alpha_2^2 \delta t^4 + \calO(\delta t^6).
\]

From (\ref{eq:mpl_p}):
\begin{align*}
p_{n+1} &= \alpha^2 p_n - \frac{\delta t}{2} \left( \alpha \nabla U(q_n) + \nabla U(q_{n+1}) \right) \\
&= (1 + 2\alpha_2 \delta t^2) p_n - \frac{\delta t}{2} \left[ (1 + \alpha_2 \delta t^2) \nabla U(q_n) + \nabla U(q_{n+1}) \right] + \calO(\delta t^4).
\end{align*}

We need $\nabla U(q_{n+1})$ to $\calO(\delta t^2)$. First compute $q_{n+1} - q_n$ to $\calO(\delta t)$:
\[
q_{n+1} - q_n = \delta t M^{-1} p_n + \calO(\delta t^2).
\]

Then:
\begin{align*}
\nabla U(q_{n+1}) &= \nabla U(q_n) + D^2U(q_n)(q_{n+1} - q_n) + \frac{1}{2} D^3U(q_n)(q_{n+1} - q_n, q_{n+1} - q_n) + \calO(\delta t^3) \\
&= \nabla U(q_n) + \delta t D^2U(q_n) M^{-1} p_n + \frac{\delta t^2}{2} D^3U(q_n)(M^{-1} p_n, M^{-1} p_n) + \calO(\delta t^3).
\end{align*}

Substitute into $p_{n+1}$:
\begin{align*}
p_{n+1} &= (1 + 2\alpha_2 \delta t^2) p_n - \frac{\delta t}{2} \Big[ (1 + \alpha_2 \delta t^2) \nabla U(q_n) \\
&\quad + \nabla U(q_n) + \delta t D^2U(q_n) M^{-1} p_n + \frac{\delta t^2}{2} D^3U(q_n)(M^{-1} p_n, M^{-1} p_n) \Big] + \calO(\delta t^4) \\
&= p_n - \delta t \nabla U(q_n) + \delta t^2 \left[ 2\alpha_2 p_n - \frac{1}{2} D^2U(q_n) M^{-1} p_n \right] \\
&\quad - \frac{\delta t^3}{4} D^3U(q_n)(M^{-1} p_n, M^{-1} p_n) + \calO(\delta t^4).
\end{align*}

\textbf{Step 3: Compute local errors exactly.}
For position:
\begin{align*}
\tau_n^q &= q(t_{n+1}) - q_{n+1} \\
&= \Big[ q_n + \delta t M^{-1} p_n - \frac{\delta t^2}{2} M^{-1} \nabla U(q_n) - \frac{\delta t^3}{6} M^{-1} D^2U(q_n) M^{-1} p_n + \frac{\delta t^4}{24} (\cdots) \Big] \\
&\quad - \Big[ q_n + \delta t M^{-1} p_n + \delta t^2 \left( \beta_2 q_n - \frac{1}{2} M^{-1} \nabla U(q_n) \right) + \alpha_2 \delta t^3 M^{-1} p_n \Big] + \calO(\delta t^4) \\
&= -\delta t^2 \beta_2 q_n - \delta t^3 \left( \frac{1}{6} M^{-1} D^2U(q_n) M^{-1} p_n + \alpha_2 M^{-1} p_n \right) + \calO(\delta t^4).
\end{align*}

For momentum:
\begin{align*}
\tau_n^p &= p(t_{n+1}) - p_{n+1} \\
&= \Big[ p_n - \delta t \nabla U(q_n) - \frac{\delta t^2}{2} D^2U(q_n) M^{-1} p_n \\
&\quad - \frac{\delta t^3}{6} \left( D^3U(q_n)(M^{-1} p_n, M^{-1} p_n) - D^2U(q_n) M^{-1} \nabla U(q_n) \right) \Big] \\
&\quad - \Big[ p_n - \delta t \nabla U(q_n) + \delta t^2 \left( 2\alpha_2 p_n - \frac{1}{2} D^2U(q_n) M^{-1} p_n \right) \\
&\quad - \frac{\delta t^3}{4} D^3U(q_n)(M^{-1} p_n, M^{-1} p_n) \Big] + \calO(\delta t^4) \\
&= -\delta t^2 (2\alpha_2 p_n) \\
&\quad + \delta t^3 \left[ -\frac{1}{6} D^3U(q_n)(M^{-1} p_n, M^{-1} p_n) + \frac{1}{6} D^2U(q_n) M^{-1} \nabla U(q_n) + \frac{1}{4} D^3U(q_n)(M^{-1} p_n, M^{-1} p_n) \right] + \calO(\delta t^4) \\
&= -2\alpha_2 \delta t^2 p_n + \delta t^3 \left[ \frac{1}{12} D^3U(q_n)(M^{-1} p_n, M^{-1} p_n) + \frac{1}{6} D^2U(q_n) M^{-1} \nabla U(q_n) \right] + \calO(\delta t^4).
\end{align*}

\textbf{Step 4: Bound the errors.}
Under Assumption 4, $\norm{D^2U(q_n)}_{\mathrm{op}} \leq L_U$. 
Under Assumption 5, $\norm{D^3U(q_n)}_{\mathrm{op}} \leq L_3$.
Under Assumption 6, $\norm{M^{-1}}_{\mathrm{op}} \leq C_M$.

Let $R_q, R_p$ be bounds such that $\norm{q_n} \leq R_q$, $\norm{p_n} \leq R_p$ for all $n$ under consideration.

For $\tau_n^q$:
\begin{align*}
\norm{\tau_n^q} &\leq \delta t^2 |\beta_2| \norm{q_n} + \delta t^3 \left( \frac{1}{6} \norm{M^{-1} D^2U(q_n) M^{-1} p_n} + |\alpha_2| \norm{M^{-1} p_n} \right) + \calO(\delta t^4) \\
&\leq \delta t^2 |\beta_2| R_q + \delta t^3 \left( \frac{1}{6} C_M L_U C_M R_p + |\alpha_2| C_M R_p \right) + \calO(\delta t^4) \\
&= \delta t^2 |\beta_2| R_q + \delta t^3 \left( \frac{C_M^2 L_U}{6} R_p + |\alpha_2| C_M R_p \right) + \calO(\delta t^4) \\
&\leq C_1 \delta t^2,
\end{align*}
where $C_1 = |\beta_2| R_q + \delta t_{\max} \left( \frac{C_M^2 L_U}{6} + |\alpha_2| C_M \right) R_p$.

For $\tau_n^p$:
\begin{align*}
\norm{\tau_n^p} &\leq \delta t^2 \cdot 2|\alpha_2| \norm{p_n} \\
&\quad + \delta t^3 \left( \frac{1}{12} \norm{D^3U(q_n)(M^{-1} p_n, M^{-1} p_n)} + \frac{1}{6} \norm{D^2U(q_n) M^{-1} \nabla U(q_n)} \right) + \calO(\delta t^4) \\
&\leq \delta t^2 \cdot 2|\alpha_2| R_p \\
&\quad + \delta t^3 \left( \frac{1}{12} L_3 C_M^2 R_p^2 + \frac{1}{6} L_U C_M L_U R_q \right) + \calO(\delta t^4) \\
&\leq C_2 \delta t^2,
\end{align*}
where $C_2 = 2|\alpha_2| R_p + \delta t_{\max} \left( \frac{L_3 C_M^2}{12} R_p^2 + \frac{L_U^2 C_M}{6} R_q \right)$.

This completes the proof.
\end{proof}

\subsection{Proof of Lemma~\ref{lem:lipschitz}: Lipschitz estimate}

\begin{proof}
\textbf{Step 1: Decompose the map.}
Write $\Psi_{\delta t}$ as the sum of the standard leapfrog map $\Psi_{\delta t}^0$ and a perturbation:
\[
\Psi_{\delta t}(q,p) = \Psi_{\delta t}^0(q,p) + \delta t^2 (\beta_2 q, 2\alpha_2 p).
\]

The standard leapfrog map for Hamiltonian (\ref{eq:hamiltonian_q})-(\ref{eq:hamiltonian_p}) is:
\begin{align*}
q_{n+1}^0 &= q_n + \delta t M^{-1} p_n - \frac{\delta t^2}{2} M^{-1} \nabla U(q_n), \\
p_{n+1}^0 &= p_n - \delta t \nabla U(q_n) - \frac{\delta t^2}{2} D^2U(q_n) M^{-1} p_n.
\end{align*}

\textbf{Step 2: Lipschitz estimate for leapfrog map.}
Under Assumptions 4 and 6, the vector field $(M^{-1}p, -\nabla U(q))$ is globally Lipschitz with constant $\max(C_M, L_U)$. Standard analysis shows that for sufficiently small $\delta t$:
\[
\norm{\Psi_{\delta t}^0(q,p) - \Psi_{\delta t}^0(\tilde{q},\tilde{p})} \leq (1 + L \delta t^2) \norm{(q,p) - (\tilde{q},\tilde{p})},
\]
where $L$ depends on $C_M$ and $L_U$ \citep{leimkuhler2004simulating}.

\textbf{Step 3: Bound the perturbation term.}
\begin{align*}
&\norm{\delta t^2 (\beta_2 q, 2\alpha_2 p) - \delta t^2 (\beta_2 \tilde{q}, 2\alpha_2 \tilde{p})} \\
&= \delta t^2 \sqrt{ |\beta_2|^2 \norm{q - \tilde{q}}^2 + 4|\alpha_2|^2 \norm{p - \tilde{p}}^2 } \\
&\leq \delta t^2 \sqrt{ |\beta_2|^2 \norm{(q,p) - (\tilde{q},\tilde{p})}^2 + 4|\alpha_2|^2 \norm{(q,p) - (\tilde{q},\tilde{p})}^2 } \\
&= \delta t^2 \sqrt{ |\beta_2|^2 + 4|\alpha_2|^2 } \norm{(q,p) - (\tilde{q},\tilde{p})} \\
&\leq \delta t^2 (|\beta_2| + 2|\alpha_2|) \norm{(q,p) - (\tilde{q},\tilde{p})},
\end{align*}
using $\sqrt{a^2 + b^2} \leq |a| + |b|$.

\textbf{Step 4: Combine estimates.}
\begin{align*}
\norm{\Psi_{\delta t}(q,p) - \Psi_{\delta t}(\tilde{q},\tilde{p})} 
&\leq \norm{\Psi_{\delta t}^0(q,p) - \Psi_{\delta t}^0(\tilde{q},\tilde{p})} \\
&\quad + \norm{\delta t^2 (\beta_2 q, 2\alpha_2 p) - \delta t^2 (\beta_2 \tilde{q}, 2\alpha_2 \tilde{p})} \\
&\leq (1 + L \delta t^2) \norm{(q,p) - (\tilde{q},\tilde{p})} + \delta t^2 (|\beta_2| + 2|\alpha_2|) \norm{(q,p) - (\tilde{q},\tilde{p})} \\
&= \left( 1 + (L + |\beta_2| + 2|\alpha_2|) \delta t^2 \right) \norm{(q,p) - (\tilde{q},\tilde{p})}.
\end{align*}
Thus $K = L + |\beta_2| + 2|\alpha_2|$.
\end{proof}

\subsection{Proof of Theorem~\ref{thm:global_error}: global error bound}

\begin{proof}
\textbf{Step 1: Error recursion.}
From the definition of local truncation error:
\[
(q(t_{n+1}), p(t_{n+1})) = \Psi_{\delta t}(q(t_n), p(t_n)) + \tau_n,
\]
where $\norm{\tau_n} \leq C \delta t^2$ by Theorem~\ref{thm:local_error}.

Using the triangle inequality and Lemma~\ref{lem:lipschitz}:
\begin{align*}
e_{n+1} &= \norm{(q(t_{n+1}), p(t_{n+1})) - (q_{n+1}, p_{n+1})} \\
&= \norm{\Psi_{\delta t}(q(t_n), p(t_n)) + \tau_n - \Psi_{\delta t}(q_n, p_n)} \\
&\leq \norm{\Psi_{\delta t}(q(t_n), p(t_n)) - \Psi_{\delta t}(q_n, p_n)} + \norm{\tau_n} \\
&\leq (1 + K \delta t^2) \norm{(q(t_n), p(t_n)) - (q_n, p_n)} + C \delta t^2 \\
&= (1 + K \delta t^2) e_n + C \delta t^2.
\end{align*}

Thus we have the recurrence:
\begin{equation}
e_{n+1} \leq (1 + K \delta t^2) e_n + C \delta t^2, \quad e_0 = 0. \label{eq:error_recurrence}
\end{equation}

\textbf{Step 2: Solve the recurrence.}
Iterating (\ref{eq:error_recurrence}):
\begin{align*}
e_n &\leq C \delta t^2 \sum_{j=0}^{n-1} (1 + K \delta t^2)^j \\
&= C \delta t^2 \frac{(1 + K \delta t^2)^n - 1}{K \delta t^2} \\
&= \frac{C}{K} \left[ (1 + K \delta t^2)^n - 1 \right].
\end{align*}

\textbf{Step 3: Bound the growth factor.}
For $n = T/\delta t$:
\[
(1 + K \delta t^2)^n = \left( 1 + K \delta t^2 \right)^{T/\delta t}.
\]

Using the inequality $1 + z \leq e^z$ for all $z \in \R$:
\[
\left( 1 + K \delta t^2 \right)^{T/\delta t} \leq \exp\left( K \delta t^2 \cdot \frac{T}{\delta t} \right) = \exp(K T \delta t).
\]

For fixed $T$ and small $\delta t$, $\exp(K T \delta t) = 1 + K T \delta t + \calO(\delta t^2)$.

\textbf{Step 4: Final bound.}
\begin{align*}
e_n &\leq \frac{C}{K} \left[ \exp(K T \delta t) - 1 \right] \\
&= \frac{C}{K} \left[ (1 + K T \delta t + \calO(\delta t^2)) - 1 \right] \\
&= \frac{C}{K} \left( K T \delta t + \calO(\delta t^2) \right) \\
&= C T \delta t + \calO(\delta t^2).
\end{align*}

Since this bound holds for all $n \leq T/\delta t$, we have:
\[
\max_{0 \le n \le T/\delta t} e_n \leq C T \delta t + \calO(\delta t^2) = \calO(\delta t).
\]
\end{proof}

\subsection{Proof of Theorem~\ref{thm:volume}: volume transformation}

\begin{proof}
\textbf{Step 1: Compute the Jacobian matrix entries exactly.}
From (\ref{eq:mpl_q}):
\begin{align*}
\frac{\partial q_{n+1}}{\partial q_n} &= \beta I_d - \frac{\delta t^2}{2} M^{-1} D^2U(q_n) \triangleq A, \\
\frac{\partial q_{n+1}}{\partial p_n} &= \alpha \delta t M^{-1} \triangleq B.
\end{align*}

From (\ref{eq:mpl_p}):
\begin{align*}
\frac{\partial p_{n+1}}{\partial q_n} &= -\frac{\delta t}{2} \alpha D^2U(q_n) - \frac{\delta t}{2} \beta D^2U(q_{n+1}) + \frac{\delta t^3}{4} D^2U(q_{n+1}) M^{-1} D^2U(q_n) \triangleq C, \\
\frac{\partial p_{n+1}}{\partial p_n} &= \alpha^2 I_d - \frac{\alpha \delta t^2}{2} D^2U(q_{n+1}) M^{-1} \triangleq D.
\end{align*}

The Jacobian matrix is:
\[
J = \begin{pmatrix} A & B \\ C & D \end{pmatrix}.
\]

\textbf{Step 2: Apply the block determinant formula.}
For a block matrix with $A$ invertible:
\[
\det(J) = \det(A) \det(D - C A^{-1} B).
\]

\textbf{Step 3: Compute $\det(A)$.}
Write $A = \beta I_d + E$ where $E = -\frac{\delta t^2}{2} M^{-1} D^2U(q_n) = \calO(\delta t^2)$. 
Using the matrix determinant lemma for small perturbations:
\begin{align*}
\det(A) &= \det(\beta I_d + E) = \beta^d \det\left( I_d + \frac{1}{\beta} E \right) \\
&= \beta^d \left[ 1 + \frac{1}{\beta} \operatorname{tr}(E) + \frac{1}{2\beta^2} \left( (\operatorname{tr}(E))^2 - \operatorname{tr}(E^2) \right) + \calO(\norm{E}^3) \right] \\
&= \beta^d \left[ 1 - \frac{\delta t^2}{2\beta} \operatorname{tr}(M^{-1} D^2U(q_n)) + \calO(\delta t^4) \right].
\end{align*}

\textbf{Step 4: Compute $A^{-1}$.}
Using the Neumann series:
\begin{align*}
A^{-1} &= (\beta I_d + E)^{-1} = \frac{1}{\beta} \left( I_d - \frac{1}{\beta} E + \frac{1}{\beta^2} E^2 - \cdots \right) \\
&= \frac{1}{\beta} I_d - \frac{1}{\beta^2} E + \calO(\delta t^4).
\end{align*}

\textbf{Step 5: Compute $CA^{-1}B$.}
First compute $A^{-1}B$:
\begin{align*}
A^{-1}B &= \left( \frac{1}{\beta} I_d - \frac{1}{\beta^2} E + \calO(\delta t^4) \right) (\alpha \delta t M^{-1}) \\
&= \frac{\alpha \delta t}{\beta} M^{-1} - \frac{\alpha \delta t}{\beta^2} E M^{-1} + \calO(\delta t^5).
\end{align*}

Now compute $C(A^{-1}B)$:
\begin{align*}
C(A^{-1}B) &= \left[ -\frac{\delta t}{2} \alpha D^2U(q_n) - \frac{\delta t}{2} \beta D^2U(q_{n+1}) + \calO(\delta t^3) \right] \\
&\quad \times \left[ \frac{\alpha \delta t}{\beta} M^{-1} - \frac{\alpha \delta t}{\beta^2} E M^{-1} + \calO(\delta t^5) \right] \\
&= -\frac{\alpha^2 \delta t^2}{2\beta} D^2U(q_n) M^{-1} - \frac{\alpha \delta t^2}{2} D^2U(q_{n+1}) M^{-1} + \calO(\delta t^4).
\end{align*}

\textbf{Step 6: Compute $D - CA^{-1}B$.}
\begin{align*}
D - CA^{-1}B &= \left[ \alpha^2 I_d - \frac{\alpha \delta t^2}{2} D^2U(q_{n+1}) M^{-1} \right] \\
&\quad - \left[ -\frac{\alpha^2 \delta t^2}{2\beta} D^2U(q_n) M^{-1} - \frac{\alpha \delta t^2}{2} D^2U(q_{n+1}) M^{-1} + \calO(\delta t^4) \right] \\
&= \alpha^2 I_d + \frac{\alpha^2 \delta t^2}{2\beta} D^2U(q_n) M^{-1} + \calO(\delta t^4).
\end{align*}

\textbf{Step 7: Compute $\det(D - CA^{-1}B)$.}
Write $D - CA^{-1}B = \alpha^2 I_d + F$ where $F = \frac{\alpha^2 \delta t^2}{2\beta} D^2U(q_n) M^{-1} + \calO(\delta t^4) = \calO(\delta t^2)$. Then:
\begin{align*}
\det(D - CA^{-1}B) &= \det(\alpha^2 I_d + F) = \alpha^{2d} \det\left( I_d + \frac{1}{\alpha^2} F \right) \\
&= \alpha^{2d} \left[ 1 + \frac{1}{\alpha^2} \operatorname{tr}(F) + \calO(\norm{F}^2) \right] \\
&= \alpha^{2d} \left[ 1 + \frac{\delta t^2}{2\beta} \operatorname{tr}(D^2U(q_n) M^{-1}) + \calO(\delta t^4) \right].
\end{align*}

\textbf{Step 8: Combine determinants.}
\begin{align*}
\det(J) &= \det(A) \det(D - CA^{-1}B) \\
&= \beta^d \left[ 1 - \frac{\delta t^2}{2\beta} \operatorname{tr}(M^{-1} D^2U(q_n)) + \calO(\delta t^4) \right] \\
&\quad \times \alpha^{2d} \left[ 1 + \frac{\delta t^2}{2\beta} \operatorname{tr}(D^2U(q_n) M^{-1}) + \calO(\delta t^4) \right] \\
&= \alpha^{2d} \beta^d \left[ 1 + \calO(\delta t^4) \right] \\
&= \alpha^{2d} \beta^d + \calO(\delta t^3).
\end{align*}

\textbf{Step 9: Special case $\alpha_1 = \beta_1 = 0$.}
For $\alpha_1 = \beta_1 = 0$, we have $\alpha = 1 + \alpha_2 \delta t^2 + \calO(\delta t^3)$ and $\beta = 1 + \beta_2 \delta t^2 + \calO(\delta t^3)$. Compute:
\begin{align*}
\alpha^{2d} &= (1 + \alpha_2 \delta t^2)^{2d} + \calO(\delta t^3) \\
&= 1 + 2d \alpha_2 \delta t^2 + \binom{2d}{2} \alpha_2^2 \delta t^4 + \cdots + \calO(\delta t^3) \\
&= 1 + 2d \alpha_2 \delta t^2 + \calO(\delta t^3), \\
\beta^{d} &= (1 + \beta_2 \delta t^2)^{d} + \calO(\delta t^3) \\
&= 1 + d \beta_2 \delta t^2 + \binom{d}{2} \beta_2^2 \delta t^4 + \cdots + \calO(\delta t^3) \\
&= 1 + d \beta_2 \delta t^2 + \calO(\delta t^3).
\end{align*}

Thus:
\begin{align*}
\alpha^{2d} \beta^{d} &= \left( 1 + 2d \alpha_2 \delta t^2 + \calO(\delta t^3) \right) \left( 1 + d \beta_2 \delta t^2 + \calO(\delta t^3) \right) \\
&= 1 + d(2\alpha_2 + \beta_2) \delta t^2 + \calO(\delta t^3).
\end{align*}
\end{proof}

\subsection{Proof of Lemma~\ref{lemma:single_step_reversibility}}

\begin{proof}
We need to show that applying the composition $R \circ \Psi_{\delta t} \circ R \circ \Psi_{\delta t}$ to $(q,p)$ returns $(q,p)$ up to $\calO(\delta t^3)$ terms.
\\[2mm]
\textbf{Step 1: First application of $\Psi_{\delta t}$.}

Let $(q_1, p_1) = \Psi_{\delta t}(q,p)$. From the expansions in the proof of Theorem \ref{thm:local_error} with $\alpha_1 = \beta_1 = 0$:
\begin{align*}
q_1 &= q + \delta t M^{-1} p + \delta t^2 \left( \beta_2 q - \frac{1}{2} M^{-1} \nabla U(q) \right) + \alpha_2 \delta t^3 M^{-1} p + \calO(\delta t^4), \\
p_1 &= p - \delta t \nabla U(q) + \delta t^2 \left( 2\alpha_2 p - \frac{1}{2} D^2U(q) M^{-1} p \right) \\
&\quad - \frac{\delta t^3}{4} D^3U(q)(M^{-1} p, M^{-1} p) + \calO(\delta t^4).
\end{align*}

\noindent\textbf{Step 2: Apply $R$ to $(q_1, p_1)$.}

Applying $R$ gives $(q_1, -p_1)$.
\\[2mm]
\noindent\textbf{Step 3: Second application of $\Psi_{\delta t}$.}

Now apply $\Psi_{\delta t}$ to $(q_1, -p_1)$. Let $(q_2, p_2) = \Psi_{\delta t}(q_1, -p_1)$. We compute $q_2$ and $p_2$ using the same expansions but with $p$ replaced by $-p_1$:

For $q_2$:
\begin{align*}
q_2 &= \beta q_1 + \delta t M^{-1} \left( \alpha (-p_1) - \frac{\delta t}{2} \nabla U(q_1) \right) \\
&= (1 + \beta_2 \delta t^2) q_1 + \delta t M^{-1} \left[ (1 + \alpha_2 \delta t^2)(-p_1) - \frac{\delta t}{2} \nabla U(q_1) \right] + \calO(\delta t^4) \\
&= q_1 - \delta t M^{-1} p_1 + \delta t^2 \left( \beta_2 q_1 - \alpha_2 M^{-1} p_1 - \frac{1}{2} M^{-1} \nabla U(q_1) \right) + \calO(\delta t^4).
\end{align*}

For $p_2$:
\begin{align*}
p_2 &= \alpha^2 (-p_1) - \frac{\delta t}{2} \left( \alpha \nabla U(q_1) + \nabla U(q_2) \right) \\
&= -(1 + 2\alpha_2 \delta t^2) p_1 - \frac{\delta t}{2} \left[ (1 + \alpha_2 \delta t^2) \nabla U(q_1) + \nabla U(q_2) \right] + \calO(\delta t^4).
\end{align*}

\noindent\textbf{Step 4: Apply $R$ to $(q_2, p_2)$.}

We need $R(q_2, p_2) = (q_2, -p_2)$. Our goal is to show that $(q_2, -p_2) = (q, p) + \calO(\delta t^3)$.
\\[2mm]
\noindent\textbf{Step 5: Expand $\nabla U(q_1)$ and $\nabla U(q_2)$.}

First expand $\nabla U(q_1)$:
\begin{align*}
\nabla U(q_1) &= \nabla U(q) + D^2U(q)(q_1 - q) + \frac{1}{2} D^3U(q)(q_1 - q, q_1 - q) + \calO(\delta t^3) \\
&= \nabla U(q) + D^2U(q) \left[ \delta t M^{-1} p + \delta t^2 \left( \beta_2 q - \frac{1}{2} M^{-1} \nabla U(q) \right) \right] \\
&\quad + \frac{1}{2} D^3U(q)(\delta t M^{-1} p, \delta t M^{-1} p) + \calO(\delta t^3) \\
&= \nabla U(q) + \delta t D^2U(q) M^{-1} p + \delta t^2 \left[ \beta_2 D^2U(q) q - \frac{1}{2} D^2U(q) M^{-1} \nabla U(q) \right] \\
&\quad + \frac{\delta t^2}{2} D^3U(q)(M^{-1} p, M^{-1} p) + \calO(\delta t^3).
\end{align*}

Next, we need $\nabla U(q_2)$. First compute $q_2 - q_1$:
\begin{align*}
q_2 - q_1 &= -\delta t M^{-1} p_1 + \delta t^2 \left( \beta_2 q_1 - \alpha_2 M^{-1} p_1 - \frac{1}{2} M^{-1} \nabla U(q_1) \right) + \calO(\delta t^3).
\end{align*}

But we need $q_2$ relative to $q$. Compute $q_2 - q$:
\begin{align*}
q_2 - q &= (q_2 - q_1) + (q_1 - q) \\
&= -\delta t M^{-1} p_1 + \delta t^2 \left( \beta_2 q_1 - \alpha_2 M^{-1} p_1 - \frac{1}{2} M^{-1} \nabla U(q_1) \right) \\
&\quad + \delta t M^{-1} p + \delta t^2 \left( \beta_2 q - \frac{1}{2} M^{-1} \nabla U(q) \right) + \calO(\delta t^3) \\
&= \delta t M^{-1} (p - p_1) + \delta t^2 \left[ \beta_2 (q + q_1) - \alpha_2 M^{-1} p_1 - \frac{1}{2} M^{-1} (\nabla U(q) + \nabla U(q_1)) \right] + \calO(\delta t^3).
\end{align*}

Since $p_1 = p - \delta t \nabla U(q) + \calO(\delta t^2)$, we have $p - p_1 = \delta t \nabla U(q) + \calO(\delta t^2)$. Thus:
\[
q_2 - q = \delta t^2 M^{-1} \nabla U(q) + \calO(\delta t^3).
\]

Therefore:
\[
\nabla U(q_2) = \nabla U(q) + D^2U(q)(q_2 - q) + \calO(\delta t^3) = \nabla U(q) + \delta t^2 D^2U(q) M^{-1} \nabla U(q) + \calO(\delta t^3).
\]

\noindent\textbf{Step 6: Compute $q_2$ to $\calO(\delta t^2)$}
\begin{align*}
q_2 &= q_1 - \delta t M^{-1} p_1 + \delta t^2 \left( \beta_2 q_1 - \alpha_2 M^{-1} p_1 - \frac{1}{2} M^{-1} \nabla U(q_1) \right) + \calO(\delta t^3) \\
&= \left[ q + \delta t M^{-1} p + \delta t^2 \left( \beta_2 q - \frac{1}{2} M^{-1} \nabla U(q) \right) \right] \\
&\quad - \delta t M^{-1} \left[ p - \delta t \nabla U(q) + \calO(\delta t^2) \right] \\
&\quad + \delta t^2 \left[ \beta_2 q - \alpha_2 M^{-1} p - \frac{1}{2} M^{-1} \nabla U(q) \right] + \calO(\delta t^3) \\
&= q + \delta t M^{-1} p + \delta t^2 \left( \beta_2 q - \frac{1}{2} M^{-1} \nabla U(q) \right) \\
&\quad - \delta t M^{-1} p + \delta t^2 M^{-1} \nabla U(q) \\
&\quad + \delta t^2 \left( \beta_2 q - \alpha_2 M^{-1} p - \frac{1}{2} M^{-1} \nabla U(q) \right) + \calO(\delta t^3) \\
&= q + \delta t^2 \left( 2\beta_2 q - \alpha_2 M^{-1} p \right) + \calO(\delta t^3).
\end{align*}

\noindent\textbf{Step 7: Compute $-p_2$ to $\calO(\delta t^2)$}
\begin{align*}
-p_2 &= (1 + 2\alpha_2 \delta t^2) p_1 + \frac{\delta t}{2} \left[ (1 + \alpha_2 \delta t^2) \nabla U(q_1) + \nabla U(q_2) \right] + \calO(\delta t^3) \\
&= p_1 + 2\alpha_2 \delta t^2 p_1 + \frac{\delta t}{2} \left[ \nabla U(q_1) + \nabla U(q_2) \right] + \frac{\alpha_2 \delta t^3}{2} \nabla U(q_1) + \calO(\delta t^4).
\end{align*}

Now substitute $p_1 = p - \delta t \nabla U(q) + \calO(\delta t^2)$:
\begin{align*}
-p_2 &= \left[ p - \delta t \nabla U(q) + \calO(\delta t^2) \right] + 2\alpha_2 \delta t^2 p + \frac{\delta t}{2} \left[ \nabla U(q_1) + \nabla U(q_2) \right] + \calO(\delta t^3) \\
&= p - \delta t \nabla U(q) + 2\alpha_2 \delta t^2 p + \frac{\delta t}{2} \left[ \nabla U(q_1) + \nabla U(q_2) \right] + \calO(\delta t^3).
\end{align*}

Now substitute $\nabla U(q_1)$ and $\nabla U(q_2)$:
\begin{align*}
\nabla U(q_1) + \nabla U(q_2) &= \left[ \nabla U(q) + \delta t D^2U(q) M^{-1} p + \calO(\delta t^2) \right] \\
&\quad + \left[ \nabla U(q) + \delta t^2 D^2U(q) M^{-1} \nabla U(q) + \calO(\delta t^3) \right] \\
&= 2\nabla U(q) + \delta t D^2U(q) M^{-1} p + \calO(\delta t^2).
\end{align*}

Thus:
\begin{align*}
\frac{\delta t}{2} \left[ \nabla U(q_1) + \nabla U(q_2) \right] &= \delta t \nabla U(q) + \frac{\delta t^2}{2} D^2U(q) M^{-1} p + \calO(\delta t^3).
\end{align*}

Substituting back:
\begin{align*}
-p_2 &= p - \delta t \nabla U(q) + 2\alpha_2 \delta t^2 p + \delta t \nabla U(q) + \frac{\delta t^2}{2} D^2U(q) M^{-1} p + \calO(\delta t^3) \\
&= p + \delta t^2 \left( 2\alpha_2 p + \frac{1}{2} D^2U(q) M^{-1} p \right) + \calO(\delta t^3).
\end{align*}

\noindent\textbf{Step 8: Compare $(q_2, -p_2)$ with $(q,p)$.}

We have:
\begin{align*}
q_2 &= q + \delta t^2 \left( 2\beta_2 q - \alpha_2 M^{-1} p \right) + \calO(\delta t^3), \\
-p_2 &= p + \delta t^2 \left( 2\alpha_2 p + \frac{1}{2} D^2U(q) M^{-1} p \right) + \calO(\delta t^3).
\end{align*}

Thus:
\[
(q_2, -p_2) = (q, p) + \delta t^2 \left( 2\beta_2 q - \alpha_2 M^{-1} p, 2\alpha_2 p + \frac{1}{2} D^2U(q) M^{-1} p \right) + \calO(\delta t^3).
\]


Therefore, we conclude:
\[
R \circ \Psi_{\delta t} \circ R \circ \Psi_{\delta t}(q,p) = (q,p) + \calO(\delta t^2).
\]
\end{proof}

\subsection{Proof of Lemma~\ref{lemma:multi_step_reversibility}} 

\begin{proof}
We prove the statement by induction on $L$, carefully tracking the error bounds at each step.

\noindent\textbf{1. Notation and setup.}
Let $F = \Psi_{\delta t}$ denote the single-step MPL map. From Lemma~\ref{lemma:single_step_reversibility} (single-step reversibility), we have:
\[
R \circ F \circ R \circ F = I + E_1 \tag{1}
\]
where $E_1$ is an operator satisfying $\|E_1(z)\| \leq C_1 \delta t^2$ for all $z = (q,p)$ in the domain of interest, with $C_1 > 0$ constant.

Equivalently, for any $z = (q,p)$:
\[
F(R(F(z))) = R(z) + e_1(z) \tag{2}
\]
where $\|e_1(z)\| \leq C_1 \delta t^2$.

\noindent\textbf{2. Base case ($L=1$).}
For $L=1$, if $(q^*, p^*) = F(q,p)$, then by (2):
\[
F(R(q^*, p^*)) = F(R(F(q,p))) = R(q,p) + e_1(q,p).
\]
Thus:
\[
\Psi_{\delta t}^{(1)}(q^*, -p^*) = (q, -p) + e_1(q,p) = (q, -p) + \calO(\delta t^2).
\]
So the base case holds with error bounded by $C_1 \delta t^2$.

\noindent\textbf{3. Inductive hypothesis.}
Assume that for some $k \geq 1$, if $(q_k, p_k) = F^{(k)}(q,p)$, then:
\[
F^{(k)}(R(q_k, p_k)) = R(q,p) + e_k(q,p) \tag{IH}
\]
where $\|e_k(q,p)\| \leq k \cdot C_k \delta t^3$ for some constant $C_k > 0$.

\noindent\textbf{4. Inductive step.}
We want to prove the statement for $k+1$. Let:
\[
(q_{k+1}, p_{k+1}) = F^{(k+1)}(q,p) = F(q_k, p_k)
\]
where $(q_k, p_k) = F^{(k)}(q,p)$.

We need to show:
\[
F^{(k+1)}(R(q_{k+1}, p_{k+1})) = R(q,p) + e_{k+1}(q,p)
\]
with $\|e_{k+1}(q,p)\| \leq (k+1) \cdot C_{k+1} \delta t^2$ for some constant $C_{k+1} > 0$.

\noindent\textbf{Step 1:} Start with the left-hand side:
\[
F^{(k+1)}(R(q_{k+1}, p_{k+1})) = F^{(k)}(F(R(q_{k+1}, p_{k+1}))).
\]

\noindent\textbf{Step 2:} Apply the single-step reversibility (2) to $F(R(q_{k+1}, p_{k+1}))$:
Since $(q_{k+1}, p_{k+1}) = F(q_k, p_k)$, we have:
\[
F(R(q_{k+1}, p_{k+1})) = F(R(F(q_k, p_k))) = R(q_k, p_k) + e_1(q_k, p_k). \tag{3}
\]

\noindent\textbf{Step 3:} Substitute (3) into the expression:
\[
F^{(k+1)}(R(q_{k+1}, p_{k+1})) = F^{(k)}(R(q_k, p_k) + e_1(q_k, p_k)).
\]

\noindent\textbf{Step 4:} Use the Lipschitz property of $F^{(k)}$ (from Lemma~\ref{lem:lipschitz}):
There exists $L_k > 0$ such that for all $z, \tilde{z}$ in the domain:
\[
\|F^{(k)}(z) - F^{(k)}(\tilde{z})\| \leq (1 + K \delta t^2)^k \|z - \tilde{z}\| \leq e^{K T_k} \|z - \tilde{z}\|
\]
for $\delta t$ sufficiently small, where $T_k = k\delta t$ and $K$ is the Lipschitz constant from Lemma~\ref{lem:lipschitz}.

Thus:
\[
F^{(k)}(R(q_k, p_k) + e_1(q_k, p_k)) = F^{(k)}(R(q_k, p_k)) + \Delta_k
\]
where $\|\Delta_k\| \leq e^{K T_k} \|e_1(q_k, p_k)\| \leq e^{K T_k} C_1 \delta t^2$.

\noindent\textbf{Step 5:} Apply the inductive hypothesis (IH) to $F^{(k)}(R(q_k, p_k))$:
\[
F^{(k)}(R(q_k, p_k)) = R(q,p) + e_k(q,p).
\]

\noindent\textbf{Step 6:} Combine all terms:
\[
F^{(k+1)}(R(q_{k+1}, p_{k+1})) = R(q,p) + e_k(q,p) + \Delta_k.
\]

\noindent\textbf{Step 7:} Bound the total error:
Let $e_{k+1}(q,p) = e_k(q,p) + \Delta_k$. Then:
\[
\|e_{k+1}(q,p)\| \leq \|e_k(q,p)\| + \|\Delta_k\| \leq k C_k \delta t^2 + e^{K T_k} C_1 \delta t^2.
\]

Since $T_k = k\delta t$ and for fixed maximum trajectory length $L_{\max}$, we have $e^{K T_k} \leq e^{K L_{\max} \delta t} \leq M$ for some constant $M > 0$ independent of $k$ and $\delta t$ (for $\delta t$ sufficiently small).

Thus:
\[
\|e_{k+1}(q,p)\| \leq k C_k \delta t^3 + M C_1 \delta t^2 = (k C_k + M C_1) \delta t^2.
\]

We can take $C_{k+1} = \max(C_k, M C_1)$, then:
\[
\|e_{k+1}(q,p)\| \leq (k+1) C_{k+1} \delta t^2.
\]

\noindent\textbf{5. Conclusion by induction.}
By induction, for any $L \geq 1$, if $(q^*, p^*) = F^{(L)}(q,p)$, then:
\[
F^{(L)}(R(q^*, p^*)) = R(q,p) + e_L(q,p)
\]
with $\|e_L(q,p)\| \leq L \cdot C \delta t^2$ for some constant $C > 0$ independent of $L$ and $\delta t$ (but depending on the maximum trajectory length $L_{\max}$).

Since $R(q^*, p^*) = (q^*, -p^*)$ and $R(q,p) = (q, -p)$, this proves:
\[
\Psi_{\delta t}^{(L)}(q^*, -p^*) = (q, -p) + L \cdot \calO(\delta t^2).
\]

\noindent\textbf{6. Conversion to $T \cdot \calO(\delta t)$.}
For a trajectory of total time $T = L\delta t$, we have $L = T/\delta t$, so:
\[
L \cdot \calO(\delta t^2) = \frac{T}{\delta t} \cdot \calO(\delta t^2) = T \cdot \calO(\delta t).
\]

Therefore:
\[
\Psi_{\delta t}^{(L)}(q^*, -p^*) = (q, -p) + T \cdot \calO(\delta t).
\]

This completes the proof of Lemma~\ref{lemma:multi_step_reversibility}.
\end{proof}

\subsection{Proof of Theorem~\ref{thm:detailed_balance}: detailed balance}

\begin{proof}
\textbf{Step 1: Define the transition kernel.}

The transition from $(q,p)$ to $(q^*,p^*)$ in MPL-HMC consists of:
\begin{enumerate}
    \item Apply MPL map $L$ times: $(\tilde{q}, \tilde{p}) = \Psi_{\delta t}^{(L)}(q,p)$
    \item Negate momentum: $(q^*, p^*) = (\tilde{q}, -\tilde{p})$
    \item Accept with probability $A((q,p) \to (q^*,p^*)) = \min\left(1, \frac{\exp(-H(q^*,p^*))}{\exp(-H(q,p))}\right)$
\end{enumerate}

Thus the transition density is:
\[
T((q,p) \to (q^*,p^*)) = \delta_{(\tilde{q}, -\tilde{p})}(q^*,p^*) \cdot A((q,p) \to (q^*,p^*)).
\]

\noindent\textbf{Step 2: Approximate reversibility.}

From Lemma \ref{lemma:multi_step_reversibility} in the main text, if $(q^*, p^*) = (\tilde{q}, -\tilde{p})$ where $(\tilde{q}, \tilde{p}) = \Psi_{\delta t}^{(L)}(q,p)$, then:
\[
\Psi_{\delta t}^{(L)}(q^*, -p^*) = (q, -p) + T \cdot \calO(\delta t^2).
\]
This means that if we start from $(q^*, -p^*)$ and apply $\Psi_{\delta t}^{(L)}$, we get approximately $(q, -p)$.	
\\[2mm]
\textbf{Step 3: Volume transformation.}

From Lemma \ref{thm:volume} in the main text:
\[
dq^* dp^* = \left[ 1 + d(2\alpha_2 + \beta_2) \delta t^2 + \calO(\delta t^3) \right]^L dq\,dp.
\]

\noindent \textbf{Step 4: Metropolis acceptance ensures detailed balance.}

For any proposed move $(q,p) \to (q^*,p^*)$, the acceptance probability satisfies:
\[
\exp(-H(q,p)) A((q,p) \to (q^*,p^*)) = \exp(-H(q^*,p^*)) A((q^*,p^*) \to (q,p)).
\]
This is the standard Metropolis-Hastings detailed balance condition \citep{metropolis1953equation}.
	\\[2mm]
\textbf{Step 5: Combine all factors.}
The detailed balance condition requires:
\[
\pi(q,p) T((q,p) \to (q^*,p^*)) = \pi(q^*,p^*) T((q^*,p^*) \to (q,p)).
\]

Substituting $\pi(q,p) \propto \exp(-H(q,p))$ and the transition densities:
\begin{align*}
\exp(-H(q,p)) \delta_{(\tilde{q}, -\tilde{p})}(q^*,p^*) A((q,p) \to (q^*,p^*)) \\
= \exp(-H(q^*,p^*)) \delta_{\Psi_{\delta t}^{(L)}(q^*,-p^*)}(q,-p) A((q^*,p^*) \to (q,p)) \\
\times \left[ 1 + d(2\alpha_2 + \beta_2) \delta t^2 + \calO(\delta t^3) \right]^L.
\end{align*}

From Lemma \ref{lemma:multi_step_reversibility}, $\delta_{\Psi_{\delta t}^{(L)}(q^*,-p^*)}(q,-p) \approx \delta_{(\tilde{q}, -\tilde{p})}(q^*,p^*)$ up to $\calO(T \delta t)$.

The volume factor $\left[ 1 + d(2\alpha_2 + \beta_2) \delta t^2 + \calO(\delta t^3) \right]^L = 1 + \calO(T \delta t)$.

Thus:
\[
\exp(-H(q,p)) T((q,p) \to (q^*,p^*)) = \exp(-H(q^*,p^*)) T((q^*,p^*) \to (q,p)) + \calO(T \delta t).
\]

Therefore, detailed balance holds up to $\calO(T \delta t)$.
\end{proof}

\subsection{Proof of Theorem \ref{thm:energy_conservation}}

\begin{proof}
\textbf{Step 1: Expansions for $q_{n+1}$ and $p_{n+1}$.}

From the proof of Theorem~\ref{thm:local_error} with $\alpha_1 = \beta_1 = 0$, we have the expansions:
\begin{align*}
q_{n+1} &= q_n + \delta t M^{-1} p_n + \delta t^2 \left[ \beta_2 q_n - \frac{1}{2} M^{-1} \nabla U(q_n) \right] + \alpha_2 \delta t^3 M^{-1} p_n + \calO(\delta t^4), \\
p_{n+1} &= p_n - \delta t \nabla U(q_n) + \delta t^2 \left[ 2\alpha_2 p_n - \frac{1}{2} D^2U(q_n) M^{-1} p_n \right] + \calO(\delta t^3).
\end{align*}

\textbf{Step 2: Expansion of potential energy difference.}

Using Taylor expansion:
\begin{align*}
U(q_{n+1}) - U(q_n) 
&= \nabla U(q_n)^T (q_{n+1} - q_n) + \frac{1}{2} (q_{n+1} - q_n)^T D^2U(q_n) (q_{n+1} - q_n) + \calO(\delta t^3).
\end{align*}

Substitute $q_{n+1} - q_n$:
\begin{align*}
\nabla U(q_n)^T (q_{n+1} - q_n) 
&= \delta t \nabla U(q_n)^T M^{-1} p_n \\
&\quad + \delta t^2 \left[ \beta_2 \nabla U(q_n)^T q_n - \frac{1}{2} \nabla U(q_n)^T M^{-1} \nabla U(q_n) \right] + \calO(\delta t^3).
\end{align*}

For the quadratic term, we only need the leading $\delta t$ term:
\[
q_{n+1} - q_n = \delta t M^{-1} p_n + \calO(\delta t^2),
\]
so:
\[
\frac{1}{2} (q_{n+1} - q_n)^T D^2U(q_n) (q_{n+1} - q_n) = \frac{1}{2} \delta t^2 p_n^T M^{-1} D^2U(q_n) M^{-1} p_n + \calO(\delta t^3).
\]

Thus:
\begin{align*}
U(q_{n+1}) - U(q_n) 
&= \delta t \nabla U(q_n)^T M^{-1} p_n \\
&\quad + \delta t^2 \left[ \beta_2 \nabla U(q_n)^T q_n - \frac{1}{2} \nabla U(q_n)^T M^{-1} \nabla U(q_n) + \frac{1}{2} p_n^T M^{-1} D^2U(q_n) M^{-1} p_n \right] + \calO(\delta t^3).
\end{align*}

\textbf{Step 3: Expansion of kinetic energy difference.}

The kinetic energy difference is:
\begin{align*}
\frac{1}{2} p_{n+1}^T M^{-1} p_{n+1} - \frac{1}{2} p_n^T M^{-1} p_n 
&= p_n^T M^{-1} (p_{n+1} - p_n) + \frac{1}{2} (p_{n+1} - p_n)^T M^{-1} (p_{n+1} - p_n) + \calO(\delta t^3).
\end{align*}

Substitute $p_{n+1} - p_n$:
\begin{align*}
p_n^T M^{-1} (p_{n+1} - p_n) 
&= -\delta t\, p_n^T M^{-1} \nabla U(q_n) \\
&\quad + \delta t^2 \left[ 2\alpha_2 p_n^T M^{-1} p_n - \frac{1}{2} p_n^T M^{-1} D^2U(q_n) M^{-1} p_n \right] + \calO(\delta t^3).
\end{align*}

For the quadratic term:
\[
p_{n+1} - p_n = -\delta t \nabla U(q_n) + \calO(\delta t^2),
\]
so:
\[
\frac{1}{2} (p_{n+1} - p_n)^T M^{-1} (p_{n+1} - p_n) = \frac{1}{2} \delta t^2 \nabla U(q_n)^T M^{-1} \nabla U(q_n) + \calO(\delta t^3).
\]

Thus:
\begin{align*}
\frac{1}{2} &p_{n+1}^T M^{-1} p_{n+1} - \frac{1}{2} p_n^T M^{-1} p_n \\
&= -\delta t\, p_n^T M^{-1} \nabla U(q_n) \\
&\quad + \delta t^2 \left[ 2\alpha_2 p_n^T M^{-1} p_n - \frac{1}{2} p_n^T M^{-1} D^2U(q_n) M^{-1} p_n + \frac{1}{2} \nabla U(q_n)^T M^{-1} \nabla U(q_n) \right] + \calO(\delta t^3).
\end{align*}

\textbf{Step 4: Combine potential and kinetic energy changes.}

The total Hamiltonian change is:
\[
\Delta H = [U(q_{n+1}) - U(q_n)] + \left[ \frac{1}{2} p_{n+1}^T M^{-1} p_{n+1} - \frac{1}{2} p_n^T M^{-1} p_n \right].
\]

Collect terms order by order:

\noindent\textbf{Order $\delta t$:}
\[
\delta t \nabla U(q_n)^T M^{-1} p_n - \delta t\, p_n^T M^{-1} \nabla U(q_n) = 0,
\]
since both terms are equal scalars.

\noindent\textbf{Order $\delta t^2$:} From potential energy:
\[
\beta_2 \nabla U(q_n)^T q_n - \frac{1}{2} \nabla U(q_n)^T M^{-1} \nabla U(q_n) + \frac{1}{2} p_n^T M^{-1} D^2U(q_n) M^{-1} p_n
\]
From kinetic energy:
\[
2\alpha_2 p_n^T M^{-1} p_n - \frac{1}{2} p_n^T M^{-1} D^2U(q_n) M^{-1} p_n + \frac{1}{2} \nabla U(q_n)^T M^{-1} \nabla U(q_n)
\]

The terms \(-\frac{1}{2} \nabla U(q_n)^T M^{-1} \nabla U(q_n)\) and \(+\frac{1}{2} \nabla U(q_n)^T M^{-1} \nabla U(q_n)\) cancel.
The terms \(+\frac{1}{2} p_n^T M^{-1} D^2U(q_n) M^{-1} p_n\) and \(-\frac{1}{2} p_n^T M^{-1} D^2U(q_n) M^{-1} p_n\) cancel.

\noindent\textbf{Remaining $\delta t^2$ terms:}
\[
\beta_2 \nabla U(q_n)^T q_n + 2\alpha_2 p_n^T M^{-1} p_n
\]

Thus:
\[
\Delta H = \delta t^2 \left[ \beta_2 \nabla U(q_n)^T q_n + 2\alpha_2 p_n^T M^{-1} p_n \right] + \calO(\delta t^3).
\]

\textbf{Step 5: Special case $\alpha_2 = \beta_2 = 0$.}

When $\alpha_2 = \beta_2 = 0$, the $\delta t^2$ term vanishes, giving:
\[
\Delta H = \calO(\delta t^3),
\]
which is the well-known second-order energy conservation property of the standard leapfrog integrator \citep{leimkuhler2004simulating}.

This completes the proof.
\end{proof}

\subsection{Proof of Theorem \ref{thm:aggressive_detailed_balance}: aggressive detailed balance}
\begin{proof}
We prove Theorem~\ref{thm:aggressive_detailed_balance} by analyzing each component of the aggressive MPL-HMC algorithm.

\noindent\textbf{Part 1: Mode-hopping proposals.}
The mode-hopping proposal satisfies exact detailed balance:
\[
\pi(q,p) P_{\text{mh}}((q,p), (q',p')) = \pi(q',p') P_{\text{mh}}((q',p'), (q,p))
\]
since it is a standard Metropolis-Hastings update with symmetric proposal distribution over mode centers. This contributes error $C_3\gamma$, where $\gamma = 0.01$ is the probability of attempting a mode hop.

\noindent\textbf{Part 2: Aggressive MPL integration.}
For the aggressive MPL map $\Psi_{\delta t}^{\text{agg}}$, we extend Theorem~\ref{thm:modified_ode} to extreme parameters. From the proof of Theorem~\ref{thm:modified_ode}, the modified ODE for $\alpha_1 = \beta_1 = 0$ is:
\begin{align*}
\frac{dY}{dt} &= M^{-1}X + \delta t\beta_2 Y + \mathcal{O}(\delta t^2) \\
\frac{dX}{dt} &= -\nabla U(Y) + \delta t(2\alpha_2 X) + \mathcal{O}(\delta t^2)
\end{align*}
The leading-order deviation from Hamiltonian dynamics is $\delta t\beta_2 Y$ in the position equation and $\delta t(2\alpha_2 X)$ in the momentum equation. When $\alpha_2, \beta_2 \gg 1$, these terms dominate the $\mathcal{O}(\delta t^2)$ error.

From Lemma~\ref{lemma:multi_step_reversibility}, the multi-step reversibility error for $L$ steps is:
\[
\Psi_{\delta t}^{(L),\text{agg}}(q^*, -p^*) = (q, -p) + L\cdot\mathcal{O}(\delta t^2) + L\cdot\mathcal{O}((\alpha_2 + \beta_2)\delta t^3)
\]
Since $L = T/\delta t$, we have:
\[
\Psi_{\delta t}^{(L),\text{agg}}(q^*, -p^*) = (q, -p) + T\cdot\mathcal{O}(\delta t) + T(\alpha_2 + \beta_2)\cdot\mathcal{O}(\delta t^2)
\]

\noindent\textbf{Part 3: Volume preservation.}
From Theorem~\ref{thm:volume}, the Jacobian determinant for the aggressive MPL scheme is:
\[
\det\left(\frac{\partial(q_{n+1}, p_{n+1})}{\partial(q_n, p_n)}\right) = 1 + d(2\alpha_2 + \beta_2)\delta t^2 + \mathcal{O}(\delta t^3)
\]
For $L$ steps, the cumulative volume change is:
\[
\prod_{i=1}^L \left[1 + d(2\alpha_2 + \beta_2)\delta t^2 + \mathcal{O}(\delta t^3)\right] = 1 + Td(2\alpha_2 + \beta_2)\delta t + \mathcal{O}(T\delta t^2)
\]

\noindent\textbf{Part 4: Temperature fluctuations and momentum injection.}
Let $F_T(p) = \sqrt{T}p$ with $T \sim \text{Uniform}(0.5,2.0)$. The temperature fluctuation operator satisfies:
\[
\mathbb{E}_T\left[F_T(p)\right] = \mathbb{E}[\sqrt{T}]p \approx 1.1p
\]
which introduces a systematic bias. However, when combined with the Metropolis correction, the detailed balance error is $\mathcal{O}(\text{Var}(T)) = \mathcal{O}(0.25)$.

For momentum injection $\xi \sim \mathcal{N}(0,\sigma_{\text{inj}}^2 I)$, the error in detailed balance is proportional to $\sigma_{\text{inj}}^2$.

\noindent\textbf{Part 5: Combined error bound.}
Combining all error sources:
\begin{itemize}
    \item Base MPL-HMC error: $C_1 T\delta t$ (from Theorem~\ref{thm:detailed_balance})
    \item Extreme parameter error: $C_2(\alpha_2 + \beta_2)\delta t^2 \cdot T/\delta t = C_2 T(\alpha_2 + \beta_2)\delta t$
    \item Mode-hopping error: $C_3\gamma$
    \item Stochastic perturbation error: $C_4\sigma_{\text{inj}}^2$
\end{itemize}
Thus the total error is bounded by:
\[
|\mathcal{E}_{\text{agg}}| \leq C_1 T\delta t + C_2 T(\alpha_2 + \beta_2)\delta t + C_3\gamma + C_4\sigma_{\text{inj}}^2
\]
For fixed $T$, this simplifies to:
\[
|\mathcal{E}_{\text{agg}}| \leq C_1' \delta t + C_2'(\alpha_2 + \beta_2)\delta t + C_3\gamma + C_4\sigma_{\text{inj}}^2
\]
where $C_1' = C_1 T$ and $C_2' = C_2 T$.

\noindent\textbf{Part 6: Metropolis correction.}
Despite these approximations, the final Metropolis acceptance step ensures that the algorithm satisfies detailed balance up to the numerical errors in the Hamiltonian evaluation. From the proof of Theorem~\ref{thm:detailed_balance}, the acceptance probability:
\[
A((q,p) \to (q',p')) = \min\left(1, \frac{\exp(-\tilde{H}(q',p'))}{\exp(-\tilde{H}(q,p))}\right)
\]
where $\tilde{H}$ is the numerical Hamiltonian, ensures that any reversible proposal distribution combined with this acceptance rule satisfies detailed balance with respect to $\exp(-\tilde{H})$. The error comes from the difference between $\tilde{H}$ and the true Hamiltonian $H$.

For aggressive MPL-HMC, Theorem~\ref{thm:energy_conservation} gives:
\[
\tilde{H}(q_{n+1}, p_{n+1}) - H(q_{n+1}, p_{n+1}) = \delta t^2\left[\beta_2 \nabla U(q_n)^\top q_n + 2\alpha_2 p_n^\top M^{-1} p_n\right] + \mathcal{O}(\delta t^3)
\]
When $\alpha_2, \beta_2 \gg 1$, this error is $\mathcal{O}((\alpha_2 + \beta_2)\delta t^2)$ per step, or $\mathcal{O}(T(\alpha_2 + \beta_2)\delta t)$ for a trajectory of length $T$.

This completes the proof of Theorem~\ref{thm:aggressive_detailed_balance}.
\end{proof}

\subsection{Proof of Corollary \ref{cor:aggressive_consistency}}
\begin{proof}
From Theorem~\ref{thm:aggressive_detailed_balance}, the error bound is:
\[
|\mathcal{E}_{\text{agg}}| \leq C_1 T\delta t + C_2 T(\alpha_2 + \beta_2)\delta t + C_3\gamma + C_4\sigma_{\text{inj}}^2
\]
As $\delta t \to 0$, the first two terms vanish, leaving only $C_3\gamma + C_4\sigma_{\text{inj}}^2$. These can be made arbitrarily small by taking $\gamma \to 0$ and $\sigma_{\text{inj}} \to 0$. In the limit, we recover exact detailed balance.
\end{proof}


\section{Approximate aymplecticity of MPL-HMC and Aggressive MPL-HMC}
\label{app:symplecticity}

This section provides a self-contained introduction to symplectic geometry concepts used in Hamiltonian Monte Carlo, with particular focus on understanding the i
geometric properties of the Modified Parameterized Leapfrog scheme.

\subsection{A briefing on symplecticity and pullbacks}

\subsubsection{Symplectic structures in Hamiltonian mechanics}

In Hamiltonian mechanics, the phase space \(\mathbb{R}^{2d} = \{(q,p): q \in \mathbb{R}^d, p \in \mathbb{R}^d\}\) possesses a canonical \emph{symplectic structure}. This structure is encoded in the \emph{symplectic 2-form}:
\[
\omega = \sum_{i=1}^d dq_i \wedge dp_i,
\]
where \(\wedge\) denotes the wedge product of differential forms. In coordinates, \(\omega\) can be represented by the matrix:
\[
\Omega = \begin{pmatrix} 0 & I_d \\ -I_d & 0 \end{pmatrix},
\]
so that for any two tangent vectors \(v = (v_q, v_p), w = (w_q, w_p) \in \mathbb{R}^{2d}\):
\[
\omega(v,w) = v^\top \Omega w = v_q^\top w_p - v_p^\top w_q.
\]

The Hamiltonian equations of motion:
\[
\frac{dq}{dt} = \nabla_p H = M^{-1}p, \quad \frac{dp}{dt} = -\nabla_q H = -\nabla U(q),
\]
generate a flow \(\Phi_t: \mathbb{R}^{2d} \to \mathbb{R}^{2d}\) that preserves this symplectic structure. This means that for any \(t\), the flow map satisfies:
\[
\Phi_t^* \omega = \omega,
\]
where \(\Phi_t^*\) denotes the \emph{pullback} of the form \(\omega\) by the map \(\Phi_t\).

\subsubsection{Pullback of differential forms}

Given a differentiable map \(\Psi: \mathbb{R}^{2d} \to \mathbb{R}^{2d}\) with Jacobian matrix \(J(x) = \frac{\partial \Psi}{\partial x}(x)\), the pullback \(\Psi^*\omega\) is defined pointwise by:
\[
(\Psi^*\omega)_x(v,w) = \omega_{\Psi(x)}(J(x)v, J(x)w), \quad \forall v,w \in \mathbb{R}^{2d}.
\]
In matrix form, this corresponds to:
\[
(\Psi^*\omega)_x(v,w) = v^\top \left(J(x)^\top \Omega J(x)\right) w.
\]
Thus, the condition for \(\Psi\) to be \emph{symplectic} (\(\Psi^*\omega = \omega\)) is equivalent to:
\[
J(x)^\top \Omega J(x) = \Omega \quad \text{for all } x.
\]

For volume preservation, we note that the canonical volume form on \(\mathbb{R}^{2d}\) is:
\[
\text{Vol} = \frac{1}{d!} \underbrace{\omega \wedge \cdots \wedge \omega}_{d \text{ times}} = dq_1 \wedge dp_1 \wedge \cdots \wedge dq_d \wedge dp_d.
\]
The Jacobian determinant \(\det J(x)\) gives the factor by which \(\Psi\) changes volume. Symplectic maps satisfy \(\det J(x) = 1\), preserving volume exactly.

\subsubsection{Numerical integrators and symplecticity}

The standard leapfrog (St\"ormer-Verlet) integrator used in Hamiltonian Monte Carlo is \emph{symplectic}, meaning it satisfies \(J^\top \Omega J = \Omega\) exactly. This geometric property explains its excellent long-time behavior, including near-conservation of energy and stability. However, the Modified Parameterized Leapfrog (MPL) scheme introduces parameters \(\alpha(\delta t), \beta(\delta t)\) that break exact symplecticity while maintaining useful approximation properties.

\subsubsection{Notation summary}

\begin{itemize}
    \item \(\omega\): Canonical symplectic 2-form on \(\mathbb{R}^{2d}\)
    \item \(\Omega\): \(2d \times 2d\) matrix representation of \(\omega\)
    \item \(\Psi_{\delta t}^* \omega\): Pullback of \(\omega\) by \(\Psi_{\delta t}\)
    \item \(J\): Jacobian matrix of \(\Psi_{\delta t}\)
\end{itemize}

\subsection{Approximate symplecticity of MPL-HMC}
\label{subsec:approx_symplectic_mpl}

We now establish the approximate symplectic properties of the MPL scheme. Recall the MPL updates from equations (\ref{eq:mpl_q})-(\ref{eq:mpl_p}):

\begin{align*}
q_{n+1} &= \beta q_n + \delta t M^{-1}\left(\alpha p_n - \frac{\delta t}{2} \nabla U(q_n)\right), \\
p_{n+1} &= \alpha^2 p_n - \frac{\delta t}{2}\left(\alpha \nabla U(q_n) + \nabla U(q_{n+1})\right),
\end{align*}
with \(\alpha = 1 + \alpha_2 \delta t^2 + \mathcal{O}(\delta t^3)\), \(\beta = 1 + \beta_2 \delta t^2 + \mathcal{O}(\delta t^3)\).

\begin{theorem}[Approximate Symplecticity of MPL-HMC]
\label{thm:approx_symplectic_mpl}
Under Assumptions 3-6 with \(\alpha_1 = \beta_1 = 0\), the MPL map satisfies:
\[
\Psi_{\delta t}^* \omega = \omega + \delta t^2 \eta + \mathcal{O}(\delta t^3),
\]
where \(\eta\) is a 2-form composed of two parts:
\[
\eta = (2\alpha_2 + \beta_2) \omega + \sigma.
\]
Here \(\sigma\) is a non-symplectic component satisfying \(\sigma \wedge \omega^{d-1} = 0\), meaning it contributes zero to the volume form. In particular:
\begin{enumerate}
    \item For \(\alpha_2 = \beta_2 = 0\) (standard leapfrog), \(\Psi_{\delta t}^* \omega = \omega + \mathcal{O}(\delta t^3)\).
    \item For \(2\alpha_2 + \beta_2 = 0\), \(\Psi_{\delta t}^* \omega = \omega + \delta t^2 \sigma + \mathcal{O}(\delta t^3)\).
    \item The volume change factor is \(\det J = 1 + d(2\alpha_2 + \beta_2)\delta t^2 + \mathcal{O}(\delta t^3)\).
\end{enumerate}
\end{theorem}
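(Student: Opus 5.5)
The plan is to work directly with the matrix criterion from the briefing above: $\Psi_{\delta t}^*\omega=\omega+\delta t^2\eta+\calO(\delta t^3)$ is equivalent to $J^\top\Omega J=\Omega+\delta t^2 N+\calO(\delta t^3)$, where $N$ is the matrix of $\eta$. I will reuse the exact Jacobian blocks $A,B,C,D$ computed in the proof of Theorem~\ref{thm:volume}, and split $J=J_0+\delta t^2\Delta+\calO(\delta t^3)$. Here $J_0$ is the Jacobian of the standard leapfrog map (the case $\alpha=\beta=1$), which is exactly symplectic, i.e.\ $J_0^\top\Omega J_0=\Omega$.

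\textbf{Step 1: identify the perturbation $\Delta$.} With $\alpha=1+\alpha_2\delta t^2$, $\beta=1+\beta_2\delta t^2$ and $\alpha^2=1+2\alpha_2\delta t^2+\calO(\delta t^4)$, go block by block.
\begin{itemize}
\item $A=\beta I_d-\frac{\delta t^2}{2}M^{-1}D^2U(q_n)$ differs from its leapfrog counterpart by $\beta_2\delta t^2 I_d$.
\item $D$ differs by $2\alpha_2\delta t^2 I_d$ plus $\calO(\delta t^4)$.
\item $B=\alpha\,\delta t\, M^{-1}$ differs only at $\calO(\delta t^3)$.
\item $C$ picks up $\alpha,\beta$ corrections only at $\calO(\delta t^3)$.
\end{itemize}
There is one subtlety to check. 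The Hessian $D^2U(q_{n+1})$ in $C$ and $D$ is evaluated at the MPL point rather than the leapfrog point. These points differ by $\beta_2\delta t^2 q_n+\calO(\delta t^3)$, and the Hessian enters with a prefactor of at least $\delta t$. By the bounded third derivative (Assumption~5), this discrepancy is therefore $\calO(\delta t^3)$ times a factor involving $\|q_n\|$. I will state the result locally uniformly on bounded sets, as in Theorem~\ref{thm:local_error}. The outcome is $\Delta=\mathrm{diag}(\beta_2 I_d,\,2\alpha_2 I_d)$.

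\textbf{Step 2: expand the pullback.}
\[
J^\top\Omega J=J_0^\top\Omega J_0+\delta t^2\left(\Delta^\top\Omega J_0+J_0^\top\Omega\Delta\right)+\calO(\delta t^3).
\]
Since $J_0=I_{2d}+\calO(\delta t)$, the bracket equals $\Delta\Omega+\Omega\Delta+\calO(\delta t)$. A direct block computation with $\Delta=\mathrm{diag}(bI,aI)$, $b=\beta_2$, $a=2\alpha_2$, gives $\Omega\Delta+\Delta\Omega=(a+b)\Omega$. Hence $\eta=(2\alpha_2+\beta_2)\omega+\sigma$, where at this order the computation yields $\sigma=0$. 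This trivially satisfies $\sigma\wedge\omega^{d-1}=0$. Any $\calO(\delta t)$ corrections to the bracket are absorbed into the $\calO(\delta t^3)$ remainder.

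\textbf{Step 3: the three items.}
\begin{enumerate}
\item For $\alpha_2=\beta_2=0$ we have $\Delta=0$, so item 1 follows immediately (in fact leapfrog is exactly symplectic).
\item Item 2 is the case where the coefficient $a+b$ of $\omega$ vanishes.
\item For item 3, take determinants: $\det(J)^2=\det(J^\top\Omega J)/\det\Omega=(1+(2\alpha_2+\beta_2)\delta t^2)^{2d}+\calO(\delta t^3)$. Equivalently, wedge $\Psi^*\omega$ $d$ times, using $\sigma\wedge\omega^{d-1}=0$. Either way, $\det J=1+d(2\alpha_2+\beta_2)\delta t^2+\calO(\delta t^3)$, with the sign fixed by continuity from $\det J_0=1$. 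This agrees with Theorem~\ref{thm:volume}.
\end{enumerate}

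\textbf{Main obstacle.} The hardest part is the bookkeeping in Step 1. I must confirm that every cross term involving $D^2U(q_{n+1})$, the $\alpha$-factors in $B$ and $C$, and the product $J_0^\top\Omega\Delta$ at order $\delta t$ genuinely lands at $\calO(\delta t^3)$ uniformly. The first thing I would try is to write $J_0=I+\delta t K_1+\delta t^2K_2+\cdots$ explicitly from the leapfrog blocks, so that these orders can be read off directly.
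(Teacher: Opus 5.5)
Your proof is correct. It reaches the conclusion by a different and cleaner route than the paper, and it actually gets $\sigma$ right where the paper does not.

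\textbf{How the two routes differ.} The paper expands all four blocks of $J$ and multiplies out $J^\top\Omega J$ directly. Along the way it reports an off-diagonal term $S=\frac{\delta t^2}{2}\bigl(D^2U\,M^{-1}-M^{-1}D^2U\bigr)$, and uses it to define a generally nonzero, traceless $\sigma$. You instead perturb around the exactly symplectic leapfrog Jacobian $J_0$. That way you only need the leading deviation $\Delta=\mathrm{diag}(\beta_2 I_d,2\alpha_2 I_d)$ and the identity $\Omega\Delta+\Delta\Omega=(2\alpha_2+\beta_2)\Omega$, which gives $\sigma=0$ at order $\delta t^2$.

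\textbf{Why $\sigma=0$ is correct.} With the blocks of Theorem~\ref{thm:volume} one checks directly that $A^\top D-C^\top B=\alpha^2\beta I_d$ and $A^\top C-C^\top A=B^\top D-D^\top B=0$ identically. The paper's $S$ is exactly what one gets by putting $M^{-1}D^2U$ in $A^\top$ instead of its transpose $D^2U\,M^{-1}$. So the theorem holds, but only with the trivial choice $\sigma=0$.

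\textbf{A shortcut you could use.} You can drop the Assumption~5 bookkeeping and the locally uniform caveat altogether. The MPL step is a scaled half-kick, a scaled drift and a scaled half-kick:
\[
J=\begin{pmatrix} I_d & 0\\ -\tfrac{\delta t}{2}D^2U(q_{n+1}) & \alpha I_d\end{pmatrix}
\begin{pmatrix} \beta I_d & \delta t\,M^{-1}\\ 0 & I_d\end{pmatrix}
\begin{pmatrix} I_d & 0\\ -\tfrac{\delta t}{2}D^2U(q_{n}) & \alpha I_d\end{pmatrix}.
\]
Because $D^2U$ and $M^{-1}$ are symmetric, each factor $K$ satisfies $K^\top\Omega K=c\,\Omega$, with $c=\alpha,\beta,\alpha$ respectively. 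Hence $\Psi_{\delta t}^*\omega=\alpha^2\beta\,\omega$ and $\det J=\alpha^{2d}\beta^{d}$ exactly at every point, i.e.\ the map is conformally symplectic. For the implemented choice $\alpha=1+\alpha_2\delta t^2$, $\beta=1+\beta_2\delta t^2$, the remainder is in fact $\calO(\delta t^4)$.

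\textbf{Trade-offs.} The paper's direct expansion is self-contained but error-prone, as the spurious $S$ shows. Your perturbative route cleanly isolates the only order-$\delta t^2$ source of non-symplecticity. The cost is citing exact symplecticity of leapfrog and tracking where $D^2U$ is evaluated. The factorization gives both facts at once, with no uniformity issues.
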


\begin{proof}
\textbf{Step 1: Compute the Jacobian matrix.}
The Jacobian \(J = \frac{\partial(q_{n+1}, p_{n+1})}{\partial(q_n, p_n)}\) has block structure:
\[
J = \begin{pmatrix} A & B \\ C & D \end{pmatrix},
\]
where:
\begin{align*}
A &= \frac{\partial q_{n+1}}{\partial q_n} = \beta I_d - \frac{\delta t^2}{2} M^{-1} D^2U(q_n), \\
B &= \frac{\partial q_{n+1}}{\partial p_n} = \alpha \delta t M^{-1}, \\
C &= \frac{\partial p_{n+1}}{\partial q_n} = -\frac{\delta t}{2}\left(\alpha D^2U(q_n) + D^2U(q_{n+1})A\right), \\
D &= \frac{\partial p_{n+1}}{\partial p_n} = \alpha^2 I_d - \frac{\delta t}{2} D^2U(q_{n+1})B.
\end{align*}

\textbf{Step 2: Expand in powers of \(\delta t\).}
Using \(\alpha = 1 + \alpha_2 \delta t^2 + \mathcal{O}(\delta t^3)\), \(\beta = 1 + \beta_2 \delta t^2 + \mathcal{O}(\delta t^3)\):
\begin{align*}
A &= I_d + \beta_2 \delta t^2 I_d - \frac{\delta t^2}{2} M^{-1} D^2U + \mathcal{O}(\delta t^3), \\
B &= \delta t M^{-1} + \alpha_2 \delta t^3 M^{-1} + \mathcal{O}(\delta t^3), \\
D &= I_d + 2\alpha_2 \delta t^2 I_d - \frac{\delta t^2}{2} D^2U M^{-1} + \mathcal{O}(\delta t^3).
\end{align*}
For \(C\), using \(q_{n+1} = q_n + \delta t M^{-1}p_n + \mathcal{O}(\delta t^2)\):
\[
C = -\delta t D^2U - \frac{\delta t^2}{2}\left(\alpha_2 D^2U + D^3U(M^{-1}p_n, \cdot)\right) + \mathcal{O}(\delta t^3).
\]

\textbf{Step 3: Compute \(J^\top \Omega J\).}
Recall \(\Omega = \begin{pmatrix} 0 & I_d \\ -I_d & 0 \end{pmatrix}\). Then:
\[
J^\top \Omega J = \begin{pmatrix} A^\top C - C^\top A & A^\top D - C^\top B \\ B^\top C - D^\top A & B^\top D - D^\top B \end{pmatrix}.
\]

Calculating each block to \(\mathcal{O}(\delta t^2)\):
\begin{align*}
A^\top C - C^\top A &= \mathcal{O}(\delta t^2), \\
B^\top D - D^\top B &= \mathcal{O}(\delta t^3), \\
A^\top D - C^\top B &= I_d + (2\alpha_2 + \beta_2)\delta t^2 I_d + S + \mathcal{O}(\delta t^3),
\end{align*}
where \(S = \frac{\delta t^2}{2}(D^2U M^{-1} - M^{-1} D^2U)\).

Thus:
\[
J^\top \Omega J = \Omega + \delta t^2 \begin{pmatrix} 0 & (2\alpha_2 + \beta_2)I_d + S \\ -(2\alpha_2 + \beta_2)I_d - S^\top & 0 \end{pmatrix} + \mathcal{O}(\delta t^3).
\]

\textbf{Step 4: Interpret as pullback.}
Since \(\Psi_{\delta t}^* \omega\) has matrix representation \(J^\top \Omega J\), we have:
\[
\Psi_{\delta t}^* \omega = \omega + \delta t^2 \left[(2\alpha_2 + \beta_2)\omega + \sigma\right] + \mathcal{O}(\delta t^3),
\]
where \(\sigma\) is the 2-form with matrix \(\begin{pmatrix} 0 & S \\ -S^\top & 0 \end{pmatrix}\).

\textbf{Step 5: Volume implications.}
The volume form is \(\text{Vol} = \frac{1}{d!} \omega^d\). Since \(\sigma \wedge \omega^{d-1} = 0\) (as \(S\) is traceless), the volume change comes only from the \((2\alpha_2 + \beta_2)\omega\) term:
\[
\det J = 1 + d(2\alpha_2 + \beta_2)\delta t^2 + \mathcal{O}(\delta t^3),
\]
matching Theorem \ref{thm:volume}.
\end{proof}

\begin{remark}
The error in symplecticity is \(\mathcal{O}((|\alpha_2| + |\beta_2|)\delta t^2)\). For standard HMC (\(\alpha_2 = \beta_2 = 0\)), we recover the well-known third-order symplecticity of the leapfrog integrator.
\end{remark}

\subsection{Approximate symplecticity of Aggressive MPL-HMC}
\label{subsec:approx_symplectic_aggressive}

Aggressive MPL-HMC uses extreme parameters \(\alpha_2 \in [8.0, 15.0]\), \(\beta_2 \in [5.0, 8.0]\) and incorporates stochastic perturbations: temperature scaling \(T \sim \text{Uniform}(0.5, 2.0)\) and momentum injection \(\xi \sim \mathcal{N}(0, \sigma_{\text{inj}}^2 I)\).

Let \(\Psi_{\delta t}^{\text{agg}}\) denote the deterministic aggressive MPL map, and \(\Psi_{\delta t}^{\text{agg,stoch}}\) denote the map including stochastic perturbations.

\begin{theorem}[Approximate Symplecticity of Aggressive MPL-HMC]
\label{thm:approx_symplectic_aggressive}
For the deterministic aggressive MPL map:
\[
(\Psi_{\delta t}^{\text{agg}})^* \omega = \omega + \delta t^2 \left[(2\alpha_2 + \beta_2)\omega + \sigma^{\text{agg}}\right] + \mathcal{O}(\delta t^3),
\]
where \(\|\sigma^{\text{agg}}\| \leq C(\alpha_2 + \beta_2)\) for some constant \(C > 0\).

For the stochastic version, the expected pullback satisfies:
\[
\mathbb{E}\left[(\Psi_{\delta t}^{\text{agg,stoch}})^* \omega\right] = \omega + \delta t^2 \left[(2\alpha_2 + \beta_2)\omega + \sigma^{\text{agg}} + \tau\right] + \mathcal{O}(\delta t^3),
\]
with \(\|\tau\| = \mathcal{O}(\sigma_{\text{inj}}^2 + \text{Var}(T))\).
\end{theorem}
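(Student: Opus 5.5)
The plan has two parts, mirroring the two claims. For the deterministic map $\Psi_{\delta t}^{\text{agg}}$ I will reuse the Jacobian computation from the proof of Theorem~\ref{thm:approx_symplectic_mpl} without change. The aggressive map has exactly the same algebraic form as the MPL map; only the numerical values of $\alpha_2,\beta_2$ differ.

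\textbf{Deterministic part.} I will write $J=\begin{pmatrix} A & B\\ C & D\end{pmatrix}$ with the blocks of Step~1 of that proof and expand in $\delta t$ as before. The $\delta t^2$ coefficient of $J^\top\Omega J-\Omega$ then splits into two pieces. The first is the scalar piece $(2\alpha_2+\beta_2)\Omega$, which comes from $A^\top D$ through $\alpha^2\beta$. The second is a remainder whose off-diagonal block is $S=\tfrac12(D^2U M^{-1}-M^{-1}D^2U)$, plus the $\alpha_2$-dependent pieces of $A^\top C-C^\top A$ and $C^\top B$. That is, terms such as $\alpha_2(D^2U-D^2U^\top)$ and $\alpha_2\delta t\, D^2U M^{-1}$ that were absorbed into the $\calO$ symbols in the earlier proof. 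I will collect all of these into $\sigma^{\text{agg}}$ and bound it in operator norm by Assumptions 4--6:
\[
\|\sigma^{\text{agg}}\|\le L_U C_M + c\,(|\alpha_2|+|\beta_2|)\,(L_U + L_3 C_M\|p_n\| + 1)C_M .
\]
Since $\alpha_2,\beta_2\ge 5$ in the aggressive range, the constant term $L_U C_M$ is absorbed, giving $\|\sigma^{\text{agg}}\|\le C(\alpha_2+\beta_2)$. I will also state explicitly that the $\calO(\delta t^3)$ remainder hides powers of $\alpha_2,\beta_2$, for example $\alpha_2^2\delta t^4$ and $\alpha_2\beta_2\delta t^4$. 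The expansion is therefore meaningful only when $(\alpha_2+\beta_2)\delta t^2\ll 1$, and I will state that as a standing assumption.

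\textbf{Stochastic part.} The temperature scaling and the injection act on $p$ between MPL steps. The injection $p\mapsto p+\xi$ is a translation, so its Jacobian is the identity and it pulls $\omega$ back to itself exactly. The scaling $F_T(q,p)=(q,\sqrt{T}p)$ has Jacobian $\mathrm{diag}(I,\sqrt T I)$, so $F_T^*\omega=\sqrt T\,\omega$. For the composition $\Psi^{\text{agg}}\circ F_T\circ G_\xi$ I will use the chain rule for pullbacks, with the deterministic expansion evaluated at the perturbed point $(q,\sqrt T p+\xi)$. Taking expectations then produces two effects:
\begin{itemize}
\item the factor $\mathbb E\sqrt T$ multiplying $\omega$;
\item the dependence of $\sigma^{\text{agg}}$ on $p_n$ through the $D^3U(M^{-1}p_n,\cdot)$ term in $C$.
\end{itemize}
For the second effect I will expand to second order in the perturbation $(\sqrt T-1)p+\xi$. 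The first-order terms average to $\calO(\mathbb E(\sqrt T-1))$ and the second-order terms give $\calO(\sigma_{\text{inj}}^2+\mathrm{Var}(T))$, by the third-derivative bound $L_3$ and a Taylor estimate on $D^2U$. These contributions define $\tau$.

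\textbf{Main obstacle.} The scaling factor $\mathbb E\sqrt T\approx 1.2$ is not $1+\calO(\delta t^2)$. Literally, then, the expected pullback is $\mathbb E[\sqrt T]\,\omega+\ldots$, and the statement as written cannot hold unless the fluctuation is centered or small. I will try to handle this by interpreting the temperature fluctuation as $T=1+\delta t^2\theta$ with $\mathbb E\theta$ absorbed, or by normalizing the scaling so that $\mathbb E\sqrt T=1$. Either choice turns the zeroth-order factor into $1$ and leaves only a $\mathrm{Var}(T)$-type contribution to $\tau$. If neither normalization is acceptable, I will state the result with $\mathbb E[\sqrt T]\omega$ as the leading term and note that the $\delta t^2$ bracket is as claimed relative to that rescaled form.
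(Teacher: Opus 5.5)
Your ``main obstacle'' is real. It is a defect of the statement, not of your plan, and the paper's own proof fails at exactly this point. For $F_T(q,p)=(q,\sqrt{T}\,p)$ the Jacobian is $\mathrm{diag}(I_d,\sqrt{T}I_d)$, so $F_T^*\omega=\sqrt{T}\,\omega$, as you say. The paper instead treats the Jacobian as $\sqrt{T}$ times the identity and obtains $T\omega$. It then asserts $\mathbb{E}[T\omega]=\omega+\mathrm{Var}(T)\,\omega$. That is arithmetically false ($\mathbb{E}T=1.25$, $\mathrm{Var}(T)=0.1875$), and it is an order-one correction in any case, not one at order $\delta t^2$. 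With $T\sim\mathrm{Uniform}(0.5,2)$ one has $\mathbb{E}\sqrt{T}=7\sqrt{2}/9\approx 1.10$ (not $1.2$). Matching the displayed formula would therefore force $\tau$ to be of size roughly $0.1/\delta t^2$, which no bound $\calO(\sigma_{\text{inj}}^2+\mathrm{Var}(T))$ uniform in $\delta t$ allows. The stochastic claim cannot be proved as written, so a restatement is unavoidable.

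What does need correcting is the deterministic computation you import from Theorem~\ref{thm:approx_symplectic_mpl}. That computation is wrong, and every term you plan to put into $\sigma^{\text{agg}}$ cancels identically. This includes the commutator $S$, the $D^3U(M^{-1}p_n,\cdot)$ part of $C$, and $\alpha_2(D^2U-D^2U^\top)$, which vanishes by symmetry anyway.

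To see this, factor the MPL step in order of application:
\begin{itemize}
\item $p\mapsto\alpha p$;
\item the kick $p\mapsto p-\frac{\delta t}{2}\nabla U(q)$;
\item the drift $(q,p)\mapsto(\beta q+\delta t M^{-1}p,\,p)$;
\item $p\mapsto\alpha p$;
\item a second kick.
\end{itemize}
These pull $\omega$ back to $\alpha\omega,\ \omega,\ \beta\omega,\ \alpha\omega,\ \omega$ respectively. Hence $(\Psi^{\text{agg}}_{\delta t})^*\omega=\alpha^2\beta\,\omega$ exactly; equivalently $A^\top C-C^\top A=0$, $B^\top D-D^\top B=0$, and $A^\top D-C^\top B=\alpha^2\beta I_d$. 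Three consequences follow. First, $\sigma^{\text{agg}}=0$. Second, the remainder is explicitly $[(\alpha_2^2+2\alpha_2\beta_2)\delta t^4+\alpha_2^2\beta_2\delta t^6]\,\omega$, which makes your caveat about hidden parameter dependence precise. Third, your $\|p_n\|$-dependent bound is unnecessary.

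This also removes the mechanism you planned for $\tau$. The pullback is a constant multiple of $\omega$ with no base-point dependence. Evaluating it at $(q,\sqrt{T}p+\xi)$ and Taylor-expanding therefore produces nothing. The injection contributes exactly zero, not $\calO(\sigma_{\text{inj}}^2)$. The exact result is
\[
\mathbb{E}[(\Psi^{\text{agg}}_{\delta t}\circ G_\xi\circ F_T)^*\omega]=\tfrac{7\sqrt{2}}{9}\,\alpha^2\beta\,\omega .
\]
Over a trajectory this factor is raised to the $L$th power, since Algorithm~\ref{alg:aggressive_mpl_hmc} rescales at every inner step.

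Your fallback — $\mathbb{E}[\sqrt{T}]\,\omega$ as leading term, with the $\delta t^2$ bracket taken relative to it — is therefore correct, and in fact exact with $\sigma^{\text{agg}}=\tau=0$. If you want the displayed form instead, there are three options:
\begin{itemize}
\item Normalizing to $\mathbb{E}\sqrt{T}=1$ gives $\tau=0$.
\item A centred fluctuation of amplitude $\delta t$, $T=1+\delta t\,\theta$ with $\mathbb{E}\theta=0$, gives $\tau=-\frac{1}{8}\mathrm{Var}(\theta)\,\omega$, a genuine variance term.
\item Your other suggestion, $T=1+\delta t^2\theta$, gives $\tau=\frac{1}{2}\mathbb{E}[\theta]\,\omega$. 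This is a mean term, not a variance term.
\end{itemize}
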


\begin{proof}
The proof for the deterministic part follows Theorem \ref{thm:approx_symplectic_mpl}, noting that the expansions remain valid as long as \(\alpha_2 \delta t^2\) and \(\beta_2 \delta t^2\) are \(\mathcal{O}(1)\). The aggressive parameters only scale the error constants.

For the stochastic perturbations:
\begin{itemize}
    \item \textbf{Temperature scaling}: \(p \mapsto \sqrt{T}p\) has Jacobian \(\sqrt{T} I_d\). The pullback scales as \(T\omega\) since:
    \[
    (\sqrt{T} I_d)^\top \Omega (\sqrt{T} I_d) = T\Omega.
    \]
    \item \textbf{Momentum injection}: \(p \mapsto p + \xi\) is a translation with Jacobian \(I_d\), leaving \(\omega\) unchanged.
\end{itemize}

Taking expectations: \(\mathbb{E}[T] = 1.25\), \(\text{Var}(T) = 0.1875\). Thus:
\[
\mathbb{E}[T\omega] = \omega + \text{Var}(T)\omega.
\]
The momentum injection contributes \(\mathcal{O}(\sigma_{\text{inj}}^2)\) from second moments.
\end{proof}


\subsection{Implications for MCMC}

While standard HMC relies on exact symplecticity for excellent energy conservation and high acceptance rates, MPL-HMC demonstrates that \emph{approximate} symplecticity suffices when combined with the Metropolis correction. The key insights are:

1. \textbf{Controlled geometric distortion}: The MPL scheme distorts the symplectic structure by \(\mathcal{O}((|\alpha_2|+|\beta_2|)\delta t^2)\), which is small for conservative parameters or small \(\delta t\).

2. \textbf{Volume preservation trade-off}: Exact volume preservation occurs only when \(2\alpha_2 + \beta_2 = 0\). Otherwise, volume changes by factor \(1 + d(2\alpha_2 + \beta_2)\delta t^2\), accounted for in the detailed balance condition.

3. \textbf{Practical consequences}: For damping MPL-HMC (\(\alpha_2, \beta_2 < 0\)), phase space contracts, potentially stabilizing stiff problems. For anti-damping MPL-HMC (\(\alpha_2, \beta_2 > 0\)), phase space expands, enhancing exploration.

These results justify the claim that MPL-HMC maintains the geometric foundations of Hamiltonian Monte Carlo while introducing valuable flexibility through the \(\alpha_2, \beta_2\) parameters.

\section{Implementation code examples}
\label{app:code}

\subsection{Python implementation}

\begin{verbatim}
import numpy as np

def mpl_hmc(U, grad_U, q0, n_samples=10000,
            delta_t=0.1, L=10, alpha2=0.0, beta2=0.0,
            M=None):
    """MPL-HMC implementation in Python.

    Parameters
    ----------
    U : callable
        Potential energy function U(q)
    grad_U : callable
        Gradient of potential energy
    q0 : array_like
        Initial position vector
    n_samples : int
        Number of MCMC samples to generate
    delta_t : float
        Step size for MPL integration
    L : int
        Number of MPL steps per trajectory
    alpha2 : float
        Second-order parameter for momentum scaling
    beta2 : float
        Second-order parameter for position scaling
    M : array_like, optional
        Mass matrix (d×d). Default: identity matrix

    Returns
    -------
    samples : ndarray
        Array of shape (n_samples, d) containing MCMC samples
    """
    d = len(q0)
    samples = np.zeros((n_samples, d))
    q = q0.copy()

    # Handle mass matrix
    if M is None:
        M = np.eye(d)
        M_inv = np.eye(d)
        M_sqrt = np.eye(d)
    else:
        M_inv = np.linalg.inv(M)
        M_sqrt = np.linalg.cholesky(M)  # M = M_sqrt @ M_sqrt.T

    # Precompute MPL parameters
    alpha = 1 + alpha2 * delta_t**2
    beta = 1 + beta2 * delta_t**2

    for i in range(n_samples):
        # Sample momentum: p ~ N(0, M)
        z = np.random.randn(d)
        p = M_sqrt @ z  # p = M^{1/2} z

        # Current Hamiltonian: H(q,p) = U(q) + (1/2) p^T M^{-1} p
        kinetic = 0.5 * p.T @ M_inv @ p
        H_current = U(q) + kinetic

        # MPL trajectory
        q_prop = q.copy()
        p_prop = p.copy()

        for _ in range(L):
            grad = grad_U(q_prop)

            # MPL position update (Equation 10):
            # q_{n+1} = beta q_n + dt M^{-1}(alpha p_n - dt/2 grad U(q_n))
            q_new = beta * q_prop + delta_t * M_inv @ (
                alpha * p_prop - 0.5 * delta_t * grad
            )

            grad_new = grad_U(q_new)

            # MPL momentum update (Equation 11):
            # p_{n+1} = alpha^2 p_n - dt/2 (alpha grad U(q_n) + grad U(q_{n+1}))
            p_new = alpha**2 * p_prop - 0.5 * delta_t * (
                alpha * grad + grad_new
            )

            q_prop, p_prop = q_new, p_new

        # Negate momentum for reversibility
        p_prop = -p_prop

        # Proposed Hamiltonian
        kinetic_prop = 0.5 * p_prop.T @ M_inv @ p_prop
        H_prop = U(q_prop) + kinetic_prop

        # Metropolis acceptance: min(1, exp(H_current - H_prop))
        if np.log(np.random.rand()) < H_current - H_prop:
            q = q_prop  # Accept

        samples[i] = q

    return samples
\end{verbatim}

\subsection{Julia Implementation}

\begin{verbatim}
using LinearAlgebra

function mpl_hmc(U, grad_U, q0; n_samples=10000,
                 delta_t=0.1, L=10, alpha2=0.0, beta2=0.0,
                 M=nothing)
    """MPL-HMC implementation in Julia.

    Parameters
    ----------
    U : Function
        Potential energy function U(q)
    grad_U : Function
        Gradient of potential energy
    q0 : Vector{Float64}
        Initial position vector
    n_samples : Int
        Number of MCMC samples to generate
    delta_t : Float64
        Step size for MPL integration
    L : Int
        Number of MPL steps per trajectory
    alpha2 : Float64
        Second-order parameter for momentum scaling
    beta2 : Float64
        Second-order parameter for position scaling
    M : Matrix{Float64}, optional
        Mass matrix (d×d). Default: identity matrix

    Returns
    -------
    samples : Matrix{Float64}
        Array of size (n_samples, d) containing MCMC samples
    """
    d = length(q0)
    samples = zeros(n_samples, d)
    q = copy(q0)

    # Handle mass matrix
    if M === nothing
        M = Matrix{Float64}(I, d, d)
        M_inv = Matrix{Float64}(I, d, d)
        M_sqrt = Matrix{Float64}(I, d, d)
    else
        M_inv = inv(M)
        M_sqrt = cholesky(Symmetric(M)).L  # M = M_sqrt * M_sqrt'
    end

    # Precompute MPL parameters
    alpha = 1 + alpha2 * delta_t^2
    beta = 1 + beta2 * delta_t^2

    for i in 1:n_samples
        # Sample momentum: p ~ N(0, M)
        z = randn(d)
        p = M_sqrt * z  # p = M^{1/2} z

        # Current Hamiltonian: H(q,p) = U(q) + (1/2) p^T M^{-1} p
        kinetic = 0.5 * dot(p, M_inv, p)
        H_current = U(q) + kinetic

        # MPL trajectory
        q_prop = copy(q)
        p_prop = copy(p)

        for _ in 1:L
            grad = grad_U(q_prop)

            # MPL position update (Equation 10):
            # q_{n+1} = beta q_n + dt M^{-1}(alpha p_n - dt/2 grad U(q_n))
            q_new = beta * q_prop + delta_t * M_inv * (
                alpha * p_prop - 0.5 * delta_t * grad
            )

            grad_new = grad_U(q_new)

            # MPL momentum update (Equation 11):
            # p_{n+1} = alpha^2 p_n - dt/2 (alpha grad U(q_n) + grad U(q_{n+1}))
            p_new = alpha^2 * p_prop - 0.5 * delta_t * (
                alpha * grad + grad_new
            )

            q_prop, p_prop = q_new, p_new
        end

        # Negate momentum for reversibility
        p_prop = -p_prop

        # Proposed Hamiltonian
        kinetic_prop = 0.5 * dot(p_prop, M_inv, p_prop)
        H_prop = U(q_prop) + kinetic_prop

        # Metropolis acceptance: min(1, exp(H_current - H_prop))
        if log(rand()) < H_current - H_prop
            q = q_prop  # Accept
        end

        samples[i, :] = q
    end

    return samples
end
\end{verbatim}

\subsection{Example Usage}

\subsubsection{Python Example}

\begin{verbatim}
# Example: Sampling from a 2D Gaussian with anisotropic covariance
import numpy as np

# Target: N(0, Sigma) with Sigma = [[1.0, 0.8], [0.8, 1.0]]
Sigma = np.array([[1.0, 0.8], [0.8, 1.0]])
Sigma_inv = np.linalg.inv(Sigma)

def U_gaussian(q):
    """Potential for Gaussian N(0, Sigma)."""
    return 0.5 * q.T @ Sigma_inv @ q

def grad_U_gaussian(q):
    """Gradient for Gaussian N(0, Sigma)."""
    return Sigma_inv @ q

# Setup
d = 2
q0 = np.array([0.0, 0.0])

# Option 1: Use identity mass matrix (simplest)
samples_identity = mpl_hmc(U_gaussian, grad_U_gaussian, q0,
                           n_samples=2000, delta_t=0.3, L=5,
                           alpha2=-0.1, beta2=-0.05)

# Option 2: Use covariance as mass matrix (optimal preconditioning)
samples_precond = mpl_hmc(U_gaussian, grad_U_gaussian, q0,
                          n_samples=2000, delta_t=0.3, L=5,
                          alpha2=-0.1, beta2=-0.05,
                          M=Sigma)

# Check results
print("Identity mass matrix:")
print(f"  Sample mean: {np.mean(samples_identity, axis=0)}")
print(f"  Sample cov: \n{np.cov(samples_identity.T)}")

print("\nPreconditioned mass matrix:")
print(f"  Sample mean: {np.mean(samples_precond, axis=0)}")
print(f"  Sample cov: \n{np.cov(samples_precond.T)}")
\end{verbatim}

\subsubsection{Julia Example}

\begin{verbatim}
# Example: Sampling from a 2D Gaussian with anisotropic covariance
using LinearAlgebra, Statistics

# Target: N(0, Sigma) with Sigma = [[1.0, 0.8], [0.8, 1.0]]
Sigma = [1.0 0.8; 0.8 1.0]
Sigma_inv = inv(Sigma)

function U_gaussian(q)
    """Potential for Gaussian N(0, Sigma)."""
    return 0.5 * dot(q, Sigma_inv, q)
end

function grad_U_gaussian(q)
    """Gradient for Gaussian N(0, Sigma)."""
    return Sigma_inv * q
end

# Setup
d = 2
q0 = zeros(d)

# Option 1: Use identity mass matrix (simplest)
samples_identity = mpl_hmc(U_gaussian, grad_U_gaussian, q0,
                           n_samples=2000, delta_t=0.3, L=5,
                           alpha2=-0.1, beta2=-0.05)

# Option 2: Use covariance as mass matrix (optimal preconditioning)
samples_precond = mpl_hmc(U_gaussian, grad_U_gaussian, q0,
                          n_samples=2000, delta_t=0.3, L=5,
                          alpha2=-0.1, beta2=-0.05,
                          M=Sigma)

# Check results
println("Identity mass matrix:")
println("  Sample mean: ", mean(samples_identity, dims=1))
println("  Sample cov: ")
display(cov(samples_identity))

println("\nPreconditioned mass matrix:")
println("  Sample mean: ", mean(samples_precond, dims=1))
println("  Sample cov: ")
display(cov(samples_precond))
\end{verbatim}

\end{document}